\documentclass[11pt]{article}

\usepackage[letterpaper,top=2cm,bottom=2cm,left=3cm,right=3cm,marginparwidth=1.75cm]{geometry}

\usepackage{extpfeil}
\usepackage{microtype}
\usepackage{graphicx}
\usepackage{subfigure}
\usepackage{booktabs} 
\usepackage{enumitem}
\usepackage[colorlinks=true, allcolors=blue]{hyperref}
\usepackage{bm}
\usepackage{ulem}
\usepackage{physics}
\usepackage{algorithm}
\usepackage{algorithmic}

\usepackage{bbm}
\usepackage{amsmath}
\usepackage{amssymb}
\usepackage{mathtools}
\usepackage{amsthm}

\usepackage{braket}

\usepackage{fullpage}

\theoremstyle{plain}
\newtheorem{theorem}{Theorem}[section]

\newtheorem{lemma}[theorem]{Lemma}
\newtheorem{corollary}[theorem]{Corollary}
\newtheorem{statement}[theorem]{Statement}
\theoremstyle{definition}
\newtheorem{definition}[theorem]{Definition}

\newtheorem{task}[theorem]{Task}
\theoremstyle{remark}
\newtheorem{remark}[theorem]{Remark}

\newcommand{\eqn}[1]{(\ref{eqn:#1})}
\newcommand{\eq}[1]{(\ref{eq:#1})}

\newcommand{\thm}[1]{\hyperref[thm:#1]{Theorem~\ref*{thm:#1}}}
\newcommand{\cor}[1]{\hyperref[cor:#1]{Corollary~\ref*{cor:#1}}}
\newcommand{\define}[1]{\hyperref[def:#1]{Definition~\ref*{def:#1}}}
\newcommand{\lem}[1]{\hyperref[lem:#1]{Lemma~\ref*{lem:#1}}}
\newcommand{\prop}[1]{\hyperref[prop:#1]{Proposition~\ref*{prop:#1}}}
\newcommand{\prob}[1]{\hyperref[prob:#1]{Problem~\ref*{prob:#1}}}
\newcommand{\assum}[1]{\hyperref[assum:#1]{Assumption~\ref*{assum:#1}}}
\newcommand{\fig}[1]{\hyperref[fig:#1]{Figure~\ref*{fig:#1}}}
\newcommand{\tab}[1]{\hyperref[tab:#1]{Table~\ref*{tab:#1}}}
\newcommand{\alg}[1]{\hyperref[alg:#1]{Algorithm~\ref*{alg:#1}}}
\renewcommand{\sec}[1]{\hyperref[sec:#1]{Section~\ref*{sec:#1}}}
\newcommand{\de}[1]{\hyperref[def:#1]{Definition~\ref*{def:#1}}}
\newcommand{\append}[1]{\hyperref[append:#1]{Appendix~\ref*{append:#1}}}
\newcommand{\fac}[1]{\hyperref[fac:#1]{Fact~\ref*{fac:#1}}}
\newcommand{\lin}[1]{\hyperref[lin:#1]{Line~\ref*{lin:#1}}}

\newcommand{\sta}[1]{\hyperref[sta:#1]{Statement~\ref*{sta:#1}}}
\newcommand{\tsk}[1]{\hyperref[task:#1]{Task~\ref*{task:#1}}}
\newcommand{\rem}[1]{\hyperref[rem:#1]{Remark~\ref*{rem:#1}}}

\DeclareMathOperator{\CNOT}{CNOT}
\DeclareMathOperator{\MYNOT}{NOT}
\DeclareMathOperator{\CPINOT}{C_{\Pi}NOT}
\DeclareMathOperator{\Amp}{AmpEst}
\DeclareMathOperator{\Fix}{FixSearch}

\newcommand{\E}{\mathbb{E}}

\renewcommand{\d}{\mathrm{d}}

\renewcommand\bra[1]{{\langle{#1}|}}
\makeatletter
\renewcommand\ket[1]{%
  \@ifnextchar\bra{\k@t{#1}\!}{\k@t{#1}}%
}
\newcommand\k@t[1]{{|{#1}\rangle}}
\makeatother


\title{ Quantum Non-Identical Mean Estimation:\\Efficient Algorithms and Fundamental Limits}
\author{\hspace{-4mm}Jiachen Hu\thanks{Peking University} \quad Tongyang Li$^*$ \quad Xinzhao Wang$^*$ \quad Yecheng Xue$^*$ \quad Chenyi Zhang\thanks{Stanford University} \quad Han Zhong$^*$}
\date{}

\begin{document}

\maketitle

\begin{abstract}
We systematically investigate quantum algorithms and lower bounds for mean estimation given query access to non-identically distributed samples. 
On the one hand, we give quantum mean estimators with quadratic quantum speed-up given samples from different bounded or sub-Gaussian random variables. On the other hand, we prove that, in general, it is impossible for any quantum algorithm to achieve quadratic speed-up over the number of classical samples needed to estimate the mean $\mu$, where the samples come from different random variables with mean close to $\mu$. Technically, our quantum algorithms reduce bounded and sub-Gaussian random variables to the Bernoulli case, and use an uncomputation trick to overcome the challenge that direct amplitude estimation does not work with non-identical query access. Our quantum query lower bounds are established by simulating non-identical oracles by parallel oracles, and also by an adversarial method with non-identical oracles. Both results pave the way for proving quantum query lower bounds with non-identical oracles in general, which may be of independent interest.
\end{abstract}


\section{Introduction}
\label{sec:intro}
The problem of estimating the mean $\mu$ of a random variable $X$ given its i.i.d.~samples is a fundamental problem in statistics. For any random variable $X$ with finite variance $\sigma^2$, the median-of-means estimator can estimate $\mu$ to within additive error $\epsilon$ with failure probability $\le\delta$ using $O(\frac{\sigma^2}{\epsilon^2}\log(\frac{1}{\delta}))$ samples. This sample complexity is known to be tight up to a constant multiplicative factor \cite{dagum2000optimal}. 

On the other hand, suppose that a quantum computer has access to a unitary $U$ and its inverse such that $U\ket{\mathbf{0}}$ encodes the random variable $X$ coherently, and each application of $U$ and $U^{\dagger}$ as a black-box oracle can be regarded as a quantum analogue of getting a sample of the random variable $X$. Therefore, the application of $U$ is sometimes called a \textit{quantum experiment} \cite{hamoudi2021quantum}. Under this assumption, a quantum computer can estimate the mean of $X$ with $O(\frac{\sigma}{\epsilon}\log(\frac{1}{\delta}))$ quantum experiments \cite{montanaro2015quantum}, which achieves quadratic speed-up compared to the classical counterpart. Such quantum mean estimators embrace various applications, including approximate counting \cite{montanaro2015quantum,cornelissen2023sublinear}, data stream estimation \cite{hamoudi2019quantum}, derivative pricing in finance \cite{chakrabarti2021threshold}, etc.

In some cases, we are interested in estimating the mean of ``close" random samples, such as random samples with the same mean but different distributions. For example, it is ubiquitous that the measurements of random samples have small systematic errors. In such cases there may be small difference between the means of the actual distributions of the measured random samples, and our algorithms and lower bounds also take this into account. One specific example is to learn a linear system discussed below.
In classical mean estimation, the same method for identical random variables also works for non-identical random variables.
As long as the variance of all random variables is bounded by $\sigma^2$, the median-of-means estimator can be directly adapted to these situations
, yielding an algorithm with the same complexity. However, it is unclear whether similar results hold in the regime of quantum mean estimation.
Therefore, it is a natural question whether we can achieve quantum speed-up for the mean estimation problem with non-identically distributed samples. 

Below we provide a potential application for the quantum mean estimation with non-identically distributed samples.

\paragraph*{Quantum Linear System}

A classical linear dynamical system (LDS) is defined as 
\begin{align}
x_{t+1} = Ax_t + w_t,\ x_t \in \mathbb{R}^n,\ w_t \sim \mathcal{N}(\bm{0}, \sigma_w^2),\ \|A\|_2 < 1, x_1 = \bm{0}
\end{align}
where $x_t$ is the state at time step $t$, and $w_t$ is a random noise at step $t$. A well-known problem in LDS is to do the system identification: estimating the transition matrix $A$ given a series of states starting from step $1$.
The standard approach to estimate transition matrix $A$ in the classical linear system is ordinary least squares (OLS) \cite{dean2020sample, simchowitz2020naive}. 

Consider the quantum counterpart of LDS (for example, when simulating a LDS on a quantum computer): 
\begin{align}
    U_f|\psi_x\rangle |0\rangle & = \int_{\mathbb{R}^n} \sqrt{f_w(w)}|\psi_x\rangle  |\psi_{Ax + w}\rangle \mathrm{d}w,\\
    U_o|\psi_x\rangle |0\rangle & = |\psi_x\rangle |x\rangle,
\end{align}
here $f_w(w)$ is the probability density function (pdf) of $\mathcal{N}(\bm{0}, \sigma_w^2)$, and $|\psi_x\rangle$ is an arbitrary embedding of the raw state $x$. It is natural to ask whether it is possible to estimate $A$ by a quantum algorithm with desired speed-up in quantum linear systems. 
Actually, it is indeed possible 
with a procedure presented in \sec{implication_quantum_ls}. This estimation procedure uses multiple calls to $U_f$ to construct a new oracle $U_{t_0}$ for some step $t_0$, which encodes a probability distribution over the matrix space with $A$ as the mean value. However, the distribution encoded by $U_{t_0}$ is different for different $t_0$, though their means are all equal to $A$. Therefore, this problem presents another motivation of the quantum non-identical mean estimation problem.

In general, the quantum linear system problem described above is a special class of quantum estimation problem in which quantum probability oracles have a time-varying zero-mean noise. The distribution of noise at each step is different but all zero-mean. The number of samples at each step is limited. 

\subsection{Contributions}
In this paper, we systematically analyze the sample complexity of the \textit{quantum non-identical mean estimation problem} (see its formal definition in \tsk{mean-estimation}). Roughly speaking, the quantum algorithm is given $T$ different random variables in turn and can get $m\in \mathbb{N}$ samples from each random variable. Suppose that the mean of every random variable is in $(\mu-c\epsilon,\mu+c\epsilon)$ for some constant $0<c<1$, the quantum non-identical mean estimation problem is to estimate $\mu$ up to additive error $\epsilon$. If all random variables are bounded or sub-Gaussian (see definition in \de{sub-gaussian}), for accuracy $\epsilon$ and $m=\Omega(\log(\frac{1}{\epsilon}))$, we give quantum algorithms solving the quantum non-identical mean estimation problem with quadratic speed-up.
\begin{theorem}[Informal versions of \thm{mean-bounded} and \thm{mean-sub-gaussian}]
    \label{thm:upper-bound-rough}
  For the quantum non-identical mean estimation problem with sufficiently small accuracy $\epsilon$, \begin{itemize}
    \item if all random variables are bounded in $[L,H]$ and  $m=\Omega(\log(\frac{H-L}{\epsilon}))$, there is a quantum algorithm that estimates $\mu$ to within additive error $\epsilon$ if $T = \Omega(\frac{H-L}{\epsilon})$. The algorithm uses $O(\frac{H-L}{\epsilon}\log(\frac{H-L}{\epsilon}))$ samples in total; 
    \item if all random variables are sub-Gaussian with parameter $K$ and $m=\tilde{\Omega}(\log(\frac{K}{\epsilon}))$, there is a quantum algorithm that estimates $\mu$ to within additive error $\epsilon$ if $T = \tilde{\Omega}(\frac{K}{\epsilon})$. The algorithm uses $\tilde{O}(\frac{K}{\epsilon})$ samples in total.
  \end{itemize}
\end{theorem}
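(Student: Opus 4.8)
The plan is to prove both parts by isolating a single ``non-identical Bernoulli'' mean-estimation subroutine and then solving that subroutine with a variant of amplitude estimation that operates across many distinct oracles. For the bounded case, given query access to $U_t$ encoding $X_t\in[L,H]$ with mean $\mu_t$, I would first build from each $U_t$ a Bernoulli oracle $V_t$ whose ``good'' amplitude is $\sqrt{p_t}$ with $p_t=(\mu_t-L)/(H-L)$: apply $U_t$, coherently compute $(x-L)/(H-L)\in[0,1]$ into an ancilla to $O(\log\frac{H-L}{\epsilon})$ bits of precision, and perform a controlled single-qubit rotation that sets a flag qubit to $\ket1$ with probability $(x-L)/(H-L)$. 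The squared amplitude of the flag being $\ket1$ is then $\sum_x f_t(x)(x-L)/(H-L)=p_t$, so that $\mathbb E$ of the flag recovers $(\mu_t-L)/(H-L)$. Because $|\mu_t-\mu|<c\epsilon$, all the $p_t$ lie within $c\epsilon/(H-L)$ of the common value $p_\mu=(\mu-L)/(H-L)$, and it suffices to estimate $p_\mu$ to additive error $\epsilon/(H-L)$.

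The heart of the argument is estimating $p_\mu$ from the $V_t$'s. In the i.i.d.\ setting one would run amplitude estimation, i.e.\ apply the Grover iterate $G=-V S_0 V^\dagger S_\chi$ about $\Theta((H-L)/\epsilon)$ times inside phase estimation; with $T=\Theta((H-L)/\epsilon)$ distinct oracles, each usable only $m=\tilde O(1)$ times, this is impossible directly. My plan is to replace the single iterated operator by a \emph{product} $G_{i_N}\cdots G_{i_1}$ of Grover-type iterates drawn from distinct oracles, where each $G_t=-V_tS_0V_t^\dagger S_\chi$ uses oracle $t$ a constant number of times and $S_0,S_\chi$ are the fixed $\CPINOT$-type reflections about the all-zero ancilla and about the flag subspace. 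Each $G_t$ acts as a rotation by $2\theta_t$ (with $p_t=\sin^2\theta_t$) on the two-dimensional span of the projections of $V_t\ket{\mathbf 0}$ onto the good/bad flag subspaces, so if these subspaces coincided across $t$ the product would implement a clean rotation by $2\sum_t\theta_t\approx 2T\theta_\mu$, from which $\theta_\mu$ (hence $\mu$) is read off by phase estimation.

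The main obstacle, and the reason direct amplitude estimation fails here, is that $V_t\ket{\mathbf 0}$ carries oracle-dependent entangled ``junk'' (the superposition over sample values $x$), so the two-dimensional invariant subspaces of different $G_t$ do \emph{not} coincide and the rotation angles need not add. This is where the uncomputation trick enters: by interleaving an uncomputation that resets the sample-value register to a canonical state between the contributions of consecutive oracles, the good/bad flag subspaces of successive iterates are brought into approximate alignment, so that their angles compose. The technical core of the proof is to quantify the residual misalignment when the $p_t$ are only $O(c\epsilon/(H-L))$-close rather than equal, and to show that these per-step errors accumulate at most additively over the $\sim(H-L)/\epsilon$ iterates; the closeness hypothesis $|\mu_t-\mu|<c\epsilon$ is exactly what keeps the total bias below $\epsilon$. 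To avoid the phase-wraparound ambiguity in reading $\theta_\mu$ off the large accumulated phase $\approx T\theta_\mu$, I would run iterative phase estimation over geometrically growing batches of oracles (sizes $1,2,4,\dots$); this, together with the median-of-estimates boosting needed to drive the overall failure probability below $\delta$ by a union bound across batches, is the source of the $\log\frac{H-L}{\epsilon}$ overhead, yielding $m=\Omega(\log\frac{H-L}{\epsilon})$ uses per oracle and $O(\frac{H-L}{\epsilon}\log\frac{H-L}{\epsilon})$ total samples.

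Finally, the sub-Gaussian case reduces to the bounded one by truncation. For a sub-Gaussian variable with parameter $K$ (\de{sub-gaussian}), I would truncate each $X_t$ to an interval of half-width $R=O(K\sqrt{\log(K/\epsilon)})$ around $\mu$; the sub-Gaussian tail bound ensures the discarded mass shifts the mean by at most $\epsilon$, so the truncated variables are bounded in an interval of length $H-L=\tilde O(K)$ while keeping their means within $O(\epsilon)$ of $\mu$. Applying the bounded-case algorithm then gives $T=\tilde O(K/\epsilon)$ and total sample complexity $\tilde O(K/\epsilon)$ with $m=\tilde\Omega(\log(K/\epsilon))$; the only additional care is that the truncation must be implemented coherently inside the oracle-to-Bernoulli conversion and that its failure contribution is folded into the same union bound. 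Together these establish \thm{mean-bounded} and \thm{mean-sub-gaussian} respectively.
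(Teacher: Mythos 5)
Your overall architecture matches the paper's (controlled rotation to encode the mean in a flag amplitude, an uncomputation step to strip oracle-dependent junk, amplitude estimation run across distinct oracles with additive error accumulation, and truncation for the sub-Gaussian case), but there is a genuine quantitative gap at the core step. You claim each Grover-type iterate $G_t$ uses oracle $t$ only a \emph{constant} number of times, with the uncomputation alone bringing the good/bad subspaces of successive iterates into ``approximate alignment,'' and you attribute the entire $\log\frac{H-L}{\epsilon}$ overhead to iterative phase estimation over growing batches plus median boosting. This cannot work as stated: after the uncomputation $V_i=(U_i^{\dagger}\otimes I)(I\otimes \CNOT)(U_i\otimes I)$, the state $V_i\ket{\mathbf{0}}\ket{0}\ket{0}$ has its aligned component $\ket{\mathbf{0}}\ket{0}\ket{\phi_i}$ with amplitude only $\sqrt{2q_i^2-2q_i+1}\ge 1/\sqrt{2}$, leaving a garbage branch of amplitude $\sqrt{2q_i(1-q_i)}$, which is a \emph{constant} (up to $1/\sqrt{2}$), not $O(\epsilon/(H-L))$. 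A constant per-step misalignment composed over $\Theta(\frac{H-L}{\epsilon})$ iterates destroys the rotation structure entirely; additive accumulation only saves you if each step deviates from the ideal reflection by $o(\epsilon/(H-L))$. This is exactly why the paper spends the per-oracle budget on fixed-point quantum search (\lem{fixsearch}): $O(\log\frac{1}{\epsilon'})$ calls to $V_i$ with $\epsilon'=O(\epsilon^2/(H-L)^2)$ produce $S_i$ with $S_i\ket{\mathbf{0}}\ket{0}\ket{0}$ within $O(\epsilon/(H-L))$ of the aligned state, so that replacing the ideal $S$ by the $S_i$'s inside $\Amp(\cdot,M)$ with $M=O(\frac{H-L}{\epsilon})$ perturbs the circuit by $O(1)$ in operator norm (\lem{approx}). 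The requirement $m=\Omega(\log\frac{H-L}{\epsilon})$ is forced by this amplification step, not by batching or boosting; your constant-uses-per-oracle plan has no mechanism to suppress the constant garbage amplitude.

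Two secondary issues. First, after uncomputation the quantity accessible to amplitude estimation is not $q=(\mu-L)/(H-L)$ but $r=q^2/(2q^2-2q+1)$, the squared amplitude of $\ket{1}$ in $\ket{\phi_i}$; your proposal treats the angles as if $p_t=\sin^2\theta_t$ directly encoded the mean. The inversion $\tilde{q}=\frac{\tilde{p}-\sqrt{\tilde{p}(1-\tilde{p})}}{2\tilde{p}-1}$ has derivative of order $1/(q(1-q))$, which is where the paper's hypothesis $\epsilon=O\bigl(\frac{(\mu-L)(H-\mu)}{H-L}\bigr)$ enters; you would need the analogous condition. Second, in the sub-Gaussian reduction you truncate ``around $\mu$,'' but $\mu$ is unknown: as in \thm{mean-sub-gaussian}, one must first spend $O(1)$ classical samples to obtain $\hat{\mu}$ with $|\hat{\mu}-\mu|\le K/2$, shift by $\hat{\mu}$, and only then truncate at half-width $\Theta(K\sqrt{\log(K/\epsilon)})$ around zero; otherwise the truncation window is not implementable. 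Both of these are repairable along the paper's lines, but the first paragraph's issue is a missing idea, not a detail: without an amplification subroutine consuming $\Theta(\log\frac{H-L}{\epsilon})$ queries per oracle, the angle-composition argument fails.
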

In the worst case, the variance of random variables bounded in $[L,H]$ can be $(H-L)^2/4$, so the optimal classical estimator needs $\Theta((H-L)^2/\epsilon^2)$ samples to estimate $\mu$ up to additive error $\epsilon$. For normal random variables, their sub-Gaussian parameter $K$ equals their standard deviation $\sigma$, so the optimal classical estimator needs $\Theta(K^2/\epsilon^2)$ samples to estimate $\mu$ up to additive error $\epsilon$. Therefore, the quantum estimators in \thm{upper-bound-rough} achieve nearly quadratic speed-up compared to classical estimators.

On the other hand, for $m=1$, we show that any algorithm with relatively small working register have no speed-up compared to classical estimators. 
\begin{theorem}[Informal version of \thm{mean-estimation-lower-0}]\label{thm:mean-estimation-lower-0-informal}
  Suppose all random variables in the quantum non-identical mean estimation problem with $m=1$ have mean bounded by $R$ and variance bounded by $\sigma^2$. Let $\mathcal{A}$ be a quantum query algorithm acting on query register $Q$, working register $W$ such that the number of qubits in $Q$ is larger than that in $W$ by $\Omega(\log(\frac{R}{\epsilon}))$. It requires $T = \Omega(\frac{\sigma^2}{\epsilon^2})$ if there exists an algorithm $\mathcal{A}$ solving this problem. The sample complexity of $\mathcal{A}$ is $T = \Omega(\frac{\sigma^2}{\epsilon^2})$.
\end{theorem}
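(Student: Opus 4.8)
The plan is to reduce the quantum estimation task to a two-point hypothesis testing problem and then show that, under the register constraint, the best quantum distinguisher does no better than a classical sampler. Fix two families of distributions $\{p_t^{(0)}\}_{t}$ and $\{p_t^{(1)}\}_{t}$, each of variance at most $\sigma^2$ and range bounded by $R$, whose means are all equal to $\mu$ (resp.\ all equal to $\mu+2\epsilon$), so that each family is a legitimate instance of the task. Any estimator achieving additive error $\epsilon$ must output a value in disjoint intervals under the two families, hence decides which family generated the oracles; by Le Cam's two-point method it therefore suffices to lower bound the number of oracle calls needed to distinguish the two families with constant advantage.

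The core step is the \emph{parallel-oracle simulation}. Because $m=1$, each oracle $U_t$ is queried only once, so I would specify its action on the $(D-1)$-dimensional complement of $\ket{0}$ (where $D=2^{|Q|}$) by a single scrambling unitary $W_t$ that is shared across the two hypotheses, while $U_t^{(b)}\ket{0}=\sum_x\sqrt{p_t^{(b)}(x)}\ket{x}=:\ket{\phi_t^{(b)}}$ carries the only hypothesis-dependent information. The register gap $|Q|-|W|=\Omega(\log(R/\epsilon))$ is exactly what makes this coupling effective: the garbage produced on any non-$\ket{0}$ input is scrambled across a space of dimension $D\gg 2^{|W|}$, so a working register of only $|W|$ qubits cannot coherently uncompute it, and in expectation over $W_t$ such inputs contribute nothing to the distinguishing advantage. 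After averaging, the state reachable by any adaptive algorithm is a hypothesis-independent unitary applied to the product of clean coherent samples $\bigotimes_{t=1}^{T}\ket{\phi_t^{(b)}}$; this is the sense in which the sequential non-identical oracles are simulated by a single parallel oracle.

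Once the final states have this product form, the distinguishing advantage is controlled by their fidelity, and the hypothesis-independent post-processing unitary leaves fidelity invariant. Fidelity is multiplicative on product states, so
\begin{equation}
\bigl|\langle\Psi^{(0)}|\Psi^{(1)}\rangle\bigr|
\;=\;\prod_{t=1}^{T}\bigl|\langle\phi_t^{(0)}|\phi_t^{(1)}\rangle\bigr|
\;=\;\prod_{t=1}^{T}\sum_{x}\sqrt{p_t^{(0)}(x)\,p_t^{(1)}(x)}.
\end{equation}
The overlap of each coherent encoding is precisely the classical Bhattacharyya coefficient of $p_t^{(0)}$ and $p_t^{(1)}$; for distributions whose means differ by $2\epsilon$ with variance $\sigma^2$ this equals $1-\Theta(\epsilon^2/\sigma^2)$. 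Hence the product fidelity stays bounded away from $0$, and the two families remain statistically indistinguishable, unless $T\cdot\Theta(\epsilon^2/\sigma^2)=\Omega(1)$, i.e.\ $T=\Omega(\sigma^2/\epsilon^2)$, matching the classical sample complexity and proving the absence of speed-up.

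I expect the main obstacle to be the parallel-oracle simulation, not the concluding fidelity computation. The delicate point is that the naive hybrid (triangle-inequality) argument, which bounds the advantage by $\tfrac12\sum_t\|U_t^{(0)}-U_t^{(1)}\|_{\mathrm{op}}$, only yields $T=\Omega(\sigma/\epsilon)$, since the $\ket{0}$-input alone already forces $\|U_t^{(0)}-U_t^{(1)}\|_{\mathrm{op}}=\Theta(\epsilon/\sigma)$; this is exactly the quantum rate and is too weak. The quadratic improvement to the classical rate must instead come from the multiplicativity of fidelity on the product state, and extracting that product structure from a general adaptive algorithm with \emph{arbitrary} oracle inputs is the crux. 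This is where one must (i) randomize the oracle off the $\ket{0}$-input and couple the randomness across hypotheses so that non-$\ket{0}$ probes become uninformative, and (ii) invoke the register gap to close the only loophole, namely an algorithm that probes the oracle on cleverly chosen non-$\ket{0}$ states and then uncomputes the scrambled garbage, since doing so would demand a working register comparable in size to the query register, which the hypothesis forbids.
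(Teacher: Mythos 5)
You have correctly identified the crux (the oracle's behavior is only pinned down on $\ket{\mathbf{0}}$, the adversary should control everything else, and the register gap is what blocks an uncompute-the-garbage counterattack), but the central step of your argument --- the claim that after scrambling the off-$\ket{\mathbf{0}}$ action, any adaptive algorithm's state becomes a hypothesis-independent unitary applied to $\bigotimes_{t}\ket{\phi_t^{(b)}}$ --- is both unproven and, as stated, false. First, the scrambling unitary cannot literally be ``shared across the two hypotheses'': unitarity forces $U_t^{(b)}\ket{v}$ (for $v\perp \mathbf{0}$) to lie in the orthogonal complement of $\ket{\phi_t^{(b)}}$, which is itself hypothesis-dependent, so the coupling you need must be constructed, not assumed. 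Second, and more fundamentally, the model allows inverse and controlled queries (each ``quantum experiment'' may be $O_{X}^{\dagger}$), and an inverse query effectively implements the hypothesis-\emph{dependent} post-selection $\bra{\psi_X}$ on the query register; this cannot be simulated by merely holding a coherent sample $\ket{\psi_X}$ and applying fixed processing (one copy of an unknown state does not let you coherently project another register onto it). This is visible in the paper's own reduction: the simulating circuit $U^{\mathrm{low}}$ in their Theorem on low-depth simulation is a \emph{two-layer} parallel-query circuit (a forward $T$-parallel layer plus an inverse $T$-parallel layer), not a product-state preparation, precisely because of inverse queries. Consequently the paper cannot finish with multiplicativity of fidelity as you do; it must invoke Burchard's $k$-parallel lower bound for approximate counting, $\Omega\bigl(\frac{p(1-p)}{\epsilon^2 k}\bigr)$, a substantive external ingredient your Bhattacharyya computation does not replace. (Were your product-state reduction valid, the conclusion would already follow from Hamoudi's no-speed-up result in the copies-of-$\ket{\psi_X}$ model, which the paper explicitly distinguishes from the unitary-access model as strictly weaker.)

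There are two further unaddressed issues even for forward-only queries. Your ``randomize and average'' step yields a mixed state and needs an exact twirling identity or concentration argument to show the non-$\ket{\mathbf{0}}$ probes contribute \emph{nothing} rather than little; the paper instead constructs the adversarial oracles deterministically and adaptively (its Lemma on controlling the orthogonal component uses a Schmidt decomposition of $\ket{\psi_{\perp}^{(t)}}$ and a dimension count requiring $\dim\mathcal{H}_Q>2\dim\mathcal{H}_W$, which is where the register-gap hypothesis actually enters --- note the formal theorem needs $\tfrac12 n_Q > n_W + 2\log(\tfrac{2R}{\epsilon})+1$, stronger than the additive gap you use). And your fidelity endgame compares post-selected (sub-normalized) branches: the distinguishing advantage is governed by the full final states, so you must also argue the garbage branch contributes nothing to correct outcomes --- the paper handles this via a separate lemma plus an uncomputation trick that compresses the correct outputs into a $2$-dimensional image of $\Pi_c$, a step your binary-hypothesis framing sidesteps but whose analogue (killing the garbage branch under \emph{both} hypotheses simultaneously with one coupled oracle sequence) you would still owe. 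In short: the skeleton is reasonable and the quantitative target $T=\Omega(\sigma^2/\epsilon^2)$ matches, but the simulation claim at the heart of your proof fails for inverse queries and is only sketched for forward ones, and without it (or a substitute like the paper's reduction to two-layer parallel circuits plus the parallel counting bound) the argument does not go through.
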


For general $m\ge 1$, we give another sample complexity lower bound of estimating mean of Bernoulli random variables.
\begin{theorem}[Informal version of \thm{mean-estimation-lower-1}]\label{thm:mean-estimation-lower-1-informal}
    \label{thm:mean-estimation-lower-1-rough}
   Suppose all random variables in the quantum non-identical mean estimation problem with $m\ge 1$ are Bernoulli random variables with mean $\mu \in(0,1)$, and the accuracy $\epsilon$ satisfies $\epsilon \le \mu(1-\mu)$ and $\epsilon=O(\frac{1}{m^2})$. It requires $T = \Omega(\frac{1}{\epsilon m^2})$ if there exists a quantum query algorithm solving this problem. The sample complexity is $mT = \Omega(\frac{1}{\epsilon m})$ in total.
\end{theorem}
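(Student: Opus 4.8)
The plan is to prove the lower bound by reducing the mean-estimation task of \tsk{mean-estimation} to a two-point hypothesis-distinguishing problem and then bounding the distinguishing advantage of any quantum query algorithm by a hybrid/adversary argument adapted to the non-identical oracle model. Any algorithm that outputs $\hat\mu$ with $|\hat\mu-\mu|\le\epsilon$ with constant success probability must, in particular, distinguish an instance whose common Bernoulli mean is $\mu_0$ from one whose common mean is $\mu_1=\mu_0+2\epsilon$, since a correct estimator for one value is necessarily wrong for the other. I would therefore fix $\mu_0,\mu_1$ with $\mu_1-\mu_0=2\epsilon$ (kept in a range with $\epsilon\le\mu(1-\mu)$, so that $\sqrt{\mu(1-\mu)}$ stays bounded below) and let the two instances $\mathcal F_0,\mathcal F_1$ consist of $T$ Bernoulli probability oracles of means $\mu_0$ and $\mu_1$, writing each oracle as $O_p\ket{\mathbf 0}=\sqrt{1-p}\,\ket 0\ket{\xi_0}+\sqrt{p}\,\ket 1\ket{\xi_1}$. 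The essential feature of the model is that the $T$ oracles are \emph{distinct} black boxes, each usable at most $m$ times: the algorithm may run amplitude-estimation-type interference only \emph{within} one block of $m$ queries to a single fixed oracle, and cannot coherently amplify \emph{across} different oracles. This is exactly what separates the problem from ordinary amplitude estimation, and what the restriction on $m$ will exploit.

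For the core estimate, let $\rho_0,\rho_1$ be the algorithm's final (mixed) states under $\mathcal F_0,\mathcal F_1$. I would bound the trace distance by a hybrid argument that switches the oracles one block at a time, obtaining
\[
\|\rho_0-\rho_1\|_1\ \le\ \sum_{i=1}^{T}\Delta_i,
\]
where $\Delta_i$ measures the effect of replacing the $i$-th block's oracle of mean $\mu_0$ by one of mean $\mu_1$. The task then reduces to a single per-oracle bound on $\Delta_i$. Within a block the $m$ queries may interfere constructively, so $\Delta_i$ must be controlled by the sensitivity of an $m$-query block to its parameter, i.e.\ by a quantum-Fisher-information-type factor built from the per-query derivative $\|\partial_p O_p\|=\Theta(1/\sqrt{\mu(1-\mu)})$ and the $m$-fold coherent accumulation inside the block. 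The hypothesis $\epsilon=O(1/m^2)$ keeps each block in the \emph{under-resolved} regime, where $m$ coherent queries alone cannot detect the $\Theta(\epsilon)$ shift, so that $\Delta_i$ is a controllably small multiple of $m^2\epsilon$. Summing the $T$ contributions would give $\|\rho_0-\rho_1\|_1=O(Tm^2\epsilon)$.

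To conclude, if $Tm^2\epsilon=o(1)$ then $\|\rho_0-\rho_1\|_1=o(1)$, so no measurement distinguishes $\mathcal F_0$ from $\mathcal F_1$ with constant advantage, contradicting the existence of an $\epsilon$-accurate estimator. Hence any algorithm needs $T=\Omega(1/(m^2\epsilon))$ distinct oracles, and the total sample complexity is $mT=\Omega(1/(m\epsilon))$, as asserted in \thm{mean-estimation-lower-1}. The reduction from distinguishing back to the estimation guarantee is routine once the trace-distance bound is in place.

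I expect the main obstacle to be the per-oracle analysis together with the bookkeeping across \emph{non-identical} oracles: I must isolate the contribution $\Delta_i$ of a single block in a way that (i) correctly accounts for the constructive interference among its $m$ queries---this is where the block structure, rather than a naive query-by-query hybrid, is forced, and where the dependence on $m$ (as opposed to the raw query count) is pinned down---and (ii) certifies that contributions from distinct oracles combine additively because no cross-oracle amplitude amplification is available. Making this rigorous is precisely the ``adversarial method with non-identical oracles'': one sets up a progress quantity defined per oracle, bounds its change by the above sensitivity, and verifies that the budget of $m$ queries per oracle caps the total change at $O(Tm^2\epsilon)$, uniformly over adaptive algorithms and arbitrary working registers. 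Verifying that the under-resolution afforded by $\epsilon=O(1/m^2)$ indeed yields the clean per-block bound $\Delta_i=O(m^2\epsilon)$ is the technical heart of the argument, and is where the interplay between the within-block Heisenberg behavior and the incoherent across-block accumulation must be handled carefully.
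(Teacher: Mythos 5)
Your high-level intuition---per-block Heisenberg-limited progress of order $m^2\epsilon$, additive accumulation across blocks, and the threshold $Tm^2\epsilon=\Omega(1)$---matches the paper's technique, but two steps in your plan are genuinely broken. First, the fixed two-point reduction cannot work as stated. Nothing in \tsk{mean-estimation} prevents the $T$ oracles within an instance from being the \emph{same} unitary, and your $\mathcal{F}_0,\mathcal{F}_1$ are fixed before the argument begins; but if all boxes in an instance implement one unitary, an algorithm can continue a single Grover/amplitude-estimation iteration seamlessly across blocks (each fresh box supplies the next application of the same $U$ or $U^\dagger$), so the deficit grows like $(mT)^2\epsilon$ rather than $Tm^2\epsilon$. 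Concretely, with $m=1$, $T=\Theta(1/\sqrt{\epsilon})$, $\mu(1-\mu)=\Theta(\epsilon)$, and identical oracles, amplitude estimation distinguishes $\mu_0$ from $\mu_0+2\epsilon$ using one query per box, while your claimed bound $\|\rho_0-\rho_1\|_1=O(Tm^2\epsilon)=O(\sqrt{\epsilon})$ says this is impossible. So ``no cross-oracle amplitude amplification'' is not a feature of the model to be certified; it must be \emph{engineered} by choosing the oracles non-identically and, crucially, \emph{adaptively against the algorithm}. That is the actual content of the paper's \lem{counting-non-identical}: it works with counting oracles $O_x$ (all strings of a fixed Hamming weight give distinct unitaries encoding the same Bernoulli mean), and at the end of each block it averages the progress over the $n-k$ candidate flipped strings $s^{(i)}$ and picks a position $i_0$ whose pairwise progress is at most $\frac{2m(m+1)}{n-k}$, re-choosing the hard pair block by block. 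A pair of instances fixed in advance admits no such averaging step, and your proposal offers no substitute mechanism that resets cross-block coherence.

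Second, your metric bookkeeping loses a quadratic factor exactly where the theorem needs it. The quantity that is $O(m^2\epsilon)$ per block is the \emph{inner-product (fidelity) deficit} between the two pure trajectories, not the trace distance: in the relevant regime $\mu(1-\mu)=\Theta(\epsilon)$, a block of $m$ coherent queries rotates the trajectories apart by angle $\Theta(m\sqrt{\epsilon})$, so the per-block trace distance is $\Theta(m\sqrt{\epsilon})$, which dominates $m^2\epsilon$ because $m\sqrt{\epsilon}=O(1)$ under your hypothesis $\epsilon=O(1/m^2)$. Hence the subadditive bound $\|\rho_0-\rho_1\|_1\le\sum_i\Delta_i$ with an honest $\Delta_i$ yields only $T=\Omega\bigl(\frac{1}{m\sqrt{\epsilon}}\bigr)$, i.e.\ total queries $\Omega(1/\sqrt{\epsilon})$---no better than a naive per-query hybrid---rather than $T=\Omega\bigl(\frac{1}{m^2\epsilon}\bigr)$. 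To get the stated bound you must do what the paper does: maintain a progress function on inner products, $S^{(l)}=\sum_{i\in F_s}\langle\psi_k^{(t+l)}|\psi_{k+1,i}^{(t+l)}\rangle$, show its per-query change within a block is $O(l+1)$ (hence $O(m^2)$ per block, summed over positions), add these deficits \emph{linearly} across all $T$ blocks along a single pair of trajectories, and convert to distinguishability only once at the end via $|\langle\psi_k^{(mT)}|\psi_{k+1}^{(mT)}\rangle|\ge\frac{1}{2}$. A quantum-Fisher-information estimate for the per-block sensitivity does not repair this, because the failure is in how the per-block quantities compose, not in the per-block estimate itself.
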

In \thm{mean-estimation-lower-1-rough}, we take the Bernoulli random variables as a hard instance for the quantum non-identical mean estimation problem. 
Note that if $\epsilon=\Theta(\mu(1-\mu))$, the classical optimal estimator needs $\Theta(\frac{\mu(1-\mu)}{\epsilon^2})=\Theta(\frac{1}{\epsilon})$ samples to estimate the mean of the Bernoulli random variable. Therefore, \thm{mean-estimation-lower-1-rough} shows that there is no quantum speed-up  in this case if $m=O(1)$. However, it does not rule out the possibility of quantum speed-up for estimating the mean of Bernoulli random variables with $\epsilon=o(\mu(1-\mu))$ or $m=\Omega(1)$. For example, if $\mu=\Theta(1),\epsilon=o(1)$, and $m=\Omega(\log(\frac{1}{\epsilon}))$, the quantum estimator for bounded random variables in \thm{upper-bound-rough} can estimate $\mu$ up to error $\epsilon$ using $O(\frac{1}{\epsilon}\log(\frac{1}{\epsilon}))$ samples while classical estimators need $\Omega(\frac{1}{\epsilon^2})$ samples.

In addition, \thm{mean-estimation-lower-0-informal} and \thm{mean-estimation-lower-1-informal} give two different lower bounds when $m=1$. Compared with \thm{mean-estimation-lower-1-informal}, the lower bound in \thm{mean-estimation-lower-0-informal} matches the classical upper bound for general distributions with variance $\sigma^2$, but an additional requirement is that the register $W$ has relatively small dimension.

Finally, we use Bernoulli random variable as an example to summary our systematical investigation on the quantum non-identical mean estimation problem.
\begin{corollary}
For Bernoulli random variable with mean $\mu$ such that $\epsilon = \Theta(\mu(1-\mu))$, \begin{itemize}
    \item if $m = \Omega(\log(1/\epsilon) )$ and $T = \Omega(1/\epsilon)$, there exists an algorithm solving this problem using $O(\frac{1}{\epsilon}\log(1/\epsilon))$ quantum samples, achieving a near-quadratic speed-up;
    \item if $m = \Omega(\log(1/\epsilon))$ and $T = o(1/\epsilon m^2)$, there is no quantum algorithm solving this problem. There is an additional requirement that $\epsilon = O(1/m^2)$;
    \item if $m = O(1)$, there is no quantum speed-up for this problem.
\end{itemize}
\end{corollary}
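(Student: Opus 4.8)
The plan is to obtain all three items as specializations of the general theorems to the Bernoulli instance: a Bernoulli$(\mu)$ variable is bounded in $[L,H]=[0,1]$, so $H-L=1$, and its variance is exactly $\sigma^2=\mu(1-\mu)$. The standing hypothesis $\epsilon=\Theta(\mu(1-\mu))$ fixes the classical median-of-means benchmark at $\Theta(\sigma^2/\epsilon^2)=\Theta(1/\epsilon)$ samples, which is the reference cost against which the ``no speed-up'' statements are measured.

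For the first item I would apply the bounded-variable upper bound \thm{upper-bound-rough} with $H-L=1$. Its requirements $m=\Omega(\log\frac{H-L}{\epsilon})$ and $T=\Omega(\frac{H-L}{\epsilon})$ reduce to $m=\Omega(\log\frac1\epsilon)$ and $T=\Omega(\frac1\epsilon)$, and its total-sample guarantee $O(\frac{H-L}{\epsilon}\log\frac{H-L}{\epsilon})$ collapses to $O(\frac1\epsilon\log\frac1\epsilon)$, matching the item verbatim. The near-quadratic speed-up is then read off against the worst-case classical cost for a variable bounded in $[0,1]$ (variance up to $1/4$), namely $\Omega(1/\epsilon^2)$; concretely this gap is realized in the regime $\mu=\Theta(1)$ with $\epsilon=o(1)$, where $\mu(1-\mu)=\Theta(1)$ and the classical estimator provably needs $\Omega(1/\epsilon^2)$ samples while the quantum algorithm uses only $O(\frac1\epsilon\log\frac1\epsilon)$.

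For the second and third items I would instantiate the Bernoulli lower bound \thm{mean-estimation-lower-1-rough}, whose hypotheses $\epsilon\le\mu(1-\mu)$ and $\epsilon=O(1/m^2)$ are consistent with our setup. It states that any algorithm solving the problem must use $T=\Omega(\frac{1}{\epsilon m^2})$ variables. Contraposing yields the second item directly: when $m=\Omega(\log\frac1\epsilon)$ and $T=o(\frac{1}{\epsilon m^2})$, no quantum algorithm can succeed, and the side condition $\epsilon=O(1/m^2)$ is inherited unchanged. For the third item I set $m=O(1)$, so the bound becomes $T=\Omega(1/\epsilon)$ and the total sample count $mT=\Omega(\frac{1}{\epsilon m})=\Omega(1/\epsilon)$; this matches the classical $\Theta(1/\epsilon)$ up to constant factors, certifying that no asymptotic quantum speed-up exists for constant $m$.

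I expect the only delicate point to be bookkeeping rather than new mathematics: one must keep the classical baseline consistent across the items, noting that the variance-dependent classical cost $\Theta(\mu(1-\mu)/\epsilon^2)=\Theta(1/\epsilon)$ at $\epsilon=\Theta(\mu(1-\mu))$ is what rules out a speed-up in the second and third items, whereas the genuine speed-up of the first item surfaces precisely when $\epsilon$ is pushed below $\mu(1-\mu)$ so that the classical cost rises toward $\Omega(1/\epsilon^2)$. Since each item is a direct corollary of an already-established theorem, no additional technical obstacle arises beyond phrasing these regime conditions so that the three claims remain mutually consistent.
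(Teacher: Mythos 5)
Your proposal is correct and takes essentially the same route as the paper, whose entire proof is the one-line observation that the corollary follows by instantiating \thm{upper-bound-rough} and the lower bound theorems (the paper additionally cites \thm{mean-estimation-lower-0-informal} for the $m=O(1)$ case, but your derivation of that item from \thm{mean-estimation-lower-1-informal} alone, giving $mT=\Omega(1/\epsilon)$ against the classical $\Theta(1/\epsilon)$ baseline, suffices). The only blemish is in your surrounding discussion, not the derivation: the illustrative regime $\mu=\Theta(1)$, $\epsilon=o(1)$ violates the corollary's standing hypothesis $\epsilon=\Theta(\mu(1-\mu))$, though since the paper's own ``near-quadratic speed-up'' claim is likewise measured against the worst-case $\Omega(1/\epsilon^2)$ baseline for $[0,1]$-bounded variables rather than the variance-aware $\Theta(1/\epsilon)$ cost (a tension your final paragraph correctly flags), this does not affect the validity of the three items.
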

\begin{proof}
    This corollary comes directly from \thm{upper-bound-rough},  \thm{mean-estimation-lower-0-informal}, and  \thm{mean-estimation-lower-1-informal}.
\end{proof}

\subsection{Techniques}
\subsubsection{Upper Bound}
From a high-level perspective, our quantum algorithms for non-identical mean estimation
encode the mean to an amplitude, use an uncomputation trick to be introduced below to align different oracles, and then use amplitude estimation to estimate the mean. 

We start with the bounded case. Recall that this paper studies non-identically distributed samples and assumes that we have access to unitaries  $O_{X_1},\ldots,O_{X_T}$, where
\begin{align}
O_{X_i}\ket{\mathbf{0}} = \sum_{x \in E_i} \sqrt{p_i(x)}\ket{\psi_x^{(i)}}\ket{x}.
\end{align}
The mean $\mu = \mu_i = \sum_{x \in E_i}p_i(x)x$ is equal for different $i \in [T]$ (In fact, these $\mu_i$ can be slightly different -- see \rem{delta} for more details), but each $O_{X_i}$ has potentially different garbage states $\ket{\psi_x^{(i)}}$ and each can only be used for very limited times. Suppose that for any $i \in [T]$, the bounded random variable $X_i$ satisfies $X_i\in [L,H]$. If we have sufficient access to any specific $O_{X_i}$, 
we can construct a unitary
\begin{align}
U_i|\mathbf{0}\rangle|0\rangle = \sqrt{q}|\psi_1^{(i)}\rangle|1\rangle + \sqrt{1-q}|\psi_0^{(i)}\rangle|0\rangle
\end{align}
by one call to $O_{X_i}$ and a series of controlled rotations \cite{montanaro2015quantum}, where $q = (\mu -L)/(H-L)$. Consequently, the mean is encoded to an amplitude and direct amplitude estimation provides mean estimation with quadratic quantum speedup. However, in the non-identical case, we do not have sufficient number of calls to any specific $U_i$ to provide quadratic speedup. Furthermore, it is very difficult to use a mixture of different $U_i$ in amplitude estimation \cite{brassard2002quantum}. This is due to the reason that amplitude estimation is based on Grover's algorithm \cite{grover1996fast}, which is essentially rotation in a two-dimensional plane spanned by two specific quantum states related to $U_i$. In our case, different $U_i$ may have different $\ket{\phi_1^{(i)}}$ and $\ket{\phi_0^{(i)}}$, which forms different rotation planes and thus their mixed use is invalid. However, we can use a small number of calls to $U_i$ to construct a unitary such that
\begin{align}
S_i\ket{\mathbf{0}}=\sqrt{1-\epsilon_i}\ket{0}\Bigl(\sqrt{r}\ket{1}+
\sqrt{1-r}\ket{0}\Bigr) + \sqrt{\epsilon_i}\ket{1}\ket{\mathrm{garbage}_i}
\end{align}
with $r$ being a bijective function of $q$ (the concrete value to be shown later) and $\epsilon_i$ being sufficiently small. Since the garbage state is small enough to be handled as an approximation error, $S_i$ can be seen as an approximation of an unitary $S\colon \ket{0}\rightarrow \sqrt{r}\ket{1} + \sqrt{1-r}\ket{0}$. Therefore, We can then use these $S_i$ instead of $S$ to perform amplitude estimation, which provides estimation for $r$ and thus $q$ and $\mu$.

The construction of $S_i$ can be accomplished by an uncomputation trick \cite{cornelissen2023sublinear} and fixed-point search \cite{yoder2014fixed}. Specifically, the uncomputation trick is to perform a unitary 
\begin{align}
V_i = (U_i^{\dagger}\otimes I)(I\otimes \CNOT) (U_i\otimes I)
\end{align}
instead of $U_i$, which enjoys a property that it extracts the value of $q$ separated from a garbage state related to $\ket{\phi_1^{(i)}}$ and $\ket{\phi_0^{(i)}}$. The computing result of $\bra{b}\bra{0}\bra{\mathbf{0}} V_i \ket{\mathbf{0}}\ket{0}\ket{0}$ for $b \in \{0,1\}$ tells that $V_i \ket{\mathbf{0}}\ket{0}\ket{0}$ only has components $\ket{\mathbf{0}}\ket{0}\ket{0}$, $\ket{\mathbf{0}}\ket{0}\ket{1}$, and a garbage state orthogonal to them. Besides, the amplitudes of the first two components are determined by $q$. In particular, it satisfies
 \begin{align}
    V_i\ket{\mathbf{0}}\ket{0}\ket{0} 
    & = \sqrt{2q^2-2q+1}\ket{\mathbf{0}}\ket{0}\biggl(\frac{q}{\sqrt{2q^2-2q+1}}\ket{1}+\frac{1-q}{\sqrt{2q^2-2q+1}}\ket{0}\biggr)\nonumber\\
    &\qquad\qquad+ \sqrt{2q-2q^2}\ket{\mathrm{garbage}_i},
  \end{align}
  where $\ket{\mathrm{garbage}_i}$ is a unit garbage state and $(I\otimes\bra{0}\bra{\mathbf{0}})|\mathrm{garbage}_i\rangle = 0$. Therefore, we can use fixed-point quantum search \cite{yoder2014fixed} to stably amplify the amplitude of the state $\frac{q}{\sqrt{2q^2-2q+1}}\ket{1}+\frac{1-q}{\sqrt{2q^2-2q+1}}\ket{0}$ and thus $S_i$ is constructed with $r = \frac{q^2}{2q^2-2q+1}$. See \thm{mean-bounded} for more details.

For a sub-Gaussian random variable with the absolute value of mean bounded by the sub-Gaussian parameter $K$, the probability of the random variable being more than a threshold related to $K$ is sufficiently small and the mean of a truncated random variable can be a good enough approximation. Therefore, this case can be reduced to the case of bounded random variables. For general sub-Gaussian random variables $X_1,\ldots, X_T$, a constant number of classical experiments provide an estimation $\hat{\mu}$ within $K$-additive error, thus $X_1-\hat{\mu},\ldots, X_T-\hat{\mu}$ are sub-Gaussian random variables with the absolute value of mean bounded by $K$, which has been solved (see \thm{mean-sub-gaussian} for more details).

\subsubsection{Lower Bound}
We prove our two quantum query lower bounds using different techniques: the case $m=1$ (\thm{mean-estimation-lower-0-informal}) is established by simulating non-identical oracles by parallel oracles, and the case $m\geq 1$ (\thm{mean-estimation-lower-1-informal}) is established by an adversarial method with non-identical oracles. 

\paragraph*{Simulating $T$ Non-Identical Oracles by Constant $T$-Parallel Oracles} For the quantum non-identical mean estimation problem with $m=1$, we give a sample complexity lower bound in \thm{mean-estimation-lower-0} by constructing a quantum circuit with constant query depth simulating the original quantum circuit querying non-identical oracles. For any quantum query algorithm $\mathcal{A}$ using the \textit{state preparation oracle} $U_x$ such that the state $U_x\ket{\mathbf{0}}$ encodes the input, suppose that there is a sequence of unitary oracles that maps $\ket{\mathbf{0}}$ to the same state but have different effects acting on other states orthogonal to $\ket{\mathbf{0}}$. Suppose that the working register of $\mathcal{A}$ is relatively small and $\mathcal{A}$ queries $T$ non-identical oracles. In \thm{low-depth}, we prove that for any projection $\Pi$ with small image space, there is a quantum algorithm $\mathcal{A}'$ using two $T$-parallel queries such that \begin{align}
    \|\Pi\mathcal{A}\ket{\mathbf{0}}\|^2=\|(\Pi\otimes \langle\mathbf{0}|)\mathcal{A}'\ket{\mathbf{0}}\ket{\mathbf{0}}\|^2,
\end{align} where a $T$-parallel query is to query $T$ oracles simultaneously. This theorem builds a bridge between quantum algorithms with non-identical state preparation oracles and quantum algorithms with low query depth. If for any input $x$ correct outputs of $\mathcal{A}$ lie in a small space $V_x$, and let $\mathrm{Im}(\Pi)=V_x$, then \thm{low-depth} shows that $\mathcal{A}$ and $\mathcal{A'}$ have the same probability to output a correct answer. 

In \thm{mean-estimation-lower-0}, we prove that any quantum query algorithm $\mathcal{A}$ starting from an efficiently preparable state $\ket{\mathbf{0}}$ can be modified to recover the query register to $\ket{\mathbf{0}}$ with a small overhead. This reduces the dimension of the subspace that the correct outputs of $\mathcal{A}$ lie in, and then we use \thm{low-depth} to give a sample complexity lower bound of the quantum non-identical mean estimation problem with $m=1$ based on the facts that parallelization only brings classical advantage to solving the quantum approximate counting problem \cite{burchard2019lower}, and the quantum approximate counting problem can be reduced to estimating the mean of Bernoulli random variables. 

\paragraph*{Adversarial Method with Non-Identical Oracles} Given a boolean function $f\colon\{0,1\}^n \to \{0,1\}$ and access to a unitary oracle $O_x$ which encodes the information of some $x\in \{0,1\}^n$, the \textit{generalized adversarial method} \cite{hoyer2007negative} gives a tight query complexity lower bound of computing $f(x)$. For any quantum query algorithm $\mathcal{A}$ and $x\in \{0,1\}^n$, let $\ket{\psi_x^{(t)}}$ be the quantum state after $\mathcal{A}$ queries $O_x$ for $t$ times. Suppose $\mathcal{A}$ can compute $f(x)$ with high probability for all $x\in\{0,1\}^n$ using $T$ queries, then we have $\bra{\psi_x^{(T)}}\psi_y^{(T)}\rangle=1-\Omega(1)$ for all $x\in f^{-1}(0)$ and $y\in f^{-1}(1)$. Since $\bra{\psi_x^{(0)}}\psi_y^{(0)}\rangle=1$, to give a lower bound of $T$, it suffices to give an upper bound on the \textit{progress} at time $t$, $\bra{\psi_x^{(t-1)}}\psi_y^{(t-1)}\rangle-\bra{\psi_x^{(t)}}\psi_y^{(t)}\rangle$, for all $x\in f^{-1}(0)$, $y\in f^{-1}(1)$, and $t\in[T]$. The generalized adversarial method assigns a weight $\Gamma_{xy}$ to every pair of $x\in f^{-1}(0)$,\ $y\in f^{-1}(1)$, which proves an upper bound for the weighted progress at time $t$: \begin{align}
    S_{t-1}-S_{t}=\sum_{x\in f^{-1}(0),\ y\in f^{-1}(1)}\Gamma_{xy}(\bra{\psi_x^{(t-1)}}\psi_y^{(t-1)}\rangle-\bra{\psi_x^{(t)}}\psi_y^{(t)}\rangle),
\end{align}
and hence gives a lower bound on $T$. However, they regard $\ket{\psi_x^{(t-1)}}, \ket{\psi_y^{(t-1)}}$ as free variables independent of previous states $\ket{\psi_x^{(t')}}, \ket{\psi_y^{(t')}}$ for $t' < t-1$ while bounding the weighted progress at $t$, so their upper bound of $S_{t-1}-S_{t}$ is independent of $t$. Therefore, if the algorithm queries different oracles at different times, the adversarial method cannot give better lower bound than the case that all oracles are the same. In \lem{counting-non-identical}, we apply the adversarial method on the quantum approximate counting problem, but analyze the progress in another way which utilizes the connection between $\ket{\psi_x^{(t)}}$ and $\ket{\psi_x^{(t')}}$ for different $t$ and $t'$. Specifically, we show that any quantum query algorithm solving the quantum approximate counting problem has progress upper bounded by $O(\frac{t}{n})$ at time $t$, where $n$ is the number of items. The original adversarial method gives an $O(\frac{1}{\sqrt{n}})$ upper bound of the progress at any time $t$. Boyer et al.~\cite{boyer1998tight} gave a similar analysis of quantum search which utilizes the connection between states at different time $t$,  and got a tight lower bound of quantum search with a better constant factor compared to the hybrid argument. Since Reichardt \cite{reichardt2009span} proved that the generalized adversarial method is asymptotically tight, we cannot expect more by exploring connections between states at different time with identical query oracles. However, if each oracle can only be queried a limited number of times, our bound in \lem{counting-non-identical} is better than that obtained by the generalized adversarial method, since the progress bound $O(\frac{t}{n})$ is smaller in the early stages of the algorithm. We use this result to prove a query complexity lower bound of the quantum approximate counting problem with non-identical oracles. Since the quantum approximate counting problem can be reduced to estimating the mean of a Bernoulli random variable, we get a sample complexity lower bound of the quantum non-identical mean estimation problem in \thm{mean-estimation-lower-1-rough} for general $m$.

\subsection{Organization}
The rest of the paper is organized as follows. In \sec{prel} we formally define the input model and the quantum non-identical mean estimation problem, introduce the concept of parallel quantum query algorithms, and introduce quantum subroutines used in our algorithms. In \sec{upper-bound} we give quantum algorithms for estimating the mean of non-identically distributed bounded or sub-Gaussian random variables with quadratic speed-up. In \sec{lower-bound} we give two quantum query lower bounds of the quantum non-identical mean estimation problem based on reductions to low-depth quantum algorithms and the adversarial method with non-identical oracles, respectively.

\section{Preliminaries}
\label{sec:prel}
\subsection{Notations}
We denote \{1,2,\ldots,$n$\} by $[n]$. We use $|\psi\rangle_{A,B}$ to indicate that the state $|\psi\rangle$ is in quantum registers $A$ and $B$. For a quantum register $A$, we denote its number of qubits by $n_A$. For a boolean string $x\in \{0,1\}^n$, we denote its Hamming weight $|\{i\in [n]\mid x_i=1\}|$ by $|x|$. We abbreviate $|0^k\rangle$ as $|\mathbf{0}\rangle$ if $k$ can be inferred from the context.

\subsection{Input Model}

We first recall the definition of random variables and the input model of the classical mean estimation problem. 

\begin{definition}[Random variable]
  A finite random variable $X$ is a function $X\colon\Omega\to E$ for some probability space $(\Omega, p)$, where $\Omega$ is the finite sample space, $p$ is a probability measure on $\Omega$, and $E\subset \mathbb{R}$.
\end{definition}

Next, we assume that the random variable is the output of a quantum process $O_X$, and we can query $O_X$ as an oracle to access $X$. 

\begin{definition}[Quantum random variable]
  \label{def:query-access}
  For any finite random variable $X$, a quantum random variable encoding $X$ is a pair $(\mathcal{H},O_X)$, where $\mathcal{H}$ is a Hilbert space and $O_X$ is a unitary operator on $\mathcal{H}$ that performs the mapping \begin{align}
    O_X|\mathbf{0}\rangle=\sum_{x\in E}\sqrt{p(x)}|\psi_x\rangle|x\rangle
  \end{align}
  for some unknown garbage unit state $|\psi_x\rangle$. 
\end{definition}
Following the notation in \cite{hamoudi2021quantum}, we call each application to $U$ and $U^{\dagger}$ a \textit{quantum experiment}. We use the number of quantum experiments to measure the sample complexity of a quantum query algorithm.
\begin{definition}[Quantum experiment]
  Let $(\mathcal{H},O_X)$ be a quantum random variable. A quantum experiment is the process of applying $O_X$ or its inverse $O_X^{\dagger}$ or their controlled versions to a state in $\mathcal{H}$. 
\end{definition}
Performing a quantum experiment of a quantum random variable $(\mathcal{H},O_X)$ can be regarded as a query to the unitary oracle $O_X$ in the quantum query model, so the sample complexity is equivalent to the query complexity in this context, and we use the two terms interchangeably.

This input model is widely used in previous quantum mean estimation algorithms. The same oracle as defined in \define{query-access} is used in \cite{montanaro2015quantum}. Kothari and O'Donnell~\cite{kothari2023mean} used a similar input model except that they encode the probability distribution and the random variable mapping $\Omega\to \mathbb{R}$ in two oracles separately, and their algorithm also works well with the oracle in \define{query-access}. Hamoudi and Magniez~\cite{hamoudi2019quantum,hamoudi2021quantum} used a more general input model called ``q-random-variable'', where the value of the random variable is implicitly encoded in a register and can be compared with a constant or performed conditional Pauli rotations, and our oracle can be regarded as an instance of the ``q-random-variable''. Since the oracle in \de{query-access} already covers many common cases, we use it instead of the ``q-random-variable'' for simplicity and clarity. In fact, our quantum algorithm in \thm{mean-bounded} can also apply to the general ``q-random-variable''. 

The unitary $O_X$ is a quantum generalization of the process generating a sample of $X$. Bennett \cite{bennett1989time} proved that any classical algorithm using time $T$ and space $S$ can be modified to be a  reversible algorithm using time $O(T)$ and space $O(ST^{\epsilon})$ for any $\epsilon >0$, and hence can be simulated by a quantum circuit. Therefore, for any randomized algorithm $\mathcal{A}$, we can implement the oracle $O_X$ in \define{query-access} encoding the output distribution of $\mathcal{A}$ with a small overhead.

Another natural way for a quantum algorithm to access a random variable is to assume that several copies of $|\psi_X\rangle=\sum_{x\in E} \sqrt{p(x)}|x\rangle$ encoding the information of $X$ are given as the initial quantum state. This model is weaker than the one in \define{query-access} since it does not provide access to a unitary preparing $|\psi_X\rangle$. Hamoudi~\cite{hamoudi2021quantum} demonstrated that there is no quantum speed-up for the original mean estimation problem in this model. Therefore, it can be inferred that there is no quantum speed-up for the mean estimation problem of non-identically distributed random variables in this model, as it is a harder problem. 

Based on the definition of quantum random variable, we define the mean estimation problem of non-identically distributed random variables formally as the following task. 

\begin{task}[Quantum non-identical mean estimation]
  \label{task:mean-estimation}
  Let $(\mathcal{H},O_{X_1}),\ldots, (\mathcal{H},O_{X_T})$ be a sequence of quantum random variables on the same Hilbert space $\mathcal{H}$. Assume there exists $\mu$ and $\delta\in (0,1)$ such that each $\mu_i\coloneqq\mathbb{E}[X_i]$ satisfies $|\mu_i-\mu|\le \delta$ for all $i\in [T]$. Given the \textit{repetition parameter} $m \in \mathbb{N}$ and accuracy $\epsilon$ such that $\delta < c\epsilon$ for some constant $c<1$, the \textit{quantum non-identical mean estimation problem} is to estimate $\mu$ to within additive error $\epsilon$ with probability at least $2/3$ using each $O_{X_i}$ or $O_{X_i}^{\dagger}$ or their controlled versions at most $m$ times. 
\end{task}

The non-identity of quantum random variables means more than the non-identity of classical random variables. Specifically, the difference between two quantum random variables $(\mathcal{H},O_X),(\mathcal{H},O_Y)$ lies in the following three aspects: the results of applying $O_X$ and $O_Y$ to states orthogonal to $\ket{\mathbf{0}}$, the garbage state $|\psi_x\rangle$, and the random variables they encode. In contrast, the difference between two classical random variables is solely determined by the third aspect. Consequently, the quantum mean estimation problem of non-identically distributed random variables is more challenging than its classical counterpart.


\subsection{Parallel Quantum Query Algorithms}

The classical parallel algorithm implies that the algorithm can perform multiple operations simultaneously, which has become increasingly important in recent years with the development of multi-core processors. In the quantum setting, there is an additional reason to consider parallel algorithms: quantum states are fragile and susceptible to disruption by environmental factors, specifically decoherence. By reducing the computation time, parallel quantum algorithms can reduce the probability of decoherence. One example is parallel quantum query algorithms which can make multiple queries simultaneously, where a $p$-parallel query is defined as making $p$ parallel queries simultaneously. Zalka~\cite{zalka1999grover} gave an algorithm that makes $\sqrt{\frac{n}{p}}$ $p$-parallel queries to solve the unstructured search problem with $1$ marked item among $n$ items and showed that its query complexity is optimal. Subsequent works also analyzed the parallel quantum query complexity of quantum search~\cite{grover2004quantum}, quantum walk~\cite{jeffery2017optimal}, quantum counting~\cite{burchard2019lower}, and Hamiltonian simulation~\cite{zhang2024parallel}.

\subsection{Quantum Subroutines}
\begin{lemma}[Approximating unitary operators, Eq.~(4.63) of \cite{nielsen2001quantum}]\label{lem:approx}
Let $||\cdot||$ be the operator 2-norm.
For unitary operators $\{U_i\}_{i = 1}^m$, $\{V_i\}_{i = 1}^m$, it holds that
$$
\|U_mU_{m-1}\ldots U_1- V_mV_{m-1}\ldots V_1\| \leq \sum_{j = 1}^m\|U_j-V_j\|.
$$
\end{lemma}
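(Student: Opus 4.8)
The plan is to prove this by induction on $m$, the number of unitaries in each product, using a telescoping decomposition together with the triangle inequality for the operator norm and the fact that unitaries have operator norm $1$. The base case $m=1$ is immediate since the claimed inequality reads $\|U_1 - V_1\| \le \|U_1 - V_1\|$. The inductive step is where the real work lies, and the key idea is to insert a mixed product that shares the first factor of one product with the remaining factors of the other.

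For the inductive step, assume the bound holds for products of $m-1$ unitaries. I would write $A = U_m U_{m-1}\cdots U_1$ and $B = V_m V_{m-1}\cdots V_1$, and introduce the hybrid term $C = U_m V_{m-1}\cdots V_1$, which agrees with $A$ in its first factor $U_m$ and with $B$ in all the remaining factors. Then $A - B = (A - C) + (C - B)$, and the triangle inequality gives $\|A - B\| \le \|A - C\| + \|C - B\|$. The first term factors as $\|U_m(U_{m-1}\cdots U_1 - V_{m-1}\cdots V_1)\|$, and since $U_m$ is unitary with operator norm $1$, this equals $\|U_{m-1}\cdots U_1 - V_{m-1}\cdots V_1\|$, to which the induction hypothesis applies, yielding a bound of $\sum_{j=1}^{m-1}\|U_j - V_j\|$. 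The second term factors as $\|(U_m - V_m)V_{m-1}\cdots V_1\|$; here $V_{m-1}\cdots V_1$ is unitary with operator norm $1$, so this term equals $\|U_m - V_m\|$. Adding the two bounds gives exactly $\sum_{j=1}^{m}\|U_j - V_j\|$, completing the induction.

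The two facts I lean on are submultiplicativity of the operator norm, $\|XY\| \le \|X\|\,\|Y\|$, combined with the even simpler observation that for a unitary $W$ one has $\|WX\| = \|XW\| = \|X\|$ (unitary invariance of the operator norm), which is what lets the surrounding unitary factors drop out cleanly rather than merely being bounded. I would state this unitary-invariance fact explicitly at the point of use so the factorizations $\|U_m(\cdots)\| = \|\cdots\|$ and $\|(\cdots)V_{m-1}\cdots V_1\| = \|\cdots\|$ are justified without slack.

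I do not anticipate a genuine obstacle here, since this is a standard textbook estimate (indeed it is cited as Eq.~(4.63) of Nielsen and Chuang); the only point requiring a little care is choosing the hybrid term so that each of the two resulting differences is governed either by the induction hypothesis or by a single factor difference, and making sure the unitary factors that are peeled off really do have norm exactly $1$ so the inequality does not accumulate extra multiplicative constants. An alternative to the induction would be a single direct telescoping sum, writing $U_m\cdots U_1 - V_m\cdots V_1 = \sum_{k=1}^m U_m\cdots U_{k+1}(U_k - V_k)V_{k-1}\cdots V_1$ and bounding each summand by $\|U_k - V_k\|$ using unitary invariance on both sides; this is arguably cleaner, but the induction is the most transparent presentation and is what I would write up.
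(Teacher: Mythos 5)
Your proof is correct, and it is essentially the same argument as the paper's: the paper states this lemma without proof, citing Eq.~(4.63) of Nielsen and Chuang, whose proof is exactly this induction with a hybrid product and unitary invariance of the operator norm (your telescoping variant is the same argument unrolled). Nothing further is needed.
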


\begin{lemma}[Amplitude estimation, Theorem 12 of \cite{brassard2002quantum}]\label{lem:amplitude_estimation}
  Given a unitary $U$ satisfying \begin{align}
    U\ket{\mathbf{0}}=\sqrt{p}\ket{\phi_1}\ket{1}+\sqrt{1-p}\ket{\phi_0}\ket{0}
  \end{align}
  for some $p\in[0,1]$, there exists a quantum circuit $C$ on a larger space such that the measurement outcome of $C\ket{\mathbf{0}}\ket{\mathbf{0}}$, $\tilde{p}$,  satisfies \begin{align}
    |\tilde{p}-p| \leq \frac{2 \pi \sqrt{p(1-p)}}{M}+\frac{\pi^2}{M^2}
  \end{align}
  with probability $\frac{8}{\pi^2}$, where $C$ has $M$ calls to the controlled versions of $I-2U\ket{\mathbf{0}}\bra{\mathbf{0}}U^{\dagger}$. Denote the algorithm by $\Amp(U,M)$.
\end{lemma}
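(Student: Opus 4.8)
The plan is to reduce amplitude estimation to quantum phase estimation on a Grover-type iterate, following the route of Brassard, H\o{}yer, Mosca, and Tapp. First I would parametrize the amplitude as $p = \sin^2\theta$ with $\theta\in[0,\pi/2]$ and isolate the two-dimensional invariant subspace $\mathcal{S}=\mathrm{span}\{\ket{\phi_1}\ket{1},\ket{\phi_0}\ket{0}\}$ in which the initial state lives, so that $U\ket{\mathbf{0}}=\sin\theta\,\ket{\phi_1}\ket{1}+\cos\theta\,\ket{\phi_0}\ket{0}$. This is the step that turns an amplitude into a phase that can be read off by standard techniques.

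Next I would define the Grover iterate $Q=(I-2U\ket{\mathbf{0}}\bra{\mathbf{0}}U^{\dagger})(I-2\Pi_1)$, where $\Pi_1$ projects onto the good subspace (last qubit equal to $\ket{1}$), and verify that $\mathcal{S}$ is $Q$-invariant and that, restricted to $\mathcal{S}$, $Q$ acts as a rotation by angle $2\theta$. Consequently $Q$ has the two eigenvalues $e^{\pm 2i\theta}$ with eigenvectors $\ket{\psi_\pm}\in\mathcal{S}$, and since $U\ket{\mathbf{0}}$ is a real vector in the rotation plane it decomposes into an equal-magnitude superposition $\tfrac{1}{\sqrt{2}}(c_+\ket{\psi_+}+c_-\ket{\psi_-})$ of the two eigenvectors, with $|c_\pm|=1$. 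This balanced overlap is exactly what makes phase estimation applicable directly from the state $U\ket{\mathbf{0}}$, without needing to prepare an eigenvector.

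Then I would run quantum phase estimation on $Q$ with the target register initialized to $U\ket{\mathbf{0}}$ and $M$ controlled applications of $Q$, each costing one call to the reflection $I-2U\ket{\mathbf{0}}\bra{\mathbf{0}}U^{\dagger}$. After the inverse QFT and measurement of the phase register, the outcome $y$ yields $\tilde\theta=\pi y/M$, and the standard phase-estimation tail bound gives $|\tilde\theta-\theta|\le \pi/M$ with probability at least $8/\pi^2$. Finally, I would set $\tilde p=\sin^2\tilde\theta$ and Taylor-expand $\sin^2$ about $\theta$, using $\tfrac{d}{d\theta}\sin^2\theta=\sin 2\theta$ and $|\tfrac{d^2}{d\theta^2}\sin^2\theta|\le 2$, to obtain $|\tilde p-p|\le |\tilde\theta-\theta|\,|\sin 2\theta|+|\tilde\theta-\theta|^2$; substituting $|\sin 2\theta|=2\sqrt{p(1-p)}$ and $|\tilde\theta-\theta|\le\pi/M$ produces the claimed bound $\tfrac{2\pi\sqrt{p(1-p)}}{M}+\tfrac{\pi^2}{M^2}$.

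The main obstacle is the phase-estimation success-probability analysis: one must lower-bound the amplitude that the QFT measurement lands on the integer closest to $M\theta/\pi$, which reduces to estimating a Dirichlet-kernel sum $\tfrac{1}{M}\sum_k e^{2\pi i k(\cdots)}$ and showing its squared magnitude is at least $8/\pi^2\approx 0.81$. A secondary subtlety is that the two eigenphases $\pm 2\theta$ are generally not exact multiples of $2\pi/M$, so both branches must be treated symmetrically; however, since $\sin^2\tilde\theta=\sin^2(-\tilde\theta)$, measuring either eigenphase yields the same estimate of $p$, so the sign ambiguity is harmless and does not degrade the bound.
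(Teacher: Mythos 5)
The paper offers no proof of this lemma---it is quoted directly as Theorem~12 of Brassard--H\o{}yer--Mosca--Tapp---and your proposal correctly reconstructs precisely that cited proof: phase estimation on the Grover iterate with $M$ controlled reflections, the balanced decomposition of $U\ket{\mathbf{0}}$ into the two rotation eigenvectors (which is what lets one skip eigenvector preparation), the $8/\pi^2$ Dirichlet-kernel success bound, and the error translation $|\sin^2\tilde\theta-\sin^2\theta|\le 2|\tilde\theta-\theta|\sqrt{p(1-p)}+|\tilde\theta-\theta|^2$, which is exactly BHMT's Lemma~7. One cosmetic caveat: with your convention $Q=(I-2U\ket{\mathbf{0}}\bra{\mathbf{0}}U^{\dagger})(I-2\Pi_1)$ acts on the invariant plane as a rotation by $\pi+2\theta$ rather than $2\theta$ (BHMT insert a global minus sign), but this exactly known phase offset is harmless---subtract $\pi$ from the estimated phase or run phase estimation on $-Q$---and the stated bound is unaffected.
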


\begin{lemma}[Fixed-point quantum search, \cite{yoder2014fixed}]{\label{lem:fixsearch}}
Let $A$ be a unitary and $\Pi$ be an orthogonal projector such that $\Pi A\ket{0} = \lambda \ket{\phi}$, where $\lambda \in \mathbb{R}$ and $|\phi\rangle$ is a normalized quantum state. There exists a quantum circuit $S_L = \Fix(A, \Pi, \epsilon)$ such that $||\ket{\phi} - S_L\ket{0} || \leq \epsilon$, consisting of $O(\log(1/\epsilon)/\lambda)$ queries to $A$, $A^{\dagger}$, and $\CPINOT$. Here $\CPINOT$ is the $\Pi$-controlled $\MYNOT$ operator
$$
\CPINOT = X\otimes \Pi + I \otimes (I-\Pi),
$$
where $X$ is the Pauli-X matrix.
\end{lemma}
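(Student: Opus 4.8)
The plan is to follow the generalized-Grover / quantum-signal-processing construction of \cite{yoder2014fixed}. First I would isolate the two-dimensional invariant subspace on which everything happens. Write $\ket{s}=A\ket{0}$ and $\ket{t}=\ket{\phi}$, so that $\langle t|s\rangle=\lambda$ (taking $\lambda\ge 0$ without loss of generality) and $\ket{s}=\lambda\ket{t}+\sqrt{1-\lambda^2}\,\ket{t^\perp}$ with $\ket{t^\perp}$ a unit vector. Because $\Pi A\ket{0}=\lambda\ket{\phi}$ and $\Pi^2=\Pi$, one verifies $\Pi\ket{t}=\ket{t}$ and $\Pi\ket{t^\perp}=0$, so on the plane $\mathcal V=\mathrm{span}\{\ket{t},\ket{t^\perp}\}$ the projector $\Pi$ acts exactly as the rank-one projector $\ket{t}\langle t|$. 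This reduces the task to the familiar Grover geometry of rotating $\ket{s}$ towards $\ket{t}$ inside $\mathcal V$.

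Next I would implement the generalized phase gates and assemble the iterate. With one ancilla qubit, a controlled single-qubit $Z$-rotation together with $\CPINOT$ realize the selective phase $S_\Pi(\beta)=I+(e^{i\beta}-1)\Pi$; a selective phase about $\ket{0}$, $S_0(\alpha)=I+(e^{i\alpha}-1)\ket{0}\langle 0|$, is a controlled phase requiring no queries, and conjugating it by $A$ yields the selective phase about $\ket{s}$, since $A S_0(\alpha)A^\dagger=I+(e^{i\alpha}-1)\ket{s}\langle s|$. The generalized Grover iterate $G(\alpha,\beta)=A\,S_0(\alpha)\,A^\dagger\,S_\Pi(\beta)$ preserves $\mathcal V$ and acts there as an element of $SU(2)$. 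The circuit $S_L=\Fix(A,\Pi,\epsilon)$ is then a product of $\ell=(L-1)/2$ such iterates followed by a final $A$, applied to $\ket{0}$, with $L$ odd; each iterate costs $O(1)$ uses of $A$, $A^\dagger$, and $\CPINOT$.

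The core step is the reduction to a polynomial-approximation problem. By the standard $SU(2)$ signal-processing calculation, writing $\lambda=\sin\theta$, the overlap $\langle t|S_L|0\rangle$ is a degree-$L$ polynomial in $\cos\theta$ whose coefficients are controlled entirely by the phase schedule $\{\alpha_j,\beta_j\}$. I would then invoke the explicit symmetric schedule of \cite{yoder2014fixed}, $\alpha_j=-\beta_{\ell+1-j}=2\cot^{-1}\!\left(\tan(2\pi j/L)\sqrt{1-\gamma^2}\right)$ with $\gamma^{-1}=\mathcal T_{1/L}(1/\delta)$, for which the residual non-target weight satisfies $\|(I-\Pi)S_L\ket{0}\|\le\delta$ uniformly for every $\lambda$ above the design threshold. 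Setting $\delta=\epsilon$, it is the composed-Chebyshev structure $\mathcal T_L\circ\mathcal T_{1/L}$ that produces the fixed-point, non-overshooting behaviour, in contrast to ordinary amplitude amplification where the success probability oscillates with the iteration count.

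Finally I would read off the query count from the growth of Chebyshev polynomials: since $\mathcal T_L(x)\sim\tfrac12 e^{L\,\mathrm{arccosh}(x)}$ for $x>1$, driving the suppression factor below $\epsilon$ over the range $\lambda\in[\lambda_0,1]$ forces $L=O\!\left(\tfrac1{\lambda_0}\log(1/\epsilon)\right)$, and taking $\lambda_0=\lambda$ gives $\|\ket{\phi}-S_L\ket{0}\|\le\epsilon$ with $L=O(\log(1/\epsilon)/\lambda)$ iterates, hence $O(\log(1/\epsilon)/\lambda)$ calls to $A$, $A^\dagger$, and $\CPINOT$. The main obstacle is the third step: verifying that this particular phase schedule makes the target amplitude equal to a single prescribed polynomial, and that this polynomial stays $\ge 1-\epsilon^2$ across the whole interval $[\lambda,1]$ rather than at isolated points — establishing the uniform fixed-point suppression together with the matching composed-Chebyshev degree bound is the crux, while the subspace reduction and the gate implementations are routine.
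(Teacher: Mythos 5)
The paper does not actually prove this lemma---it is imported as a black box from \cite{yoder2014fixed}---and your sketch is a faithful reconstruction of that reference's own proof: the reduction to the two-dimensional invariant subspace (your verification that $\Pi\ket{t}=\ket{t}$ and $\Pi\ket{t^\perp}=0$ is correct), the generalized iterate $G(\alpha,\beta)=A\,S_0(\alpha)\,A^{\dagger}S_\Pi(\beta)$ with $S_\Pi$ built from two $\CPINOT$ calls and an ancilla phase, the symmetric phase schedule with $\gamma^{-1}=\mathcal{T}_{1/L}(1/\delta)$, and the composed-Chebyshev bound giving $L=O(\log(1/\epsilon)/\lambda)$ are exactly the Yoder--Low--Chuang argument. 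Your outline is sound and correctly isolates the real crux (uniform suppression of the off-target amplitude over the whole range $[\lambda,1]$ via the $\mathcal{T}_L\circ\mathcal{T}_{1/L}$ structure), so it agrees with the paper's (cited) proof rather than offering an alternative route.
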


\section{Upper Bound}\label{sec:upper-bound}
In this section, we first introduce an algorithm that solves \tsk{mean-estimation} for bounded random variables, and then generalize it to sub-Gaussian variables.
\subsection{Mean Estimation of Bounded Random Variables}
In this subsection, we introduce an algorithm that solves \tsk{mean-estimation} with quadratic speed-up given the condition that random variables $X_1,\ldots,X_T$ are bounded in $[L, H]$. According to the task, for each $i \in [T]$, oracle $O_{X_i}$ can be used at most $m$ times. 

For clarity, we describe the algorithm with two phases. Let \begin{align*}
\qquad \ket{\phi_i} = \frac{q_i}{\sqrt{2q_i^2-2q_i+1}} \ket{1}+\frac{1-q_i}{\sqrt{2q_i^2-2q_i+1}}\ket{0}.   
\end{align*} 
Here $q_i = \frac{\mu_i - L}{H-L} \in [0,1]$. For each $i \in [T]$, We will construct a quantum circuit $S_i$ that satisfies $S_i\ket{\mathbf{0}} \approx \ket{\phi_i}$ with $m$ calls to $O_{X_i}$. Then we will prove that performing amplitude estimation with these $S_i$ gives an $\epsilon$-additive estimation of $\mu$. 

\begin{theorem}
\label{thm:mean-bounded}
Assume that all random variables $X_1,\ldots, X_T$ in \tsk{mean-estimation} are bounded in $[L,H]$. Let $m$, $\epsilon$, $\delta$ in \alg{bounded} satisfy $m =\Omega(\log(\frac{H-L}{\epsilon}))$, $\epsilon=O\bigl(\frac{(\mu-L)(H-\mu)}{H-L}\bigr)
$, and $\delta < \epsilon/2$. \alg{bounded} solves this task if $T = \Omega(\frac{H-L}{\epsilon})$, using $O(\frac{H-L}{\epsilon}\log(\frac{H-L}{\epsilon}))$ quantum experiments in total.

\end{theorem}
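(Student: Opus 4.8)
The plan is to prove \thm{mean-bounded} in the two phases sketched before the statement, supplying the details of the per-oracle construction of $S_i$ and then the analysis of amplitude estimation driven by the non-identical family $\{S_i\}_{i\in[T]}$. For Phase~1 I would start from a single $O_{X_i}$: one call plus a controlled rotation produces $U_i$ encoding $q_i=(\mu_i-L)/(H-L)$ as an amplitude, and then the uncomputation operator $V_i=(U_i^{\dagger}\otimes I)(I\otimes\CNOT)(U_i\otimes I)$ is applied. Verifying the stated identity shows that $V_i\ket{\mathbf{0}}\ket{0}\ket{0}$ equals $\sqrt{2q_i^2-2q_i+1}\,\ket{\mathbf{0}}\ket{0}\ket{\phi_i}$ plus a garbage vector orthogonal to $\ket{\mathbf{0}}\ket{0}\otimes\mathrm{span}\{\ket{0},\ket{1}\}$. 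The decisive point is that $2q_i^2-2q_i+1=2(q_i-\tfrac12)^2+\tfrac12\ge\tfrac12$ is bounded away from $0$, so \lem{fixsearch} with projector $\Pi=\ket{\mathbf{0}}\bra{\mathbf{0}}\otimes\ket{0}\bra{0}\otimes I$ yields $S_i=\Fix(V_i,\Pi,\epsilon_0)$ with $\|S_i\ket{\mathbf{0}}-\ket{\phi_i}\|\le\epsilon_0$ using only $O(\log(1/\epsilon_0))$ queries to $V_i$, hence $O(\log(1/\epsilon_0))$ calls to $O_{X_i}$; choosing $\epsilon_0$ polynomially small keeps this within the budget $m=\Omega(\log((H-L)/\epsilon))$.

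For Phase~2 the structural observation is that the amplitude-estimation circuit of \lem{amplitude_estimation} queries its oracle only through the reflection $I-2S_i\ket{\mathbf{0}}\bra{\mathbf{0}}S_i^{\dagger}$, which depends on $S_i$ solely through the prepared state $S_i\ket{\mathbf{0}}$; the differing internal garbage of the $V_i$ never enters. I would therefore run $\Amp$ with $M=\Theta((H-L)/\epsilon)$ Grover steps, assigning a fresh index $i$ to each of the $O(M)$ reflections, which is exactly what forces $T=\Omega((H-L)/\epsilon)$. Because every $S_i\ket{\mathbf{0}}$ lies, up to garbage of weight $O(\epsilon_0)$, in the common two-dimensional plane $\ket{0}\otimes\mathrm{span}\{\ket{0},\ket{1}\}$ created by the uncomputation trick, each individual Grover operator is a rotation of that same plane by $2\theta_i$ with $\sin^2\theta_i=r_i=q_i^2/(2q_i^2-2q_i+1)$. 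Rotations of a fixed plane compose additively, so (exactly, when $\epsilon_0=0$) the product rotates by $2\sum_j\theta_{i_j}$, and phase estimation reads off the normalized phase, i.e.\ an average of the sampled $r_{i_j}$; inverting the exact map $r\mapsto q\mapsto\mu$ converts this into an estimate of a mean $\bar\mu$ that is a convex average of the $\mu_{i_j}$, whence $|\bar\mu-\mu|\le\delta<\epsilon/2$.

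I would then assemble the error budget. A short computation gives $\sqrt{r(1-r)}=q(1-q)/(2q^2-2q+1)$ and $\mathrm{d}r/\mathrm{d}q=2q(1-q)/(2q^2-2q+1)^2$, so the leading \lem{amplitude_estimation} term $2\pi\sqrt{r(1-r)}/M$ propagates to a mean error of order $(H-L)(2q^2-2q+1)/M=\Theta((H-L)/M)$, which is at most $\epsilon/4$ once $M=\Theta((H-L)/\epsilon)$; the second-order term $\pi^2/M^2$ propagates to $O\!\bigl(\epsilon^2(2q^2-2q+1)^2/((H-L)q(1-q))\bigr)$, and this is at most $\epsilon/4$ precisely under the hypothesis $\epsilon=O\!\bigl((\mu-L)(H-\mu)/(H-L)\bigr)=O((H-L)q(1-q))$, which simultaneously certifies that the linearization of $r\mapsto\mu$ is valid. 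The $O(\epsilon_0)$ garbage slips out of the common plane by $O(M\epsilon_0)$ in total over the $O(M)$ reflections by \lem{approx}, made negligible by taking $\epsilon_0=\mathrm{poly}(\epsilon/(H-L))$, which keeps $\log(1/\epsilon_0)=O(\log((H-L)/\epsilon))$ within the repetition budget. Combining the three contributions, the output is within $\epsilon$ with probability at least $8/\pi^2>2/3$, and the cost is $T$ oracles each used $O(\log((H-L)/\epsilon))$ times, i.e.\ $O\!\bigl(\tfrac{H-L}{\epsilon}\log\tfrac{H-L}{\epsilon}\bigr)$ quantum experiments.

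The main obstacle I expect is Phase~2: justifying that amplitude estimation stays correct when each Grover reflection is supplied by a different, only approximately prepared oracle. The load-bearing facts are that the reflection depends on $S_i$ only through $S_i\ket{\mathbf{0}}$, so the non-identical internal structure is invisible, and that the uncomputation trick places all the $S_i\ket{\mathbf{0}}$ in one shared plane, so varying means produce coplanar rotations whose composition encodes an average phase rather than an uncontrolled $O(M\cdot\delta/(H-L))$ drift. Making this rigorous means verifying that the shared-plane picture survives the $O(\epsilon_0)$ garbage with only $O(M\epsilon_0)$ total slippage, and that the inverse map $r\mapsto\mu$ is well-conditioned exactly in the regime fixed by the smallness hypothesis on $\epsilon$; these two estimates are the technical crux, and everything else reduces to the elementary error propagation above.
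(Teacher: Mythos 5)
Your Phase~1 is exactly the paper's proof: the same construction $U_i \to V_i \to S_i$, the same verification that $V_i\ket{\mathbf{0}}\ket{0}\ket{0}$ projects onto $\ket{\mathbf{0}}\ket{0}\ket{\phi_i}$ with amplitude $\sqrt{2q_i^2-2q_i+1}\ge 1/\sqrt{2}$, and the same invocation of \lem{fixsearch} with $O(\log((H-L)/\epsilon))$ calls per oracle. Your error budget (the propagation of the $2\pi\sqrt{r(1-r)}/M$ term to a $\Theta((H-L)/M)$ error in $\mu$, and the identification of the hypothesis $\epsilon = O((\mu-L)(H-\mu)/(H-L))$ as exactly what tames the $\pi^2/M^2$ term and the nonlinear inversion $\tilde p\mapsto\tilde q\mapsto\tilde\mu$) also matches the paper's appendix computation. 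The one place you genuinely depart is how the $\delta$-spread of the means enters Phase~2. The paper folds it into the \emph{same} operator-norm perturbation as the garbage: it fixes a single ideal unitary $S$ encoding $q=(\mu-L)/(H-L)$, bounds $\|\ket{\phi}\bra{\phi}-\ket{\phi_i}\bra{\phi_i}\| = O(\delta/(H-L))$ via the derivative bounds $|\frac{\d}{\d x}\frac{x}{\sqrt{2x^2-2x+1}}|\le 2\sqrt{2}$, and applies \lem{approx} across all $M$ reflections to get $\|C-C'\| = O(M(\sqrt{\epsilon'}+\delta/(H-L))) = O(1)$ with an adjustable constant, after which \lem{amplitude_estimation} applies verbatim to $C$ and the success probability degrades from $8/\pi^2$ by a controllably small amount.

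This is where your proposal has a gap. You explicitly reject the $O(M\cdot\delta/(H-L))$ drift bound as ``uncontrolled'' and replace it with the claim that coplanar rotations compose to an average phase which phase estimation ``reads off,'' yielding a convex average $\bar\mu$ of the $\mu_{i_j}$. But that drift is not uncontrolled: with $\delta<\epsilon/2$ and $M=\Theta((H-L)/\epsilon)$ it is $O(1)$ with a constant you can make as small as you like by adjusting constants in $M$ and $\epsilon'$ --- this is precisely how the paper closes the argument. Meanwhile your replacement, as stated, does not go through: \lem{amplitude_estimation} is proved for a fixed unitary, and in phase estimation the $k$-th control qubit controls a block of $2^k$ Grover steps drawing on a \emph{disjoint} set of indices $i$, so the kicked-back phases are not exact multiples $2^k\omega$ of a common $\omega$ and the QFT readout is not literally an average of the sampled $r_{i_j}$; moreover $r\mapsto\mu$ is nonlinear, so even an exact average phase would not produce a convex average of the $\mu_{i_j}$. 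Both defects are of size $O(\delta/(H-L))$ per step, so the honest way to finish your route is to bound them as perturbations --- which collapses back to the paper's \lem{approx} argument that you set aside. In short: your proof is correct once the coplanar-averaging claim is replaced by (or rigorously reduced to) the operator-norm comparison against the single ideal $S$; as written, that claim is the unproven step, and the simpler bound you dismissed is the missing fix.
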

\begin{algorithm}[tb]
\caption{Mean Estimation of Bounded Random Variables}\label{alg:bounded}
\begin{algorithmic}[1]
\STATE{\bfseries Input:} sequence of random variable oracle $\{O_{X_i}\}_{i = 1}^T$, accuracy $\epsilon$, mean difference $\delta$, repetition parameter $m$, lower bound $L$, upper bound $H$ 
\STATE{\bfseries Output:} mean estimation $\tilde{\mu}$ \\
// \ Construct quantum circuit $S_i$
\STATE Construct unitary $U_i$\\
\begin{equation*}
\begin{aligned}
    U_i:|\mathbf{0}\rangle|0\rangle & \xlongrightarrow{O_{X_i}\otimes I} \sum_{x\in E_i}\sqrt{p_i(x)}|\psi_x^{(i)}\rangle|x\rangle|0\rangle \\
    & \xlongrightarrow{\text{controlled rotation}} \sum_{x\in E_i}\sqrt{p_i(x)}|\psi_x^{(i)}\rangle|x\rangle\left(\sqrt{\frac{x-L}{H-L}}|1\rangle + \sqrt{\frac{H-x}{H-L}}|0\rangle\right) \\
\end{aligned}
\end{equation*}\label{lin:U_i}
\STATE Let $V_i = (U_i^{\dagger}\otimes I)(I\otimes \CNOT) (U_i\otimes I)$\label{lin:V_i}
\STATE Let $S_i = \Fix(V_i, \ket{\mathbf{0}}\ket{0}\bra{0}\bra{\mathbf{0}}\otimes I, \epsilon' = O(\epsilon^2/(H-L)^2) )$ \\
// \ Mean estimation using $S_i$ \label{lin:S_i}
\STATE Let $\tilde{p}$ be the output of $\Amp(S, M = O(\frac{H-L}{\epsilon}))$, where $S$ is arbitrarily replaced by $S_1,\ldots,S_T$.
\STATE Output $\tilde{\mu} =  \frac{\tilde{p}-\sqrt{\tilde{p}(1-\tilde{p})}}{2\tilde{p}-1}(H-L)+L$ 
\end{algorithmic}
\end{algorithm}

\begin{proof}
We first prove that $S_i$ in \lin{S_i} satisfies $S_i\ket{\mathbf{0}}\ket{0}\ket{0}=\sqrt{1-\epsilon_i}\ket{\mathbf{0}}\ket{0}\ket{\phi_i} + \sqrt{\epsilon_i}\ket{\mathrm{garbage}_i}$.  
According to the construction of $U_i$ in \lin{U_i} of \alg{bounded}, we have \begin{align}
    U_i|\mathbf{0}\rangle|0\rangle = \sqrt{q_i}|\psi_1^{(i)}\rangle|1\rangle + \sqrt{1-q_i}|\psi_0^{(i)}\rangle|0\rangle
  \end{align}
for some unit states $|\psi_1^{(i)}\rangle$ and $|\psi_0^{(i)}\rangle$. Consider the $V_i$ in \lin{V_i} where we append a qubit to the register. For any $b\in \{0,1\}$ we have 
\begin{align}
      \bra{b}\bra{0}\bra{\mathbf{0}} V_i \ket{\mathbf{0}}\ket{0}\ket{0} 
      & = ((U_i \otimes I) \ket{\mathbf{0}}\ket{0}\ket{b})^{\dagger}(I\otimes \CNOT) (U_i\otimes I)\ket{\mathbf{0}}\ket{0}\ket{0}\nonumber \\
      & = \Bigl(\sqrt{q_i}\bra{b}\bra{1}\bra{\psi_1^{(i)}} + \sqrt{1-q_i}\bra{b}\bra{0}\bra{\psi_0^{(i)}}\Bigr)\Bigl(\sqrt{q_i}\ket{\psi_1^{(i)}}\ket{1}\ket{1}+\sqrt{1-q_i}\ket{\psi_0^{(i)}}\ket{0}\ket{0}\Bigr)\nonumber \\
      & = \begin{cases} q_i & b = 1 \\ 1-q_i & b = 0,\end{cases}
  \end{align}
  which implies that
  \begin{align}
    V_i\ket{\mathbf{0}}\ket{0}\ket{0} 
    & = \sqrt{2q_i^2-2q_i+1}\ket{\mathbf{0}}\ket{0}\biggl(\frac{q_i}{\sqrt{2q_i^2-2q_i+1}}\ket{1}+\frac{1-q_i}{\sqrt{2q_i^2-2q_i+1}}\ket{0}\biggr)\nonumber\\
    &\qquad\qquad\qquad+ \sqrt{2q_i-2q_i^2}\ket{\mathrm{garbage}_i},
  \end{align}
  where $\ket{\mathrm{garbage}_i}$ is a unit garbage state and $(I\otimes\bra{0}\bra{\mathbf{0}})|\mathrm{garbage}_i\rangle = 0$. Moreover, we define 
  \begin{align}
      \ket{\phi_i} =\frac{q_i}{\sqrt{2q_i^2-2q_i+1}} \ket{1}+\frac{1-q_i}{\sqrt{2q_i^2-2q_i+1}}\ket{0}, \qquad \ket{s_i} = V_i\ket{\mathbf{0}}\ket{0}\ket{0}.
  \end{align}
  Under these notations, we have 
  \begin{align}
\left(\ket{\mathbf{0}}\ket{0}\bra{0}\bra{\mathbf{0}}\otimes I\right)V_i\ket{\mathbf{0}}\ket{0}\ket{0} =  \sqrt{2q_i^2-2q_i+1} \ket{\mathbf{0}}\ket{0}\ket{\phi_i} .
\end{align}
Together with \lem{fixsearch} and the fact that $\sqrt{2q_i^2-2q_i+1} \geq \frac{1}{\sqrt{2}}$, we know that $S_i$ in \lin{S_i} satisfies
\begin{align}
  \label{eq:Si}
  S_i\ket{\mathbf{0}}\ket{0}\ket{0}=\sqrt{1-\epsilon_i}\ket{\mathbf{0}}\ket{0}\ket{\phi_i} + \sqrt{\epsilon_i}\ket{\mathrm{garbage}_i},
\end{align}
where $\epsilon_i \leq \epsilon'$ and $S_i$ contains $O\big(\log\frac{1}{\epsilon'}\big)=O\big(\log(\frac{H-L}{\epsilon})\big)$ calls to $V_i$.

 Let 
$$
q=\frac{\mu-L}{H-L}\in [0,1], \qquad \ket{\phi} = \frac{q}{\sqrt{2q^2-2q+1}} \ket{1}+\frac{1-q}{\sqrt{2q^2-2q+1}}\ket{0},
$$
and $S$ be a unitary such that
\begin{align}
  \label{eq:S}
  S\ket{\mathbf{0}}\ket{0}\ket{0} = \ket{\mathbf{0}}\ket{0}\ket{\phi}.
\end{align}
Performing an amplitude estimation using $\{S_i\}_{i = 1}^T$ provides a result similar to an amplitude estimation using $S$, and thus provides a mean estimation with additive error $O(\epsilon)$. See the details in \append{mean-bounded}

Each $V_i$ uses two quantum experiments, each $S_i$ uses $O(\log(\frac{H-L}{\epsilon}))$ calls to $V_i$, and $C'$ uses $M=O(\frac{H-L}{\epsilon})$ calls to controlled $S_i$. Therefore, the total number of quantum experiments is $O\big(\frac{H-L}{\epsilon}\log(\frac{H-L}{\epsilon})\big)$.
\end{proof}

\begin{remark}\label{rem:delta}
For every $i\in [T]$, $S_i$ can be seen as an approximation of unitary $S$. The slight difference $\delta$ among different $\mu_i$ only causes a part of approximation error which is bounded by $\epsilon$. Therefore, this difference is tolerable in our algorithm. See \eqn{delta_1} and \eqn{delta_2} for more details.
\end{remark}

\subsection{Mean Estimation of Sub-Gaussian Random Variables}
In this subsection, we consider the quantum non-identical mean estimation problem of sub-Gaussian random variables. 
\begin{definition}
    \label{def:sub-gaussian}
  A random variable $X$ is sub-Gaussian with parameter $K$ if for all $t\ge 0$ \begin{align}
    \label{eq:def-sub-gaussian}
    \mathbb{P}[|X-\mathbb{E}[X]|\ge t] \le 2\exp\Bigl(-\frac{t^2}{2K^2}\Bigr).
  \end{align}
\end{definition}

We first give a quantum algorithm estimating the mean of non-identically distributed sub-Gaussian random variables with quadratic speed-up if the mean of the random variables are bounded by their sub-Gaussian parameter. This case can be reduced to the case of bounded random variables by truncation. Then, we show that this algorithm can be generalized to any sub-Gaussian random variable.
\begin{lemma}
\label{lem:mean-sub-gaussian}
Suppose all random variables $X_1,\ldots, X_T$ in \tsk{mean-estimation} are sub-Gaussian with parameter $K$ and their mean satisfies $|\mu_i| \le R$, $R \leq K$. Let $m,R,K,\epsilon$, $\delta$ in \alg{bounded_Gaussian} satisfies that $m =\Omega\Bigl(\log\Big(\frac{K\sqrt{\log(\frac{K}{\epsilon})}}{\epsilon}\Big)\Bigr)$, $\epsilon=O(K)$, and $\delta < \epsilon/4$. \alg{bounded_Gaussian} solves  \tsk{mean-estimation} if $T = \Omega(\frac{K\sqrt{\log (\frac{K}{\epsilon})}}{\epsilon})$, using $O\Bigl(\frac{K\sqrt{\log(\frac{K}{\epsilon})}}{\epsilon}\log\Big(\frac{K\sqrt{\log(\frac{K}{\epsilon})}}{\epsilon}\Big)\Bigr)$ quantum experiments in total.
\end{lemma}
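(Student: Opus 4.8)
The plan is to reduce this sub-Gaussian instance to the bounded case handled by \thm{mean-bounded}, by coherently \emph{clipping} each $X_i$ to a symmetric interval $[-M,M]$ whose half-width $M=\Theta\bigl(K\sqrt{\log(K/\epsilon)}\bigr)$ is large enough that clipping perturbs each mean by at most $\epsilon/4$. Concretely, \alg{bounded_Gaussian} is \alg{bounded} run with $L=-M$ and $H=M$, except that the controlled rotation in \lin{U_i} is applied using the clipped value $\mathrm{clip}(x)=\max(-M,\min(M,x))$ in place of $x$ when computing the rotation angle. Since the clip affects only the angle and leaves the value and garbage registers intact, $U_i$ still has the form $U_i\ket{\mathbf 0}\ket 0=\sqrt{\bar q_i}\ket{\psi_1^{(i)}}\ket 1+\sqrt{1-\bar q_i}\ket{\psi_0^{(i)}}\ket 0$ with $\bar q_i=(\mathbb{E}[\mathrm{clip}(X_i)]-L)/(H-L)$. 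Hence the construction of $V_i$, the fixed-point amplification producing $S_i$, and the amplitude-estimation step in the proof of \thm{mean-bounded} all carry over with $q_i$ replaced by $\bar q_i$; what remains is to control the new error incurred by clipping.

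First I would bound the clipping error on the mean. Because $|\mu_i|\le R\le K$, for every $t>R$ the tail bound of \de{sub-gaussian} gives $\mathbb{P}[|X_i|>t]\le 2\exp\!\bigl(-(t-R)^2/(2K^2)\bigr)$. Writing the clipping error as a layer-cake tail integral,
\begin{align*}
\bigl|\mathbb{E}[\mathrm{clip}(X_i)]-\mu_i\bigr|
\le \mathbb{E}\bigl[(|X_i|-M)\,\mathbbm{1}(|X_i|>M)\bigr]
=\int_M^{\infty}\mathbb{P}[|X_i|>t]\,\mathrm{d}t
\le \int_{M-R}^{\infty}2\exp\!\Bigl(-\tfrac{s^2}{2K^2}\Bigr)\,\mathrm{d}s,
\end{align*}
which is a Gaussian tail of order $O\bigl(\tfrac{K^2}{M-R}\exp(-(M-R)^2/(2K^2))\bigr)$. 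Taking the constant hidden in $M=\Theta\bigl(K\sqrt{\log(K/\epsilon)}\bigr)$ large enough (and using $R\le K\le M$) makes this at most $\epsilon/4$ uniformly over $i\in[T]$.

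Next I would invoke \thm{mean-bounded} on the clipped variables, which are bounded in $[L,H]=[-M,M]$, with accuracy $\epsilon$. Setting $\nu_i\coloneqq\mathbb{E}[\mathrm{clip}(X_i)]$, the triangle inequality gives $|\nu_i-\mu|\le|\nu_i-\mu_i|+|\mu_i-\mu|\le \epsilon/4+\delta<\epsilon/2$ using the clipping bound and the hypothesis $\delta<\epsilon/4$, so the clipped means meet the mean-difference requirement $\delta'<\epsilon/2$ of \thm{mean-bounded} with target $\mu$. The remaining hypotheses follow from the choice of $M$: the repetition bound becomes $m=\Omega(\log\tfrac{H-L}{\epsilon})=\Omega\bigl(\log\tfrac{K\sqrt{\log(K/\epsilon)}}{\epsilon}\bigr)$, and $T=\Omega(\tfrac{H-L}{\epsilon})=\Omega(\tfrac{K\sqrt{\log(K/\epsilon)}}{\epsilon})$, matching the statement. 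The output error is then at most $\epsilon$, and the total number of quantum experiments is $O\bigl(\tfrac{H-L}{\epsilon}\log\tfrac{H-L}{\epsilon}\bigr)=O\bigl(\tfrac{K\sqrt{\log(K/\epsilon)}}{\epsilon}\log\tfrac{K\sqrt{\log(K/\epsilon)}}{\epsilon}\bigr)$, as claimed.

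The step I expect to be most delicate is verifying the side condition $\epsilon=O\bigl(\tfrac{(\mu-L)(H-\mu)}{H-L}\bigr)$ of \thm{mean-bounded} under the clipped endpoints. With $L=-M$ and $H=M$ this quantity equals $(M^2-\mu^2)/(2M)$, and since $|\mu|\le R\le K$ while $M=\Theta(K\sqrt{\log(K/\epsilon)})\gg K$, it is $\Theta(M)=\Omega(K)$; the hypothesis $\epsilon=O(K)$ then yields the bound. This is precisely where the logarithmic enlargement of the clipping window is needed: $M$ must exceed $K$ by the $\sqrt{\log(K/\epsilon)}$ factor both to push the clipping tail below $\epsilon/4$ and to keep the clipped endpoints far from the mean so that the amplitude encoding remains well-conditioned. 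A secondary point to check is that absorbing the clip into the controlled rotation does not disturb the garbage-register structure exploited by the uncomputation trick $V_i=(U_i^{\dagger}\otimes I)(I\otimes\CNOT)(U_i\otimes I)$, which holds because clipping is a function of the value register alone.
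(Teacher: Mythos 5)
Your proposal is correct and follows essentially the same route as the paper's proof: reduce to \thm{mean-bounded} by restricting each $X_i$ to a window of half-width $\Theta\bigl(K\sqrt{\log(K/\epsilon)}\bigr)$, show via sub-Gaussian tails that the induced mean shift is at most $\epsilon/4$, and combine with $\delta<\epsilon/4$ to meet the $\delta<\epsilon/2$ hypothesis of that theorem, checking the side condition $\epsilon=O\bigl(\frac{(\mu-L)(H-\mu)}{H-L}\bigr)$ exactly as the paper does. The only cosmetic differences are that you clip out-of-range values to the endpoints and bound the shift by a layer-cake tail integral, whereas the paper truncates them to $0$ and bounds the shift by integration by parts, and that you fold the clip into the controlled rotation rather than writing a truncated value into a fresh register; both variants give the same Gaussian-tail estimate and the same complexity.
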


\begin{algorithm}[tb]
\caption{Mean Estimation of Mean-Bounded sub-Gaussian Random Variable}\label{alg:bounded_Gaussian}
\begin{algorithmic}[1]
\STATE{\bfseries Input:} sequence of random variable oracle $\{O_{X_i}\}_{i = 1}^T$, accuracy $\epsilon$, mean difference $\delta$, repetition parameter $m$, upper bound for mean $R$, sub-Gaussian parameter $K$
\STATE{\bfseries Output:} mean estimation $\tilde{\mu}$
\STATE  Let $\Delta =K\max\Big\{\sqrt{4\log(\frac{128K}{\epsilon})},\sqrt{2\log(\frac{32R}{\epsilon})}\Big\}$, $L=-R-\Delta$, $H=R+\Delta$
\STATE Construct unitary $O_{\tilde{X}_i}$\\
\begin{equation*}
\begin{aligned}
    O_{\tilde{X}_i}:|\mathbf{0}\rangle\ket{\mathbf{0}} & \xlongrightarrow{O_{X_i}\otimes I} \sum_{x\in E_i}\sqrt{p_i(x)}|\psi_x^{(i)}\rangle|x\rangle\ket{\mathbf{0}} \\
    & \xlongrightarrow{\CNOT} \sum_{x\in [L,H]}\sqrt{p_i(x)}|\psi_x^{(i)}\ket{x}\ket{x}+\sum_{x \in E_i\setminus [L,H]}\sqrt{p_i(x)}|\psi_x^{(i)}\ket{x}\ket{\mathbf{0}}
\end{aligned}
\end{equation*}
\STATE Output $\tilde{\mu} = $\alg{bounded}($\{O_{\tilde{X}_i}\}_{i = 1}^T$, accuracy $\epsilon$, mean difference $\delta = \epsilon/2$, $m$, $L$, $H$)
\end{algorithmic}
\end{algorithm}

Quantum random variable $\tilde{X}_i$ generated by oracle $O_{\tilde{X}_i}$ in \alg{bounded_Gaussian} is a truncated version of $X_i$. Calculation shows that the mean difference is within $\frac{\epsilon}{2}$, thus \alg{bounded_Gaussian} provides an estimation with $O(\epsilon)$ additive error.

\begin{proof}
See \append{mean-sub-gaussian}.
\end{proof}

For general sub-Gaussian distributions, we first use $O(1)$ classical samples to estimate the mean of these sub-Gaussian random variables up to additive error $K/2$, and then shift the random variables by subtracting the approximate mean so that the shifted random variables have mean bounded by their sub-Gaussian parameter. After that, we can use \lem{mean-sub-gaussian} to estimate the mean of the shifted random variables.  

\begin{theorem}
\label{thm:mean-sub-gaussian}
Assume all random variables $X_1,\ldots, X_T$ in \tsk{mean-estimation} are sub-Gaussian with parameter $K$. Let $m,K,\delta,\epsilon$ in \alg{Gaussian} satisfy that $m =\Omega\Bigl(\log\Big(\frac{K\sqrt{\log(\frac{K}{\epsilon})}}{\epsilon}\Big)\Bigr)$, $\epsilon=O(K)$, and $\delta < \epsilon/4$. \alg{Gaussian} solves \tsk{mean-estimation} if $T = \Omega(\frac{K\sqrt{\log(\frac{K}{\epsilon})}}{\epsilon})$, using $O\Bigl(\frac{K\sqrt{\log(\frac{K}{\epsilon})}}{\epsilon}\log\Big(\frac{K\sqrt{\log(\frac{K}{\epsilon})}}{\epsilon}\Big)\Bigr)$ quantum experiments in total.
\end{theorem}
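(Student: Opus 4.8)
\textbf{Proof proposal for \thm{mean-sub-gaussian}.}
The plan is to reduce the general sub-Gaussian case to the mean-bounded sub-Gaussian case already settled in \lem{mean-sub-gaussian}, by first shifting every random variable by a common classical estimate of the mean. The key observation is that shifting a sub-Gaussian random variable by a fixed constant preserves its sub-Gaussian parameter $K$: if $X_i$ satisfies the tail bound in \eqn{def-sub-gaussian}, then so does $X_i-a$ for any $a\in\R$, since $\E[X_i-a]=\mu_i-a$ and $(X_i-a)-\E[X_i-a]=X_i-\mu_i$ is unchanged. Thus the only thing we need to guarantee is that the shifted means $\mu_i-a$ all lie within the sub-Gaussian parameter $K$, which will let us invoke \lem{mean-sub-gaussian} with $R\le K$.

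First I would use a constant number of classical experiments to produce an estimate $\hat\mu$ of $\mu$ (equivalently of any $\mu_i$, since they differ by at most $\delta<\epsilon/4\le K$) to within additive error, say, $K/2$. Concretely, measuring $O_{X_i}\ket{\mathbf 0}$ in the computational basis yields a sample of $X_i$; averaging $O(1)$ such samples gives, by the sub-Gaussian concentration in \de{sub-gaussian} (e.g.\ a Hoeffding-type bound for sub-Gaussian variables), an estimate $\hat\mu$ with $|\hat\mu-\mu|\le K/2$ with constant probability, which can be boosted to high probability by a median trick using only $O(\log(1/\text{const}))=O(1)$ repetitions. Each such measurement costs one quantum experiment, so this preprocessing stage uses only $O(1)$ samples and does not affect the leading-order complexity.

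Next I would define the shifted random variables $Y_i\coloneqq X_i-\hat\mu$ and note that they are accessible through the oracle $O_{Y_i}$ obtained by composing $O_{X_i}$ with the classically-controlled subtraction of the known constant $\hat\mu$ (a reversible arithmetic operation adding no queries to $O_{X_i}$). Each $Y_i$ is sub-Gaussian with the same parameter $K$, and its mean $\mu_i-\hat\mu$ satisfies $|\mu_i-\hat\mu|\le|\mu_i-\mu|+|\mu-\hat\mu|\le \delta + K/2 \le K$, so the boundedness hypothesis $R\le K$ of \lem{mean-sub-gaussian} holds with $R=K$. Running \alg{bounded_Gaussian} on $\{O_{Y_i}\}$ with accuracy $\epsilon$ and mean difference $\delta$ then returns an estimate $\tilde\nu$ of $\mu-\hat\mu$ within additive error $\epsilon$; I would output $\tilde\mu\coloneqq\tilde\nu+\hat\mu$, which estimates $\mu$ within error $\epsilon$. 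The sample and query counts are exactly those of \lem{mean-sub-gaussian} (now with $R=K$, so the $\sqrt{\log(R/\epsilon)}$ factor is absorbed into $\sqrt{\log(K/\epsilon)}$), plus the negligible $O(1)$ classical-preprocessing samples, yielding the stated $\tilde O(K/\epsilon)$ total complexity.

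The main obstacle I anticipate is handling the failure probabilities cleanly across the two stages: the classical estimate $\hat\mu$ must land within $K/2$ with high enough probability that the subsequent call to \lem{mean-sub-gaussian}, which itself succeeds only with probability $2/3$, still yields overall success probability at least $2/3$ as required by \tsk{mean-estimation}. This is resolved by boosting the classical estimation step to failure probability well below the slack allowed by the $2/3$ guarantee (again a median-of-$O(1)$-means argument), so that a union bound over the two stages preserves the $2/3$ success probability; care is also needed to verify that the subtraction-by-$\hat\mu$ oracle construction adds no query overhead and that the definition domain $[L,H]$ chosen in \alg{bounded_Gaussian} with $R=K$ still makes the truncation error small, but both follow from the estimates already established in the proof of \lem{mean-sub-gaussian}.
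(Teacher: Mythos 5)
Your proposal is correct and matches the paper's proof essentially step for step: $O(1)$ classical samples (the paper uses $N=\lceil 8\log(20)\rceil$) yield $\hat\mu$ with $|\hat\mu-\mathbb{E}[\hat\mu]|\le K/2$ and $|\mathbb{E}[\hat\mu]-\mu|\le\delta$, the shifted oracles $O_{X_i-\hat\mu}$ satisfy the hypothesis of \lem{mean-sub-gaussian} with $R=K$, and undoing the shift gives the final estimate. The only wrinkle is in your last paragraph: since a quantum stage succeeding with probability exactly $2/3$ leaves no slack for a union bound, boosting only the classical estimation step cannot suffice; the paper instead runs \lem{mean-sub-gaussian} at success probability $4/5$ (achievable by constant-factor adjustments) so that $\frac{4}{5}\cdot\frac{9}{10}\ge\frac{2}{3}$.
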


\begin{algorithm}[tb]
\caption{Mean Estimation of sub-Gaussian Random Variable}\label{alg:Gaussian}
\begin{algorithmic}[1]
\STATE{\bfseries Input:} sequence of random variable oracle $\{O_{X_i}\}_{i = 1}^T$, accuracy $\epsilon$, repetition parameter $m$, sub-Gaussian parameter $K$
\STATE{\bfseries Output:} mean estimation $\tilde{\mu}$
\STATE  Perform $N=\lceil8\log(20)\rceil$ times classical experiments on arbitrary $X_i$ and let the average of the samples be $\hat{\mu}$\label{lin:classical}
\STATE Construct unitary $O_{X_i'}$\\
\begin{equation*}
\begin{aligned}
    O_{X_i'}:|\mathbf{0}\rangle\ket{\mathbf{0}} & \xlongrightarrow{O_{X_i}\otimes I} \sum_{x\in E_i }\sqrt{p_i(x)}|\psi_x^{(i)}|x\rangle\ket{\mathbf{0}} \\
    & \longrightarrow \sum_{x\in E_i}\sqrt{p_i(x)}\ket{\psi_x^{(i)}|x}\ket{x - \hat{\mu}}
\end{aligned}
\end{equation*}
\STATE Output $\tilde{\mu} = $\alg{bounded_Gaussian}($\{O_{X_i'}\}_{i = 1}^T$, accuracy $\epsilon$, mean difference $\delta=\epsilon/4$, $m$, upper bound for mean $R = K$, sub-Gaussian parameter $K$)
\end{algorithmic}
\end{algorithm}

\begin{proof}
Classical experiment in \lin{classical} can be naturally implemented by quantum access to random variable. For any $i\in [T]$, by applying $O_{X_i}$ to $\ket{\mathbf{0}}$ and measuring the second register in computational basis, we can get a classical sample of $X_i$. Since $\hat{\mu}$ is the average value of $N=\lceil8\log(20)\rceil$ samples, by the Hoeffding inequality for sub-Gaussian distributions \cite{vershynin2018high}, we have \begin{align}
    \mathbb{P}[|\hat{\mu}-\mathbb{E}[\hat{\mu}]|\ge \frac{K}{2}] \le 2\exp\Bigl(-\frac{N}{2K^2}\frac{K^2}{4}\Bigr) \le \frac{1}{10}.
  \end{align}
In addition, since $|\mu_i-\mu|\le \delta$ for all $i\in [T]$, we have \begin{align}
    |\mathbb{E}[\hat{\mu}]-\mu| \le \delta.
  \end{align}
$O_{X_i}'$ can be seen as quantum query to random variable $X_i' = X_i-\hat{\mu}$. With probability at least $\frac{9}{10}$, we have \begin{align}
    |\mathbb{E}[X_i']| = |\mathbb{E}[X_i]-\hat{\mu}| \le |\mathbb{E}[X_i]-\E[\hat{\mu}]|+|\hat{\mu}-\mathbb{E}[\hat{\mu}]|\le \delta + \frac{K}{2}\le K.
  \end{align}
Therefore, by \lem{mean-sub-gaussian} with $R=K$, $m=\Omega\Bigl(\log\Big(\frac{K\sqrt{\log(\frac{K}{\epsilon})}}{\epsilon}\Big)\Bigr)$ and $X_i'=X_i-\hat{\mu}$, we can estimate $\mu-\hat{\mu}$ with additive error $O(\epsilon)$ with probability at least $\frac{4}{5}$ using $O\Bigl(\frac{K\sqrt{\log(\frac{K}{\epsilon})}}{\epsilon}\log\Big(\frac{K\sqrt{\log(\frac{K}{\epsilon})}}{\epsilon}\Big)\Bigr)$ quantum experiments. Subtracting $\hat{\mu}$ from the estimate gives the final output of the algorithm which is an $\epsilon$-additive estimate of $\mu$ with probability at least $\frac{4}{5}\cdot\frac{9}{10}\ge \frac{2}{3}$. 
\end{proof}

\section{Lower Bound}
\label{sec:lower-bound}

In this section, we prove sample complexity lower bounds for the quantum non-identical mean estimation problem in \tsk{mean-estimation}.

Let $m$ be the repetition parameter defined \tsk{mean-estimation}. In \sec{lower-bound-sg}, we give a sample complexity lower bound for $m=1$, and show there is no quantum speed-up compared to classical algorithms. In \sec{lower-bound-po}, we give a sample complexity lower bound for $m\ge 1$.

\subsection{Lower Bound for $m=1$}
\label{sec:lower-bound-sg}

Let $X$ be a finite random variable with support $E$. Let $(\mathcal{H},O_X)$ be a quantum random variable in \define{query-access}, i.e.,
\begin{align}
    \label{eqn:def-query}
    O_X|\mathbf{0}\rangle=\sum_{x\in E}\sqrt{p(x)}|\psi_x\rangle|x\rangle,
  \end{align}
and we denote the output state by $|\psi_X\rangle$. A $p$-parallel query to $O_X$ is to apply the unitary $O_X^{\otimes q}$ or $O_X^{\dagger\otimes q}$ for $q\le p$.

Note that Eq.~\eqn{def-query} only restricts the outcome of applying $O_X$ on $\ket{\mathbf{0}}$, so the quantum random variable encoding the same $X$ can be different. Throughout \sec{lower-bound-sg}, we assume all quantum random variables encode the same finite random variable $X$. Given that $m=1$, the algorithm can perform only one quantum experiment for each quantum random variable.

We use the quantum query model to analyze the sample complexity of the quantum non-identical mean estimation since every quantum experiment can be regarded as a query to the oracle $O_X$. A $T$-query quantum algorithm starts from an all-0 state $\ket{\mathbf{0}}_Q\ket{\mathbf{0}}_W$, and then interleaves fixed unitary operations $U_0, U_1,\ldots, U_T$ with queries. Suppose different oracles are queried at different time, and we denote the $t$-th oracle queried by the algorithm as $O_X^{(t)}$. Without loss of generality, we assume that all queries are applied to register $\ket{\mathbf{0}}_Q$ and $U_0, U_1,\ldots, U_T$ are applied to $\ket{\mathbf{0}}_Q\ket{\mathbf{0}}_W$. Whether to apply $O_X^{(t)}$ or $(O_X^{(t)})^{\dagger}$ needs to be determined in advance, and the choices can be represented by $T$ boolean variables $a_1,\ldots,a_T\in \{-1,1\}$ such that
\begin{align}
    (O_X^{(t)})^{a_t} = \begin{cases}
        O_X^{(t)} & \text{if }a_t=1,\\
        (O_X^{(t)})^{\dagger} & \text{if } a_t=-1.
    \end{cases}
\end{align}

For any $ 1\le t\le T$, let \begin{align}
    |\psi^{(t)}\rangle:=U_t(O_X^{(t)})^{a_t}\cdots (O_X^{(1)})^{a_1} U_0\ket{\mathbf{0}}_Q\ket{\mathbf{0}}_W.
\end{align}
Hence the final state of the algorithm is $|\psi^{(T)}\rangle$.

At the end of the algorithm, we will measure $|\psi^{(T)}\rangle$ and let the projection onto the correct outputs be $\Pi_{c}$, and the success probability of the algorithm is hence
\begin{align}
    \|\Pi_c |\psi^{(T)}\rangle\|^2.
\end{align}

\subsubsection{Reduction to Low-depth Quantum Algorithms}

For a quantum circuit with oracles, the query depth is the maximum number of queries on any path from an input qubit to an output qubit. In this section, we prove that the behavior of a quantum algorithm querying $T$ non-identical oracles can be simulated by a low query depth quantum algorithm with the same number of queries. Actually, we will show that the behavior of the algorithm can be simulated by a quantum circuit using two $T$-parallel queries.

For any $1\le t\le T$, let \begin{align}
    \label{eqn:def-beg}
    |\phi_{\mathrm{beg}}^{(t)}\rangle&\coloneqq\begin{cases}
        \ket{\mathbf{0}}& \text{if } a_t = 1,\\
        |\psi_X\rangle & \text{if } a_t=-1,
    \end{cases} \\
    \label{eqn:def-end}
    |\phi_{\mathrm{end}}^{(t)}\rangle&\coloneqq\begin{cases}
        |\psi_X\rangle& \text{if } a_t = 1,\\
        \ket{\mathbf{0}} & \text{if } a_t=-1,
    \end{cases}
    \end{align} 
 so that \begin{align}
     (O_X^{(t)})^{a_t}|\phi_{\mathrm{beg}}^{(t)}\rangle = |\phi_{\mathrm{end}}^{(t)}\rangle.
 \end{align}
 This is the only subspace that $(O_X^{(t)})^{a_t}$'s behavior is fixed and defined by Eq.~\eqn{def-query}. 

For any $1\le t\le T$, let \begin{align}
    \Pi_{\mathrm{beg}}^{(t)}:= |\phi_{\mathrm{beg}}^{(t)}\rangle\langle\phi_{\mathrm{beg}}^{(t)}|\otimes I,
\end{align}
and
\begin{align}
\label{eqn:def-eff-state}
     |\psi_{\mathrm{eff}}^{(t)}\rangle_{Q,W} :=&\ (O_X^{(t)})^{a_t}\Pi_{\mathrm{beg}}^{(t)}U_{t-1}(O_X^{(t-1)})^{a_{t-1}}\Pi_{\mathrm{beg}}^{(t-1)} \cdots  U_1(O_X^{(1)})^{a_1}\Pi_{\mathrm{beg}}^{(1)}U_0\ket{\mathbf{0}}_Q\ket{\mathbf{0}}_W\\
     =&\ (|\phi_{\mathrm{end}}^{(t)}\rangle\langle\phi_{\mathrm{beg}}^{(t)}|\otimes I)U_{t-1}\cdots  U_1(|\phi_{\mathrm{end}}^{(1)}\rangle\langle\phi_{\mathrm{beg}}^{(1)}|\otimes I)U_0\ket{\mathbf{0}}_Q\ket{\mathbf{0}}_W.
\end{align}
These states are fixed no matter what the queries $O_X^{(t)}$ are, since all queries in Eq.~\eqn{def-eff-state} are applied to the subspace that its behavior is defined by Eq.~\eqn{def-query}.

We show in the following lemma that $|\psi_{\mathrm{eff}}^{(t)}\rangle$ can be prepared by a quantum algorithm using two $t$-parallel queries after post-selection.

\begin{lemma}
    \label{lem:eff-state}
    Given a $T$-query quantum algorithm acting on registers $Q$ and $W$, for any $0\le t\le T$, $|\psi_{\mathrm{eff}}^{(t)}\rangle$ defined in Eq.~\eqn{def-eff-state} can be prepared by another quantum circuit $V_t^{\mathrm{low}}$ using two $t$-parallel queries to any unitary oracle $O_X$ satisfying Eq.~\eqn{def-query} after post-selection, namely, \begin{align}
        \label{eqn:def-V-low}
        \bigl(I_{W,Q_t}\otimes \langle\mathbf{0}|_{Q_0,\ldots,Q_{t-1}}\bigr)V_t^{\mathrm{low}}\ket{\mathbf{0}}_{W,Q_0,\ldots,Q_t},
        \end{align}  
    where $Q_0,\ldots,Q_t$ are $t+1$ registers with $n_Q$ qubits.
\end{lemma}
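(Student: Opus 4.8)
The plan is to realize $V_t^{\mathrm{low}}$ as a genuine unitary on the registers $W,Q_0,\ldots,Q_t$ whose \emph{only} non-unitary ingredient is the terminal post-selection of $Q_0,\ldots,Q_{t-1}$ onto $\ket{\mathbf 0}$ displayed in \eqn{def-V-low}, and to arrange things so that this single post-selection reproduces all $t$ rank-one factors $\ket{\phi_{\mathrm{end}}^{(s)}}\bra{\phi_{\mathrm{beg}}^{(s)}}$ of \eqn{def-eff-state}. The starting observation is that, by \eqn{def-beg}--\eqn{def-end}, each $\ket{\phi_{\mathrm{beg}}^{(s)}}$ and $\ket{\phi_{\mathrm{end}}^{(s)}}$ is either $\ket{\mathbf 0}$ or $\ket{\psi_X}$, so every factor is a single query composed with a projection onto $\ket{\mathbf 0}$: for a forward step ($a_s=1$), $\ket{\phi_{\mathrm{end}}^{(s)}}\bra{\phi_{\mathrm{beg}}^{(s)}}=\ket{\psi_X}\bra{\mathbf 0}=(O_X\ket{\mathbf 0})\bra{\mathbf 0}$, while for a backward step ($a_s=-1$), $\ket{\phi_{\mathrm{end}}^{(s)}}\bra{\phi_{\mathrm{beg}}^{(s)}}=\ket{\mathbf 0}\bra{\psi_X}=\ket{\mathbf 0}(\bra{\mathbf 0}O_X^{\dagger})$, using $\bra{\mathbf 0}O_X^{\dagger}=(O_X\ket{\mathbf 0})^{\dagger}=\bra{\psi_X}$. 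Thus a forward factor costs one application of $O_X$ to a fresh $\ket{\mathbf 0}$, and a backward factor costs one application of $O_X^{\dagger}$ immediately before a projection onto $\ket{\mathbf 0}$.

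First I would ``unroll'' the single query register of the original algorithm across the $t+1$ fresh registers, letting $Q_s$ play the role of the query register during the slot following the $s$-th factor. The circuit $V_t^{\mathrm{low}}$ then consists of three stages: (i) one round of forward queries, applying $O_X$ to every $Q_s$ with $a_s=1$ (this loads $Q_s$ with $\ket{\phi_{\mathrm{end}}^{(s)}}=\ket{\psi_X}$, while $Q_0$ and the backward registers stay in $\ket{\mathbf 0}=\ket{\phi_{\mathrm{end}}^{(s)}}$); (ii) the original fixed unitaries applied to the relabelled registers, i.e.\ $U_{s-1}$ acting on $(Q_{s-1},W)$ for $s=1,\ldots,t$ in increasing order; (iii) one round of backward queries, applying $O_X^{\dagger}$ to every $Q_{s-1}$ with $a_s=-1$. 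The terminal projection $\bra{\mathbf 0}_{Q_0,\ldots,Q_{t-1}}$ of \eqn{def-V-low} then enforces the bra $\bra{\phi_{\mathrm{beg}}^{(s)}}$ of each factor: directly onto $\ket{\mathbf 0}$ at forward steps, and onto $\ket{\psi_X}$ at backward steps once the deferred $O_X^{\dagger}$ has acted. Crucially this batches into exactly two $t$-parallel queries: within stage (i) the target registers are distinct, within stage (iii) the target registers are distinct, and although the two rounds may share a register (when $a_s=1$ is followed by $a_{s+1}=-1$), each round is individually a valid $t$-parallel query since $|F|,|B|\le t$.

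The correctness argument is an induction on $t$ establishing that projecting $Q_0,\ldots,Q_{t-1}$ onto $\ket{\mathbf 0}$ turns the unitary output onto $\ket{\psi_{\mathrm{eff}}^{(t)}}$ on $(Q_t,W)$, with the base case $V_0^{\mathrm{low}}=U_0$ on $(Q_0,W)$ and no projection. The hard part, and the step I would be most careful about, is that in the unitary $V_t^{\mathrm{low}}$ the gate $U_s$ is applied to $(Q_s,W)$ \emph{before} $Q_{s-1}$ is projected, so the intermediate state is entangled across all registers rather than collapsed to the sequential effective state; I therefore cannot simply ``read off'' the factor at each step. The resolution is the spectator property guaranteed by the assignment $U_{s-1}\mapsto(Q_{s-1},W)$: every operation applied after slot $s$ acts only on $Q_s,\ldots,Q_t$ and $W$, so $Q_{s-1}$ is untouched until the terminal projection, and commuting $\bra{\mathbf 0}_{Q_{s-1}}$ (together with its deferred $O_X^{\dagger}$ in the backward case) inward past these operations is legitimate; on the $\ket{\mathbf 0}_{Q_{s-1}}$ component the register $W$ carries exactly the post-$s$-th-factor content, matching \eqn{def-eff-state}. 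The same spectator property justifies the batching, since each $O_X$ on $Q_s$ acts on a register untouched before slot $s$ (so it commutes to the front) and each $O_X^{\dagger}$ on $Q_{s-1}$ acts on a register untouched after slot $s$ (so it commutes to just before the projection). Verifying the rank-one identity at each inductive step reduces to the elementary relation $\bra{\mathbf 0}O_X^{\dagger}=\bra{\psi_X}$ noted above, after which the induction closes.
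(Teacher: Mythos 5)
Your proposal is correct and takes essentially the same route as the paper's proof: the paper likewise unrolls the query register into $Q_0,\ldots,Q_t$, applies all forward queries $O_X$ in a single parallel round, relabels each $U_i$ to act on $(Q_i,W)$, defers the backward-step projections to a terminal parallel round of $O_X^{\dagger}$ via the identity $\bra{\mathbf{0}}O_X^{\dagger}=\bra{\psi_X}$, and establishes correctness by induction on $t$ using the recursion $|\phi_{\mathrm{W}}^{(t+1)}\rangle = (\langle\phi_{\mathrm{beg}}^{(t+1)}|\otimes I)U_{t}|\phi_{\mathrm{end}}^{(t)}\rangle|\phi_{\mathrm{W}}^{(t)}\rangle$ (this is \sta{tement} in the paper). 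Your spectator-register commutation argument is exactly the mechanism implicit in the paper's inductive computation, so the two proofs coincide in substance.
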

\begin{proof}
    For all $1\le t< T$, from the definition of $|\psi_{\mathrm{eff}}^{(t)}\rangle$, it can be written as \begin{align}
     \label{eqn:eff-pro}
        |\psi_{\mathrm{eff}}^{(t)}\rangle = |\phi_{\mathrm{end}}^{(t)}\rangle|\phi_{\mathrm{W}}^{(t)}\rangle
    \end{align}
    for some unnormalized state $|\phi_{\mathrm{W}}^{(t)}\rangle$, then we have \begin{align}
    |\phi_{\mathrm{end}}^{(t+1)}\rangle|\phi_{\mathrm{W}}^{(t+1)}\rangle=|\psi_{\mathrm{eff}}^{(t+1)}\rangle = (|\phi_{\mathrm{end}}^{(t+1)}\rangle\langle\phi_{\mathrm{beg}}^{(t+1)}|\otimes I)U_{t}|\phi_{\mathrm{end}}^{(t)}\rangle|\phi_{\mathrm{W}}^{(t)}\rangle.
    \end{align}
    Apply $\langle \phi_{\mathrm{end}}^{(t+1)}|\otimes I$ to both sides we have 
    \begin{align}
    \label{eqn:eff-recursion}
        |\phi_{\mathrm{W}}^{(t+1)}\rangle = (\langle\phi_{\mathrm{beg}}^{(t+1)}|\otimes I)U_{t}|\phi_{\mathrm{end}}^{(t)}\rangle|\phi_{\mathrm{W}}^{(t)}\rangle.
    \end{align}
    Define \begin{align}
        \ket{\psi_{\mathrm{eff}}^{(0)}}=\ket{\mathbf{0}}\ket{\mathbf{0}}, \quad \ket{\phi_{\mathrm{end}}^{(0)}}=\ket{\mathbf{0}},\quad \ket{\phi_{\mathrm{W}}^{(0)}}=\ket{\mathbf{0}},
    \end{align}
    so that Eq.~\eqn{eff-pro} and Eq.~\eqn{eff-recursion} also hold for $t=0$.

    To construct the required circuit, We prove the following stronger statement.
    \begin{statement}\label{sta:tement}
        Let $O_X$ be any unitary satisfying Eq.~\eqn{def-query}, and $U_0^{\mathrm{low}},\ldots,U_T^{\mathrm{low}}$ be a sequence of quantum circuits satisfying $U_{0}^{\mathrm{low}}=I$ and \begin{align}
            \label{eqn:recursion-U-low}
                U_{t+1}^{\mathrm{low}} = \begin{cases}
                    ((U_t)_{Q_t,W} \otimes I) \cdot (U_{t}^{\mathrm{low}}\otimes (O_X)_{Q_{t+1}}) & \text{if } a_{t+1}=1, \\
                    ((U_t)_{Q_t,W} \otimes I) \cdot (U_{t}^{\mathrm{low}}\otimes I_{Q_{t+1}}) & \text{if } a_{t+1}=-1,
                \end{cases}
            \end{align}
            for all $0\le t < T$. The quantum circuit $U_{t}^{\mathrm{low}}$ can prepare $|\psi_{\mathrm{eff}}^{(t)}\rangle$ after post-selection, namely, \begin{align}
        \label{eqn:def-U-low}
    |psi_{\mathrm{eff}}^{(t)}\rangle = \Bigl(I_{W,Q_t}\bigotimes_{i=1}^{t} \langle\phi_{\mathrm{beg}}^{(i)}|_{Q_{i-1}}\Bigr)U_t^{\mathrm{low}}\ket{\mathbf{0}}_{W,Q_0,\ldots,Q_t},
    \end{align} 
    for any $0\le t\le T$.
    \end{statement}\begin{proof}
    See \append{statement}.
        \end{proof}
    
    The number of queries in $U_t^{\mathrm{low}}$ is $|\{a_i=1\mid i\in [t]\}|$.
    Let \begin{align}
        V_t^{\mathrm{low}} = \bigotimes_{1\le i\le t,a_i=-1}(O_X^{\dagger})_{Q_i}U_{t}^{\mathrm{low}},
    \end{align}
    then from Eq.~\eqn{def-U-low} we have \begin{align}
    \bigl(I_{W,Q_t}\otimes \langle\mathbf{0}|_{Q_0,\ldots,Q_{t-1}}\bigr)V_t^{\mathrm{low}}\ket{\mathbf{0}}_{W,Q_0,\ldots,Q_t},
    \end{align} 
    for all $0\le t\le T$.

    The number of queries in $V_t^{\mathrm{low}}$ is \begin{align}
        |\{a_i=1\mid i\in [t]\}|+|\{a_i=-1\mid i\in [t]\}|=t.
    \end{align}
    Conditioning on the state in registers $Q_0,\ldots,Q_{t-1}$ to be $\ket{\mathbf{0}}$, $V_t^{\mathrm{low}}$ prepares $|\psi_{\mathrm{eff}}^{(t)}\rangle_{Q_t,W}$ and  uses two $t$-parallel queries.
\end{proof}

Next, we demonstrate that $U_T|\psi_{\mathrm{eff}}^{(T)}\rangle$ is the only useful component in the final state $|\psi^{(T)}\rangle$,
since other parts can be controlled by $O_X^{(t)}$ to make the result worse. Before that, we prove the following useful lemma. 
\begin{lemma}
\label{lem:can-control}
     For any $T$-query quantum algorithm acting on registers $Q$, $W$, and any finite random variable $X$ on $(\Omega,p)$, if $\dim{\mathcal{H}_Q}>2\dim{\mathcal{H}_W}$, there exists a sequence of quantum random variables  $(\mathcal{H}_Q,O_X^{(1)}), \ldots, (\mathcal{H}_Q,O_X^{(T-1)})$ such that for any $0 \le t < T$ \begin{align}
        \label{eqn:recur-control}
         |\psi^{(t)}\rangle = |\phi_{\mathrm{beg}}^{(t+1)}\rangle|\phi_W^{(t+1)}\rangle + |\psi^{(t)}_{\perp}\rangle,
     \end{align}
     for some unnormalized state $|\psi^{(t)}_{\perp}\rangle$ orthogonal to $|\phi_{\mathrm{beg}}^{(t+1)}\rangle\otimes\mathcal{H}_W$.
\end{lemma}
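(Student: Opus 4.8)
The plan is to prove the decomposition by induction on $t$, constructing the oracles $O_X^{(1)},\ldots,O_X^{(T-1)}$ one at a time so that at each step the ``spurious'' part of the state is routed away from the subspace that the next query and fixed unitary will feed back into $|\phi_{\mathrm{beg}}^{(t+1)}\rangle\otimes\mathcal{H}_W$. The base case $t=0$ is immediate: $|\psi^{(0)}\rangle=U_0\ket{\mathbf{0}}$ uses no oracle, and writing $|\psi^{(0)}\rangle=|\phi_{\mathrm{beg}}^{(1)}\rangle\big((\langle\phi_{\mathrm{beg}}^{(1)}|\otimes I)|\psi^{(0)}\rangle\big)+|\psi^{(0)}_\perp\rangle$ is just the orthogonal decomposition with respect to the one-dimensional subspace $\mathbb{C}|\phi_{\mathrm{beg}}^{(1)}\rangle$ of $\mathcal{H}_Q$, whose first factor equals $|\phi_W^{(1)}\rangle$ by the definition in \eqn{eff-recursion} (using $|\phi_{\mathrm{end}}^{(0)}\rangle=|\phi_W^{(0)}\rangle=\ket{\mathbf{0}}$).

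For the inductive step, suppose the decomposition holds at level $t-1$. Applying the (as yet unchosen) query $(O_X^{(t)})^{a_t}$ and then $U_t$ gives $|\psi^{(t)}\rangle=U_t|\phi_{\mathrm{end}}^{(t)}\rangle|\phi_W^{(t)}\rangle+U_t|\beta\rangle$, where $|\beta\rangle:=((O_X^{(t)})^{a_t}\otimes I)|\psi^{(t-1)}_\perp\rangle$ and I used $(O_X^{(t)})^{a_t}|\phi_{\mathrm{beg}}^{(t)}\rangle=|\phi_{\mathrm{end}}^{(t)}\rangle$. Projecting with $\langle\phi_{\mathrm{beg}}^{(t+1)}|\otimes I$, the first term reproduces $|\phi_W^{(t+1)}\rangle$ by \eqn{eff-recursion}, so the desired decomposition at level $t$ holds provided I can choose the free part of $O_X^{(t)}$ to make $|\beta\rangle$ orthogonal to the subspace $W':=U_t^\dagger(|\phi_{\mathrm{beg}}^{(t+1)}\rangle\otimes\mathcal{H}_W)$, which has dimension $\dim\mathcal{H}_W$. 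The only constraint on $(O_X^{(t)})^{a_t}$ is that it send $|\phi_{\mathrm{beg}}^{(t)}\rangle\mapsto|\phi_{\mathrm{end}}^{(t)}\rangle$; on the orthogonal complement of $|\phi_{\mathrm{beg}}^{(t)}\rangle$ it may be any isometry onto the complement of $|\phi_{\mathrm{end}}^{(t)}\rangle$, and both choices of $a_t$ remain consistent with $O_X^{(t)}\ket{\mathbf{0}}=|\psi_X\rangle$, so the constructed $R$ is a legitimate quantum random variable encoding $X$.

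The crux is this single-step routing, and the key idea is to make each term vanish separately rather than only their sum. I would Schmidt-decompose $|\psi^{(t-1)}_\perp\rangle=\sum_{k=1}^r\sqrt{\lambda_k}|e_k\rangle_Q|f_k\rangle_W$ with $r\le\dim\mathcal{H}_W$ and each $|e_k\rangle\perp|\phi_{\mathrm{beg}}^{(t)}\rangle$ (the latter forced by $|\psi^{(t-1)}_\perp\rangle\perp|\phi_{\mathrm{beg}}^{(t)}\rangle\otimes\mathcal{H}_W$), fix an orthonormal basis $\{|w'_l\rangle\}_{l=1}^{\dim\mathcal{H}_W}$ of $W'$, and set $|g_{l,k}\rangle:=(I\otimes\langle f_k|)|w'_l\rangle$. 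Writing $|e'_k\rangle:=(O_X^{(t)})^{a_t}|e_k\rangle$, a direct slice computation gives $\langle w'_l|\beta\rangle=\sum_k\sqrt{\lambda_k}\langle g_{l,k}|e'_k\rangle$, so it suffices to pick each $|e'_k\rangle$ orthogonal to all of $|g_{1,k}\rangle,\ldots,|g_{\dim\mathcal{H}_W,k}\rangle$. I would choose the $|e'_k\rangle$ greedily: at step $k$ the vector $|e'_k\rangle$ must avoid $\mathrm{span}_l\{|g_{l,k}\rangle\}$ (dimension $\le\dim\mathcal{H}_W$), the vector $|\phi_{\mathrm{end}}^{(t)}\rangle$, and the previously chosen $|e'_1\rangle,\ldots,|e'_{k-1}\rangle$, a total of at most $\dim\mathcal{H}_W+1+(k-1)\le 2\dim\mathcal{H}_W$ dimensions; since $\dim\mathcal{H}_Q>2\dim\mathcal{H}_W$ a unit vector orthogonal to all of them always exists. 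This is exactly where the hypothesis is consumed. The resulting set $\{|\phi_{\mathrm{end}}^{(t)}\rangle,|e'_1\rangle,\ldots,|e'_r\rangle\}$ is orthonormal, so $R\colon|\phi_{\mathrm{beg}}^{(t)}\rangle\mapsto|\phi_{\mathrm{end}}^{(t)}\rangle,\ |e_k\rangle\mapsto|e'_k\rangle$ extends to a unitary, and by construction $\langle w'_l|\beta\rangle=0$ for all $l$, i.e.\ $|\beta\rangle\perp W'$, completing the step.

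The main obstacle I anticipate is precisely this orthogonality requirement: a priori one only controls the weighted sum $\sum_k\sqrt{\lambda_k}\langle g_{l,k}|e'_k\rangle$, which couples all Schmidt components and makes the $\dim\mathcal{H}_W$ constraints look like a nonlinear system on a Stiefel manifold; in particular, the naive strategy of dumping $|\psi^{(t-1)}_\perp\rangle$ into a subspace $V$ with $V\otimes\mathcal{H}_W\perp W'$ fails, since the $Q$-marginal support of $W'$ can have dimension as large as $(\dim\mathcal{H}_W)^2$. The observation that collapses the difficulty is that annihilating each term individually suffices, after which the problem becomes a routine greedy selection whose feasibility is governed exactly by $\dim\mathcal{H}_Q>2\dim\mathcal{H}_W$. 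The remaining care is bookkeeping: confirming that $W'$ at step $t$ depends only on the predetermined $a_{t+1}$ and the fixed $U_t$, that each oracle is chosen before it is queried, and that the Schmidt rank never exceeds $\dim\mathcal{H}_W$ so the bound $\dim\mathcal{H}_W+r\le 2\dim\mathcal{H}_W$ always holds.
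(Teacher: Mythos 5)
Your proposal is correct and follows essentially the same route as the paper's proof: induction on $t$, reduction to choosing the free action of $(O_X^{(t)})^{a_t}$ so that the image of $|\psi^{(t-1)}_{\perp}\rangle$ lands in $\bigl(U_t^{\dagger}(|\phi_{\mathrm{beg}}^{(t+1)}\rangle\otimes\mathcal{H}_W)\bigr)^{\perp}$, a Schmidt decomposition of $|\psi^{(t-1)}_{\perp}\rangle$, and a greedy term-by-term construction of the image vectors $|e'_k\rangle$ whose feasibility consumes $\dim\mathcal{H}_Q>2\dim\mathcal{H}_W$. Your only deviation is cosmetic: you verify feasibility via slice vectors $|g_{l,k}\rangle$ directly in $\mathcal{H}_Q$, whereas the paper counts dimensions of intersecting subspaces in $\mathcal{H}_Q\otimes\mathcal{H}_W$ --- the same dimension count in different clothing.
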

\begin{proof}
By induction. See the details in \append{can-control}.
\end{proof}

Now we prove that $U_T|\psi_{\mathrm{eff}}^{(T)}\rangle$ is the only useful component in the final state $|\psi^{(T)}\rangle$.

\begin{lemma}
    \label{lem:useful-component}
    Suppose that $X$ is a finite random variable. For any $T$-query quantum algorithm acting on registers $Q$, $W$, and any projection $\Pi_c$, if $\dim{\mathcal{H}_Q}>2\dim{\mathcal{H}_W}$ and $\dim{\mathcal{H}_Q}\ge2\dim{\mathrm{Im}(\Pi_c)}$, then there exists a sequence of quantum random variables $(\mathcal{H}_Q,O_X^{(1)}), \ldots, (\mathcal{H}_Q,O_X^{(T)})$ such that \begin{align}
        \label{eq:useful-component}
        \|\Pi_c |\psi^{(T)}\rangle\|^2 = \|\Pi_c U_T |\psi_{\mathrm{eff}}^{(T)}\rangle\|^2.
    \end{align}
\end{lemma}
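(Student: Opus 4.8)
The plan is to combine \lem{eff-state}, which shows $\ket{\psi_{\mathrm{eff}}^{(T)}}$ is the component of the final state that is \emph{fixed} regardless of how the adversary chooses the oracles, with \lem{can-control}, which characterizes the freedom the adversary has in the remaining (orthogonal) part of each intermediate state. The key conceptual point is that every oracle $O_X^{(t)}$ has its action pinned down only on the one-dimensional subspace spanned by $\ket{\phi_{\mathrm{beg}}^{(t)}}$ (via Eq.~\eqn{def-query}); on the orthogonal complement the adversary is free to map the state anywhere. Since $\ket{\psi_{\mathrm{eff}}^{(T)}}$ is built purely from the pinned-down part, it is adversary-invariant, whereas the orthogonal ``leakage'' at each step is controllable. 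The goal is to choose the oracles so that all of this leakage is steered into a subspace that the final measurement projector $\Pi_c$ annihilates, leaving only $U_T\ket{\psi_{\mathrm{eff}}^{(T)}}$ inside $\mathrm{Im}(\Pi_c)$.

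First I would decompose the true final state. Writing $\ket{\psi^{(T-1)}}$ via \lem{can-control} as $\ket{\phi_{\mathrm{beg}}^{(T)}}\ket{\phi_W^{(T)}} + \ket{\psi_\perp^{(T-1)}}$, I would apply $U_T (O_X^{(T)})^{a_T}$ and observe that on the first summand the oracle acts as the fixed map $\ket{\phi_{\mathrm{beg}}^{(T)}}\mapsto\ket{\phi_{\mathrm{end}}^{(T)}}$, reproducing exactly $U_T\ket{\psi_{\mathrm{eff}}^{(T)}}$ (by comparing with the recursion Eq.~\eqn{eff-recursion} and Eq.~\eqn{eff-pro}). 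Thus
\begin{align}
    \ket{\psi^{(T)}} = U_T\ket{\psi_{\mathrm{eff}}^{(T)}} + U_T(O_X^{(T)})^{a_T}\ket{\psi_\perp^{(T-1)}},
\end{align}
where the second term is the image under a free choice of oracle of a state orthogonal to $\ket{\phi_{\mathrm{beg}}^{(T)}}\otimes\mathcal{H}_W$. The strategy is then purely dimension-counting: I need to select $O_X^{(T)}$ (and the earlier oracles, consistently with \lem{can-control}) so that $\Pi_c U_T(O_X^{(T)})^{a_T}\ket{\psi_\perp^{(T-1)}} = 0$, which forces $\|\Pi_c\ket{\psi^{(T)}}\|^2 = \|\Pi_c U_T\ket{\psi_{\mathrm{eff}}^{(T)}}\|^2$ as claimed.

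To make the annihilation possible I would use the dimension hypotheses $\dim\mathcal{H}_Q > 2\dim\mathcal{H}_W$ and $\dim\mathcal{H}_Q \ge 2\dim\mathrm{Im}(\Pi_c)$. The target subspace $U_T^{\dagger}\,\mathrm{Im}(\Pi_c)$ has dimension $\dim\mathrm{Im}(\Pi_c)$, and I want to route the leakage into its orthogonal complement; the point of \lem{can-control} is precisely that $\ket{\psi_\perp^{(t)}}$ lives in $(\ket{\phi_{\mathrm{beg}}^{(t+1)}})^\perp\otimes\mathcal{H}_W$, so the oracle $(O_X^{(t+1)})$ is free to send it into any subspace of dimension up to roughly $\dim\mathcal{H}_Q - 1$ times $\dim\mathcal{H}_W$. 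The counting shows there is enough room, under the stated inequalities, to simultaneously (i) keep the construction consistent with the inductive hypothesis of \lem{can-control} at every step and (ii) place the final leakage outside $U_T^\dagger\mathrm{Im}(\Pi_c)$. I expect the main obstacle to be the bookkeeping of this \emph{simultaneous} consistency: the oracles $O_X^{(1)},\ldots,O_X^{(T)}$ are not chosen independently — each must satisfy Eq.~\eqn{def-query} on its pinned subspace while also realizing the orthogonal routing used by \lem{can-control}, and the same oracle choices must yield both the invariance of $\ket{\psi_{\mathrm{eff}}^{(T)}}$ and the final orthogonality to $\Pi_c$. Verifying that the dimension budget is never exceeded across all $T$ steps, rather than just at the last one, is the delicate part, and I would handle it by an induction that carries a dimension-slack invariant and applies \lem{can-control} as the inductive engine, invoking the two dimension hypotheses only where the slack is tightest.
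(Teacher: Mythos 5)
Your proposal is correct and is essentially the paper's proof: decompose $\ket{\psi^{(T)}} = U_T\ket{\psi_{\mathrm{eff}}^{(T)}} + U_T(O_X^{(T)})^{a_T}\ket{\psi_{\perp}^{(T-1)}}$ via \lem{can-control}, then pick $O_X^{(T)}$ by the same Schmidt-decomposition construction so that $(O_X^{(T)})^{a_T}\ket{\psi_{\perp}^{(T-1)}}$ lies in $(U_T^{\dagger}\mathrm{Im}(\Pi_c))^{\perp}$, which is possible because the two hypotheses combine to give $\dim\mathcal{H}_Q > \dim\mathcal{H}_W + \dim\mathrm{Im}(\Pi_c)$. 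The ``simultaneous consistency'' you flag as the delicate part is actually a non-issue, so no dimension-slack invariant is needed: the oracles $O_X^{(1)},\ldots,O_X^{(T-1)}$ are supplied once and for all by \lem{can-control} under $\dim\mathcal{H}_Q>2\dim\mathcal{H}_W$ alone (and $\ket{\psi_{\mathrm{eff}}^{(T)}}$ is oracle-independent by construction), with the projector condition invoked only for the single final oracle.
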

\begin{proof}
    Note that 
    \begin{align}
        |\psi^{(T)}\rangle &= U_T(O_X^{(T)})^{a_T}|\psi^{T-1}\rangle \\
        &=U_T(O_X^{(T)})^{a_T}(|\phi_{\mathrm{beg}}^{(T)}\rangle|\phi_W^{(T)}\rangle+|\psi_{\perp}^{(T-1)}\rangle)\\
        &=U_T|\phi_{\mathrm{end}}^{(T)}\rangle|\phi_W^{(T)}\rangle + U_T(O_X^{(T)})^{a_T}|\psi_{\perp}^{(T-1)}\rangle\\
        &=U_T|\psi_{\mathrm{eff}}^{(T)}\rangle + U_T(O_X^{(T)})^{a_T}|\psi_{\perp}^{(T-1)}\rangle.
    \end{align}
    To satisfy Eq.~\eq{useful-component}, we need to find a unitary operator $O_X^{(T)}$ such that \begin{align}
        \Pi_c U_T(O_X^{(T)})^{a_T}|\psi_{\perp}^{(T-1)}\rangle=0,
    \end{align} which means \begin{align}
        \label{eqn:useful-flag}
        (O_X^{(T)})^{a_{T}}|\psi^{(T-1)}_{\perp}\rangle\in (U_{T}^{\dagger}\mathrm{Im}(\Pi_c))^{\perp}.
    \end{align}
    Note that Eq.~\eqn{useful-flag} has a similar form as Eq.~\eqn{success-flag}, so we can construct $O_X^{(T)}$ in the same way as we construct $O_X^{(t+1)}$ in the proof of \lem{can-control}. By similar argument to \lem{can-control}, we can prove that if 
    \begin{align}
        \dim{\mathcal{H}_Q}>\dim{\mathcal{H}_W}+\dim{\mathrm{Im}(\Pi_c)},
    \end{align}
    there exists $O_X^{(T)}$ such that Eq.~\eq{useful-component} holds. By assumptions that $\dim{\mathcal{H}_Q}>2\dim{\mathcal{H}_W}$ and $\dim{\mathcal{H}_Q}\ge2\dim{\mathrm{Im}(\Pi_c)}$, we can conclude that Eq.~\eq{useful-component} holds.
\end{proof}

In conclusion, there exists a sequence of quantum random variables such that the output of a $T$-query quantum algorithm can be simulated by a quantum algorithm using two $T$-parallel queries.
\begin{theorem}
    \label{thm:low-depth}
    For any $T$-query quantum algorithm $\mathcal{A}$ acting on registers $Q$, $W$, and any projection $\Pi_c$, suppose that $\dim{\mathcal{H}_Q}>2\dim{\mathcal{H}_W}$ and $\dim{\mathcal{H}_Q}\ge2\dim{\mathrm{Im}(\Pi_c)}$. Let $\ket{\psi^{(T)}}$ be the final state of the algorithm. There exists another quantum circuit $U^{\mathrm{low}}$ using two $T$-parallel queries such that for any finite random variable $X$, there is a sequence of quantum random variables $(\mathcal{H}_Q,O_X^{(1)}), \ldots, (\mathcal{H}_Q,O_X^{(T)})$ satisfying
    \begin{align}
        \|\Pi_c\ket{\psi^{(T)}}\|^2=\|\bigl(\Pi_c\otimes \langle\mathbf{0}|_{Q_0,\ldots,Q_{T-1}}\bigr)U^{\mathrm{low}}\ket{\mathbf{0}}_{W,Q_0,\ldots,Q_T}\|^2,
        \end{align}  
    where $Q_0,\ldots,Q_T$ are $T+1$ registers with $n_Q$ qubits.
\end{theorem}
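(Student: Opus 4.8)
The plan is to assemble \thm{low-depth} directly from the two lemmas already established, \lem{eff-state} and \lem{useful-component}, treating the theorem as their composition. First I would invoke \lem{useful-component}: under the dimension hypotheses $\dim{\mathcal{H}_Q}>2\dim{\mathcal{H}_W}$ and $\dim{\mathcal{H}_Q}\ge 2\dim{\mathrm{Im}(\Pi_c)}$, it furnishes a sequence of quantum random variables $(\mathcal{H}_Q,O_X^{(1)}),\ldots,(\mathcal{H}_Q,O_X^{(T)})$ for which
\begin{align}
    \|\Pi_c\ket{\psi^{(T)}}\|^2 = \|\Pi_c U_T\ket{\psi_{\mathrm{eff}}^{(T)}}\|^2.
\end{align}
This reduces the claim about the \emph{actual} final state $\ket{\psi^{(T)}}$ to a claim about the \emph{effective} state $\ket{\psi_{\mathrm{eff}}^{(T)}}$, which by construction is independent of the particular oracles queried and therefore amenable to the low-depth simulation.

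Next I would apply \lem{eff-state} with $t=T$ to obtain a circuit $V_T^{\mathrm{low}}$ using two $T$-parallel queries such that, after post-selecting registers $Q_0,\ldots,Q_{T-1}$ onto $\ket{\mathbf{0}}$, the register pair $(Q_T,W)$ carries exactly $\ket{\psi_{\mathrm{eff}}^{(T)}}$; that is,
\begin{align}
    \ket{\psi_{\mathrm{eff}}^{(T)}}_{Q_T,W} = \bigl(I_{W,Q_T}\otimes\langle\mathbf{0}|_{Q_0,\ldots,Q_{T-1}}\bigr)V_T^{\mathrm{low}}\ket{\mathbf{0}}_{W,Q_0,\ldots,Q_T}.
\end{align}
I would then define $U^{\mathrm{low}} := (U_T)_{Q_T,W}\,V_T^{\mathrm{low}}$, i.e.\ append the final algorithm unitary $U_T$ acting on the $(Q_T,W)$ registers after the simulation circuit. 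Since $U_T$ commutes past the post-selection on the disjoint registers $Q_0,\ldots,Q_{T-1}$, the post-selected output of $U^{\mathrm{low}}$ is precisely $U_T\ket{\psi_{\mathrm{eff}}^{(T)}}$, and applying $\Pi_c$ gives
\begin{align}
    \bigl(\Pi_c\otimes\langle\mathbf{0}|_{Q_0,\ldots,Q_{T-1}}\bigr)U^{\mathrm{low}}\ket{\mathbf{0}}_{W,Q_0,\ldots,Q_T} = \Pi_c U_T\ket{\psi_{\mathrm{eff}}^{(T)}}.
\end{align}
Taking squared norms and chaining the two displayed identities yields the theorem. The query count is inherited verbatim from \lem{eff-state}: $V_T^{\mathrm{low}}$ uses two $T$-parallel queries, and prepending $U_T$ adds no queries.

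The conceptual work is entirely front-loaded into the two lemmas, so the proof of the theorem itself is a short gluing argument; the main thing I would be careful about is bookkeeping the register layout. Specifically I must check that $U_T$, which in the original algorithm acts on the single query-work pair $(Q,W)$, is placed on $(Q_T,W)$ in the simulated circuit, and that it genuinely commutes with the projection $\langle\mathbf{0}|_{Q_0,\ldots,Q_{T-1}}$ because those registers are disjoint from its support. The one subtlety worth flagging is that \lem{useful-component} only guarantees the equality for a \emph{specific adversarially chosen} oracle sequence (the one making the orthogonal component invisible to $\Pi_c$), whereas \lem{eff-state} holds for \emph{any} oracle satisfying Eq.~\eqn{def-query}; I would note that these are compatible because $\ket{\psi_{\mathrm{eff}}^{(T)}}$ does not depend on which oracle is used, so the low-depth circuit can be built from any fixed $O_X$ while still matching the adversarial success probability. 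This is the only place where the two lemmas' differing quantifiers must be reconciled, and it is immediate once one observes the oracle-independence of the effective state.
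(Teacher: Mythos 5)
Your proposal is correct and matches the paper's own proof essentially verbatim: both compose \lem{useful-component} (to replace $\ket{\psi^{(T)}}$ by $U_T\ket{\psi_{\mathrm{eff}}^{(T)}}$ via an adversarial oracle sequence) with \lem{eff-state} (to prepare $\ket{\psi_{\mathrm{eff}}^{(T)}}$ by post-selection), defining $U^{\mathrm{low}}=((U_T)_{Q_T,W}\otimes I)V_T^{\mathrm{low}}$ exactly as the paper does. Your explicit reconciliation of the differing quantifiers --- the oracle-independence of $\ket{\psi_{\mathrm{eff}}^{(T)}}$ --- is a correct observation that the paper leaves implicit.
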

\begin{proof}
    Let $V_{T}^{\mathrm{low}}$ be the low-depth quantum circuit defined in \lem{eff-state}, and $U_T$ be the unitary in algorithm $\mathcal{A}$ at time step $T$. By \lem{eff-state}, the unitary $U^{\mathrm{low}}=((U_T)_{Q_T,W}\otimes I)V_{T}^{\mathrm{low}}$ satisfies \begin{align}
        \bigl(I\otimes \langle\mathbf{0}|_{Q_0,\ldots,Q_{T-1}}\bigr)U^{\mathrm{low}}\ket{\mathbf{0}}_{W,Q_0,\ldots,Q_T}=U_T|\psi_{\mathrm{eff}}^{(T)}\rangle_{Q_T,W}.
    \end{align}
    By \lem{useful-component}, there exists a sequence of quantum random variables $(\mathcal{H}_Q,O_X^{(1)}), \ldots, (\mathcal{H}_Q,O_X^{(T)})$ such that \begin{align}
        \|\Pi_c\ket{\psi^{(T)}}\|^2=\|\Pi_c U_T |\psi_{\mathrm{eff}}^{(T)}\rangle\|^2=\|\bigl(\Pi_c\otimes \langle\mathbf{0}|_{Q_0,\ldots,Q_{T-1}}\bigr)U^{\mathrm{low}}\ket{\mathbf{0}}_{W,Q_0,\ldots,Q_T}\|^2.
    \end{align}
\end{proof}

\subsubsection{Lower Bounds for Low-depth Quantum Mean Estimation Algorithms}

Given an input $x = x_0\ldots x_{n-1}\in \{0,1\}^n$, the quantum query to it is a unitary $O_x$ such that \begin{align}
\label{eq:flip-oracle}
    O_x|i\rangle|b\rangle = |i\rangle|b\oplus x_i\rangle
\end{align}
for all $i\in [n]$ and $b\in \{0,1\}$.

The \textit{approximate counting} problem is that given $O_x$, output an estimate of $|x|$ up to error $\epsilon$ with high probability. From another perspective, we can think of $[n]$ as a sample space $\Omega$ with uniform distribution $P$, and $X\colon\Omega\to \{0,1\}$ is a Bernoulli random variable such that $X(i) = x_i$, and the mean of $X$ is \begin{align}
    p=\frac{|x|}{n}.
\end{align}
Note that \begin{align}
    \label{eqn:flip-to-state}
    |0\rangle|0\rangle \xrightarrow{\text{Hardmard gates}} \sum_{i=1}^n \frac{1}{\sqrt{n}}|i\rangle|0\rangle\xrightarrow{I\otimes O_x}\sum_{i=1}^n \frac{1}{\sqrt{n}}|i\rangle|X(i)\rangle,
\end{align}
which means we can implement the oracle to $X$ with one query to $O_x$. Hence, the approximate counting problem can be reduced to the mean estimation problem. 

A $k$-parallel query call to $x$ is\begin{align}
    O_x^{\otimes k}\left|i_1, \ldots, i_k , b_1, \ldots, b_k\right\rangle=\left|i_1, \ldots, i_k , b_1 \oplus x_{i_1}, \ldots, b_k \oplus x_{i_k}\right\rangle
\end{align}

\cite{burchard2019lower} proved a $k$-parallel query lower bound of the approximate counting problem.

\begin{theorem}[{\cite{burchard2019lower}}]
    \label{thm:k-parallel-lower}
    For any quantum query algorithm and boolean string $x\in \{0,1\}^{n}$, 
    \begin{align}
         \Omega\Biggl(\frac{\binom{n-|x|}{\epsilon n}\binom{|x|+\epsilon n}{|x|}}{k\binom{n-|x|-1}{\epsilon n-1}\binom{|x|+\epsilon n-1}{|x|}} \Biggr)=\Omega\Bigl(\frac{p(1-p)}{\epsilon^2 k}\Bigr)
    \end{align}
    $k$-parallel queries to $O_x$ is necessary to estimate $p=\frac{|x|}{n}$ to within additive error $\epsilon$.
\end{theorem}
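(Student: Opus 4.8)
The plan is to prove this as the statement that $k$-parallel quantum counting buys only the classical factor-$k$ speedup, so the target is the classical Bernoulli sampling lower bound divided by $k$. First I would reduce the estimation task to distinguishing two Hamming-weight classes. Any algorithm that reports $p=|x|/n$ to additive error $\epsilon$ must, in particular, separate inputs of weight $w_0=|x|$ from inputs of weight $w_1=|x|+\epsilon n$, since these two cases force outputs differing by more than $\epsilon$; hence it suffices to lower bound the $k$-parallel query complexity of the partial Boolean function that equals $0$ on weight-$w_0$ strings and $1$ on weight-$w_1$ strings.

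The combinatorics of the statement then fall out of the natural adversary graph on this pair of weight classes. Index an adversary matrix $\Gamma$ by (weight-$w_0$ string $x$, weight-$w_1$ string $y$) with $\Gamma_{x,y}=1$ exactly when $x\le y$ bitwise, so that $y$ is obtained from $x$ by flipping $\epsilon n$ of its zeros to ones. This relation is biregular: a fixed $x$ has $\binom{n-|x|}{\epsilon n}$ neighbours and a fixed $y$ has $\binom{|x|+\epsilon n}{|x|}$ neighbours, while the sub-relation that also flips a fixed coordinate $i$ has left- and right-degrees $\binom{n-|x|-1}{\epsilon n-1}$ and $\binom{|x|+\epsilon n-1}{|x|}$. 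These are exactly the four binomials in the statement, and since $\binom{n-|x|}{\epsilon n}/\binom{n-|x|-1}{\epsilon n-1}=(n-|x|)/(\epsilon n)$ and $\binom{|x|+\epsilon n}{|x|}/\binom{|x|+\epsilon n-1}{|x|}=(|x|+\epsilon n)/(\epsilon n)$, the ratio of the degree products is $\Theta\bigl(p(1-p)/\epsilon^2\bigr)$, which is precisely the classical sample complexity of the Bernoulli problem; dividing by $k$ produces the claimed bound.

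The crux, and what I expect to be the main obstacle, is the query analysis, because the statement asks for the \emph{classical} rate $p(1-p)/(\epsilon^2 k)$ — the degree-product ratio over $k$ — whereas the generic adversary method of \cite{hoyer2007negative} only delivers its square root, reflecting the usual quadratic quantum speedup. Reconciling these is exactly the ``parallelization buys only a classical advantage'' phenomenon: I would track the weighted overlap $\sum_{x,y}\Gamma_{x,y}\langle\psi_x^{(t)}|\psi_y^{(t)}\rangle$, observing that a single $k$-parallel query touches $k$ coordinates at once, so its influence is controlled by $\Gamma\circ(D_{i_1}+\cdots+D_{i_k})$ and costs a factor $\sum_j\|\Gamma\circ D_{i_j}\|\le k\max_i\|\Gamma\circ D_i\|$, which is where the $1/k$ enters. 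The genuinely hard step is to show, in the low-depth parallel regime in which the statement is meant, that the progress cannot harness the quadratic acceleration — so the bound is the classical ratio over $k$ rather than its square root over $k$ — and this must exploit the limited query depth (in the spirit of the depth-sensitive progress analysis of \cite{boyer1998tight}), since the generic, depth-oblivious adversary bound does not see it. Pinning down the precise regime of $k$ and $\epsilon$ in which this degradation holds, together with the integrality and range constraints ($\epsilon n$ and $|x|+\epsilon n$ integral, $\epsilon n\le n-|x|$) that keep the binomial identities valid, is where the real work lies.
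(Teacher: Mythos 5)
First, a point of comparison: the paper contains no proof of this theorem at all --- it is imported verbatim from \cite{burchard2019lower} and used as a black box in the proof of \thm{mean-estimation-lower-0}. So your proposal is judged as a reconstruction of Burchard's argument. The parts you do carry out are correct: the reduction from $\epsilon$-additive estimation to distinguishing two Hamming-weight classes (modulo a factor-of-two adjustment, since weights $|x|$ and $|x|+\epsilon n$ give means differing by exactly $\epsilon$, not by more than $\epsilon$), the bitwise-dominance adversary relation, and the degree bookkeeping exactly reproduce the four binomials in the statement; your identities $\binom{n-|x|}{\epsilon n}/\binom{n-|x|-1}{\epsilon n-1}=(n-|x|)/(\epsilon n)$ and $\binom{|x|+\epsilon n}{|x|}/\binom{|x|+\epsilon n-1}{|x|}=(|x|+\epsilon n)/(\epsilon n)$ correctly evaluate the ratio to $\Theta\bigl(p(1-p)/\epsilon^2\bigr)$ in the regime $\epsilon=O(p)$.

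However, the proposal stops exactly where the theorem begins, and you say so yourself. The progress mechanism you sketch --- bounding one $k$-parallel query via $\sum_{j}\|\Gamma\circ D_{i_j}\|\le k\max_i\|\Gamma\circ D_i\|$ --- is depth-oblivious and can never yield more than $\mathrm{Adv}/k$ rounds, i.e.\ $\Omega\bigl(\sqrt{p(1-p)}/(\epsilon k)\bigr)$, the square root of the claimed bound. This is not a fixable technicality within that framework: any depth-oblivious argument giving $\Omega\bigl(p(1-p)/(\epsilon^2 k)\bigr)$ rounds uniformly in $k$ would, at $k=1$, contradict the amplitude-estimation upper bound of $O\bigl(\sqrt{p(1-p)}/\epsilon\bigr)$ sequential queries (\lem{amplitude_estimation}). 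So a valid proof must exploit the limited-adaptivity/parallel structure in an essentially $k$-dependent way, and that single step \emph{is} the content of Burchard's theorem --- the ``parallelism buys only classical advantage'' phenomenon you gesture at via \cite{boyer1998tight} but do not establish. Note also that the weakened square-root bound your argument actually delivers is insufficient for the one place the theorem is used in this paper: in \thm{mean-estimation-lower-0} the theorem is invoked at constant query depth to force $k=\Omega(\sigma^2/\epsilon^2)$, whereas $\mathrm{Adv}/k$ at constant depth only forces $k=\Omega(\sigma/\epsilon)$. As it stands, then, the proposal is a correct reduction plus correct combinatorics wrapped around an unproven core, and the missing linear-rate progress bound is the whole theorem rather than a verification left to the reader.
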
 

By \thm{k-parallel-lower}, if we want to use constant $k$-parallel queries to estimate $p$ up to additive error $\epsilon$, $k$ needs to satisfy \begin{align}
    \frac{p(1-p)}{\epsilon^2 k} = O(1),
\end{align}
which means \begin{align}
    k = \Omega\Bigl(\frac{p(1-p)}{\epsilon^2}\Bigr).
\end{align}
Now we give a sample complexity lower bound of algorithms solving \tsk{mean-estimation} with $m=1$ using \thm{low-depth}. The difficulty of directly applying \thm{low-depth} is that it requires $\dim{\mathrm{Im}(\Pi_c)}$ to be small. To resolve it, we prove that any quantum mean estimator can be modified to recover the state in query register $Q$ to $\ket{\mathbf{0}}$ with a small overhead so that correct answers lie in a much smaller subspace.

\begin{theorem}
    \label{thm:mean-estimation-lower-0}
    Suppose all random variables in \tsk{mean-estimation} have variance bounded by $\sigma^2$, and $|\mu|\le R$. Let $\mathcal{A}$ be a quantum query algorithm acting on registers $Q$, $W$ solving the quantum non-identical mean estimation problem defined in \tsk{mean-estimation} with repetition parameter $m=1$ and accuracy $\epsilon/2$. Suppose that $\frac{1}{2}n_Q>n_W+2\log(\frac{2R}{\epsilon})+1$, then it requires $T = \Omega(\frac{\sigma^2}{\epsilon^2})$ for the existence of such an algorithm $\mathcal{A}$, and $\mathcal{A}$ needs $T = \Omega\bigl(\frac{\sigma^2}{\epsilon^2}\bigr)$ quantum experiments.
\end{theorem}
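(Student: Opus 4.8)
The plan is to exhibit a hard instance and push it through the machinery already built in the excerpt: simulate the $T$ non-identical queries by a constant-round parallel circuit via \thm{low-depth}, then apply the parallel approximate-counting bound of \thm{k-parallel-lower}. The hard instance will be a (possibly scaled) Bernoulli variable $X\sim c\cdot\mathrm{Bernoulli}(p)$, for which $\mathbb{E}[X]=cp$ and $\mathrm{Var}[X]=c^2p(1-p)$, with $c,p$ chosen so that $c^2p(1-p)=\Theta(\sigma^2)$ and $|cp|\le R$. Estimating $\mathbb{E}[X]$ to additive error $\epsilon$ is equivalent to estimating $p$ to additive error $\epsilon/c$, and by Eq.~\eqn{flip-to-state} one query to the flip oracle $O_x$ realizes one quantum experiment of $X$; hence any algorithm solving \tsk{mean-estimation} on this $X$ gives an approximate-counting algorithm with the same query pattern.

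The central difficulty is that \thm{low-depth} only yields a useful statement when the correct-output projector $\Pi_c$ has small image, whereas for a generic estimator the answer may be entangled with garbage across all of $\mathcal{H}_Q\otimes\mathcal{H}_W$, so that $\dim\mathrm{Im}(\Pi_c)$ is comparable to $\dim\mathcal{H}_Q\cdot\dim\mathcal{H}_W$. To remove this, I would first modify $\mathcal{A}$ into an algorithm $\mathcal{A}'$ that recovers the query register $Q$ to $\ket{\mathbf{0}}$ and deposits the estimate in a fresh output register $O$ of $O(\log(R/\epsilon))$ qubits: run $\mathcal{A}$, coherently copy the answer into $O$ by oracle-free $\CNOT$ gates, and then uncompute $\mathcal{A}$. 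Because $m=1$ forbids reusing any oracle, the uncomputation must be carried out with a second block of $T$ fresh oracle slots encoding the same $X$; this is legitimate because the quantities that fix the answer are the effective states $\ket{\psi_{\mathrm{eff}}^{(t)}}$ of Eq.~\eqn{def-eff-state}, which are independent of the actual oracles, so at the effective level the backward block exactly inverts the forward block and restores $Q$ to $\ket{\mathbf{0}}$. The resulting $\mathcal{A}'$ uses $2T=\Theta(T)$ samples, still respects $m=1$, and has $\Pi_c=\ket{\mathbf{0}}\bra{\mathbf{0}}_{Q}\otimes\Pi'$ whose image dimension counts only the admissible estimates, i.e. $O(R/\epsilon)$ after restricting $Q$.

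With this in hand I would apply \thm{low-depth} to $\mathcal{A}'$. The recovery introduces the copy and output registers that enlarge the effective working register; tracking these together with the two hypotheses $\dim\mathcal{H}_Q>2\dim\mathcal{H}_{W}$ and $\dim\mathcal{H}_Q\ge 2\dim\mathrm{Im}(\Pi_c)$ is exactly what produces the assumption $\tfrac12 n_Q>n_W+2\log(\tfrac{2R}{\epsilon})+1$, the factor-two room in qubits absorbing the copy register, the fresh output register, and the slack in \thm{low-depth}. \thm{low-depth} then yields a circuit $U^{\mathrm{low}}$ reproducing the success probability of $\mathcal{A}'$ using only two $2T$-parallel queries to oracles encoding $X$; since $\mathcal{A}$ is assumed correct on every valid oracle sequence (in particular the adversarial one furnished by \thm{low-depth}), this is a constant-round algorithm making $O(T)$-parallel queries that estimates $p$ with probability at least $2/3$.

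Finally I would close the loop with \thm{k-parallel-lower}: reading that bound with $k=O(T)$ and a constant number of rounds forces $\frac{p(1-p)}{(\epsilon/c)^2\,k}=O(1)$, i.e. $T=\Omega\!\bigl(\tfrac{c^2 p(1-p)}{\epsilon^2}\bigr)=\Omega\!\bigl(\tfrac{\sigma^2}{\epsilon^2}\bigr)$ for the chosen $c,p$. Since $\mathcal{A}'$ spends $2T$ samples, the original $\mathcal{A}$ needs $T=\Omega(\sigma^2/\epsilon^2)$ quantum experiments, as claimed. I expect the recovery step to be the main obstacle: making the uncomputation valid at the oracle-independent effective level so that \thm{low-depth} can subsequently be invoked on $\mathcal{A}'$, all while obeying $m=1$, and carrying the dimension bookkeeping that yields the precise hypothesis on $n_Q$.
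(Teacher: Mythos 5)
Your overall architecture coincides with the paper's: reduce to Bernoulli counting via Eq.~\eqn{flip-to-state}, modify $\mathcal{A}$ so that the correct outputs lie in a low-dimensional subspace, invoke \thm{low-depth} to pass to two $O(T)$-parallel queries, and close with \thm{k-parallel-lower}. However, there is a genuine error at the step you yourself flag as the crux. Your claim that, at the effective level, ``the backward block exactly inverts the forward block and restores $Q$ to $\ket{\mathbf{0}}$'' is false, and oracle-independence of the effective states does not rescue it: even when all $2T$ oracle slots are instantiated with literally the same unitary, the $\CNOT$ copy in the middle entangles the output register with the algorithm's garbage, so compute--copy--uncompute does \emph{not} return $Q,W$ to $\ket{\mathbf{0}}$. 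Concretely, if $U\ket{\mathbf{0}}=\sum_i\sqrt{p(i)}\ket{\phi_i}_{Q,W}\ket{i\epsilon}_{W_1}$, then $V=(U^{\dagger}\otimes I)(I\otimes\CNOT_{W_1,W_2})(U\otimes I)$ yields
\begin{align}
V\ket{\mathbf{0}}=\ket{\mathbf{0}}_{Q,W,W_1}\sum_{i}p(i)\ket{i\epsilon}_{W_2}+\ket{\mathrm{garbage}},
\end{align}
so the clean component has squared norm only $\sum_i p(i)^2$, which for a generic output distribution can be $o(1)$. Your proposal silently treats this as $1$ and hence never justifies that the modified algorithm $\mathcal{A}'$ succeeds with constant probability, which is exactly what \thm{low-depth} needs to transfer to the low-depth circuit.

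The paper's proof fills precisely this hole with two ingredients you omit. First, the output is rounded to the $\epsilon$-net $\mathbb{R}_\epsilon=\{i\epsilon\mid i\in\mathbb{Z}\}$, so the $\epsilon/2$-correctness of $\mathcal{A}$ forces the output distribution onto at most two grid points $i^*\epsilon,(i^*+1)\epsilon$ with $p(i^*)+p(i^*+1)\ge 2/3$, whence $p(i^*)^2+p(i^*+1)^2\ge 2/9$; this concentration inequality is what guarantees the clean component of $V\ket{\mathbf{0}}$ has constant amplitude, and it also lets $\Pi_c$ be taken $2$-dimensional (projecting $Q,W,W_1$ onto $\ket{\mathbf{0}}$ and $W_2$ onto the two correct grid values) rather than your $O(R/\epsilon)$-dimensional choice. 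Second, the paper applies fixed-point search (\lem{fixsearch}) with $O(1)$ calls to $V$ to amplify that component, keeping $T'=O(T)$; each call consumes fresh oracle slots, consistent with $m=1$, exactly as in your $2T$-slot accounting. With these two repairs your outline becomes the paper's proof; as written, the assertion of exact restoration is a wrong step, not merely a missing detail, since the entire quantitative content of the recovery stage lives in the gap between $\sum_i p(i)^2$ and $1$.
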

\begin{proof}
Use the uncomputation trick to combine \thm{low-depth} and \thm{k-parallel-lower}. See \append{lower-0}.
\end{proof}

\subsubsection{Implication for Quantum Linear Systems}
\label{sec:implication_quantum_ls}

As mentioned in the introduction, we can possibly estimate $A$ by the following procedure.

For fixed integers $t_0, \gamma = \Theta(\log (\sqrt{n}/\delta))$ and any $0 \leq t < n$, suppose we have a register storing $|\psi_{t_0 + 2 \gamma t} \rangle$. We measure $|\psi_{t_0 + 2 \gamma t} \rangle$ to obtain a classical state $x_{t_0 + 2 \gamma t}$, and get $|\psi_{t_0 + 2 \gamma t + 1}\rangle$ as the second register of $U_f |\psi_{t_0 + 2 \gamma t}\rangle |0\rangle $ (note that $|\psi_{t_0 + 2 \gamma t} \rangle$ has collapsed after the measurement), which encodes the randomness of $x_{t_0 + 2 \gamma t + 1}$ given $x_{t_0 + 2 \gamma t}$. Similarly, we can also obtain $|\psi^{-1}_{t_0 + 2 \gamma t + 1}\rangle$ by querying $U^{-1}_f$. After that, we compute $U_f |\psi_{t_0 + 2 \gamma t + 1}\rangle |0\rangle $ and collect the second register as $|\psi_{t_0 + 2 \gamma t + 2}\rangle$, and do this computation for all $t_0 + 2 \gamma t + 1$ to $t_0 + 2 (\gamma + 1) t - 1 $. Then we let $t = t + 1$ and repeat this process.

After such process, we have $n$ classical samples at even steps $X_{t_0} := [x_{t_0}, x_{t_0 + 2\gamma}, \ldots, x_{t_0 + 2n \gamma - 2}] \in \mathbb{R}^{n \times n}$, and $n$ quantum samples at odd steps. It holds that $X_{t_0}$ is full rank with probability 1 given that
\begin{align}
\label{eqn:ax0}
A X_{t_0} = [x_{t_0 + 1}, \ldots, x_{t_0 + 2n \gamma - 1}] + W_{t_0} + Z_{t_0}
\end{align}
where $W_{t_0}$ is a zero-mean noise matrix and $\|Z_{t_0}\|_F \leq O(\delta)$. The matrix $Z_{t_0}$ denotes the difference between $\mathbb{E}[x_{t_0 + 2\gamma t + 1} \mid x_{t_0 + 2\gamma t}]$ and $\mathbb{E}[x_{t_0 + 2\gamma t + 1} \mid x_{t_0 + 2\gamma t}, x_{t_0 + 2 \gamma (t + 1)}]$, which are close since $\|A^n\|_2 = O(-\exp(n))$. We define the quantum unitary $U_{t_0}$ as 
\begin{align}
U_{t_0} |0 \rangle := \int_{W} \sqrt{f_{t_0}(W)} |\psi_{t_0 + 1}, \ldots, \psi_{t_0 + 2n \gamma - 1}\rangle  X_{t_0}^{-1}  \mathrm{d}W
\end{align}
where $f_{t_0}(W)$ is the pdf of random matrix $W_{t_0} X_{t_0}^{-1}$. Then we can use the quantum samples collected at steps $t_0 + 1, \ldots, t_0 + 2n \gamma - 1$ as the return of query to $U_{t_0}$ (or $U_{t_0}^{-1}$). Note that the mean of the random variable encoded by $U_{t_0}$ is $O(\delta)$-close to $A$ in Frobenius norm according to (\ref{eqn:ax0}). However, the distribution encoded in $U_{t_0}$ are different for different $t_0$ since $X_{t_0}$ are different. The lower bound presented in the previous section shows that this methods cannot achieve a desired quantum speed-up since the oracle $U_{t_0}$ can only be queried once for each $t_0$.

\subsection{Lower Bounds for $m\ge 1$}
\label{sec:lower-bound-po}
Given a boolean string $|x|\in \{0,1\}^n$ and $k\in [n]$, the task of distinguishing $|x|=k$ and $|x|=k+1$ or $|x|=k-1$ can be reduced to estimating $\frac{|x|}{n}$ to within $\frac{1}{n}$ additive error, which can be regarded as a mean estimation problem. Therefore, the query complexity lower bound for the first problem is also a lower bound for the second problem. As a result, we first prove the query complexity lower bound of the first problem given non-identical oracles.

We use the same quantum query algorithm model in \sec{lower-bound-sg}, where the algorithm pre-determines $U_0,\ldots,U_T$ and needs to distinguish the cases between $|x|=k$ and $|x|=k+1$ or $k-1$ for any $1\le k < n$.
\begin{lemma}
    \label{lem:counting-non-identical}
    Given a sequence of oracles $O_{x_1},\ldots,O_{x_T}$ encoding boolean strings $x_1,\ldots,x_T$ in $\{0,1\}^n$, suppose all strings have the same Hamming weight $w$ and the algorithm can query each oracle at most $m$ times in turn. For any $1\le k<n$ and $m=O(\sqrt{n})$, any quantum algorithm needs $\Omega(\frac{n}{m})$ queries in total to distinguish between $w=k$ and $w=k-1$ or $k+1$ with high probability.
\end{lemma}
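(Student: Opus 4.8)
The plan is to prove the bound by the generalized adversary method \cite{hoyer2007negative}, but with a time-dependent sharpening of the progress bound in the spirit of Boyer et al.\ \cite{boyer1998tight}. Working directly with the flip oracles, I would take as hard inputs the tuples $\mathbf{x}=(x_1,\ldots,x_T)$ with every $|x_i|=k$ and the tuples $\mathbf{y}=(y_1,\ldots,y_T)$ with every $|y_i|=k+1$ (the case $w=k-1$ is symmetric), couple each $\mathbf{x}$ with the neighbours $\mathbf{y}$ obtained by setting one extra bit in each coordinate through a symmetric nonnegative weight matrix $\Gamma_{\mathbf{x}\mathbf{y}}$, and track the progress
\begin{align}
    S_t=\sum_{\mathbf{x},\mathbf{y}}\Gamma_{\mathbf{x}\mathbf{y}}\,\langle\psi^{(t)}_{\mathbf{x}}|\psi^{(t)}_{\mathbf{y}}\rangle .
\end{align}
Since all inputs share the initial state, $S_0=\sum_{\mathbf{x},\mathbf{y}}\Gamma_{\mathbf{x}\mathbf{y}}$, whereas a distinguisher succeeding with high probability forces the final progress to satisfy $S_{\mathrm{fin}}\le(1-\Omega(1))S_0$. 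It therefore suffices to upper bound the per-step decrement $S_{t-1}-S_t$ and to argue that accumulating the required total decrease is expensive.

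Because the algorithm's unitaries cancel in each inner product, the decrement is
\begin{align}
    S_{t-1}-S_t=\sum_{\mathbf{x},\mathbf{y}}\Gamma_{\mathbf{x}\mathbf{y}}\,\langle\psi^{(t-1)}_{\mathbf{x}}|\bigl(I-(O^{(t)}_{\mathbf{x}})^{\dagger}O^{(t)}_{\mathbf{y}}\bigr)|\psi^{(t-1)}_{\mathbf{y}}\rangle ,
\end{align}
and for flip oracles $I-(O^{(t)}_{\mathbf{x}})^{\dagger}O^{(t)}_{\mathbf{y}}$ is supported on the single index at which the coordinate $i(t)$ queried at step $t$ disagrees, so the decrement is controlled by the amplitude the states place on those distinguished indices. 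The plain adversary method bounds this by a quantity uniform in $t$, of order $S_0/\sqrt n$. The crux is to sharpen this to $O(\tau S_0/n)$, where $\tau$ is the number of times the \emph{currently queried} oracle has been used so far, by exploiting the connection between the states $|\psi^{(s)}\rangle$ at different times $s$ in the manner of Boyer et al.: intuitively, the state cannot have concentrated weight on a coordinate's distinguished indices before that coordinate has been queried, since averaging $\Gamma$ over the choice of the distinguished bit leaves all unset positions of each coordinate symmetric, and each subsequent query to that coordinate can build up this weight only gradually. This makes the decrement small precisely during the early queries to every freshly introduced oracle.

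The $m$-query-per-oracle constraint then finishes the argument. Since each oracle is used at most $m=O(\sqrt n)$ times, which keeps $\tau\le m$ inside the regime where the refined bound applies, the queries to a single oracle contribute at most $\sum_{\tau=1}^{m}O(\tau\,S_0/n)=O(m^2 S_0/n)$ to the total decrease of $S$; summing over the $T$ oracles gives $S_0-S_{\mathrm{fin}}=O(Tm^2 S_0/n)$. Forcing this to be $\Omega(S_0)$ yields $T=\Omega(n/m^2)$, so the total number of queries is $mT=\Omega(n/m)$, as claimed.

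I expect the main obstacle to be making the per-oracle sharpening rigorous, namely that the weight on coordinate $i$'s distinguished position is governed by the number of queries to oracle $i$ rather than by the global time $t$. The intervening unitaries can route amplitude onto any index, so the estimate can only hold in an averaged sense over the adversary weights, i.e.\ over which unset position of each coordinate is the distinguished bit; this is what guarantees that weight deposited before an oracle is queried is spread uniformly and contributes only $O(1/n)$. Ruling out that the contributions produced by different oracles add coherently---the loss that makes a naive hybrid argument give only $\Omega(\sqrt n)$ total queries here---is the delicate point, and is exactly where the symmetric choice of $\Gamma$ must be combined with the time-connection estimate; handling the ``all coordinates have the same weight'' promise, which prevents the cleaner single-coordinate construction, is an additional complication to absorb into the choice of $\Gamma$.
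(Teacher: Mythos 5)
Your overall strategy coincides with the paper's (an adversary-style progress sum over single-bit plantings, sharpened to a per-query decrement of order $\tau/n$ in the spirit of Boyer et al.~\cite{boyer1998tight}, accumulating to $O(m^2/n)$ per oracle and forcing $T=\Omega(n/m^2)$), but the step you defer as the ``main obstacle'' is precisely the crux, and your global formulation over full tuples $(x_1,\ldots,x_T)$ makes it harder than necessary. The paper never maintains one weighted sum over tuples. It argues block by block: at the start of each $m$-query block it holds a single actual pair of states $|\psi_k^{(t)}\rangle,|\psi_{k+1}^{(t)}\rangle$, fixes one weight-$k$ string $s$ for the block, and considers only the $n-k$ continuations $s^{(i)}$ with the extra bit planted at $i$. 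The missing device that makes the time-dependent sharpening rigorous is an auxiliary hybrid state: evolve the weight-$(k+1)$ run's state under the \emph{wrong} oracle $O_s$, yielding $|\overline{\psi}_k^{(t+l)}\rangle$ with amplitudes $\overline{\alpha}_i^{(t+l)}$ on the planted indices, and show by telescoping that $\||\overline{\psi}_k^{(t+l)}\rangle-|\psi_{k+1,i}^{(t+l)}\rangle\|\le 2\sum_{j<l}|\overline{\alpha}_i^{(t+j)}|$. This replaces the $i$-dependent amplitudes $\beta_{i,i}$ --- which come from $n-k$ \emph{different} states, so $\sum_i|\beta_{i,i}|^2$ can be as large as $n-k$, which is exactly why the plain adversary bound stalls at $O(\sqrt{n})$ per step --- by amplitudes of one common state satisfying $\sum_i|\overline{\alpha}_i|^2\le 1$; Cauchy--Schwarz then gives $|S^{(l)}-S^{(l+1)}|\le 4(l+1)$ unconditionally, with no appeal to weights being symmetric or to amplitude being ``spread uniformly'' before an oracle is queried. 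Your intuition that the estimate ``can only hold in an averaged sense over the adversary weights'' is therefore not how the paper's bound works: the $O(l+1)$ growth holds however the intervening unitaries route amplitude.

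The cross-oracle coherence worry you flag is dissolved not by a clever choice of $\Gamma$ but by an adaptive pigeonhole-and-restart that your outline lacks. Since the summed block decrement is at most $2m(m+1)$ over $n-k$ plantings, some $i_0$ has actual inner-product decrement at most $\frac{2m(m+1)}{n-k}$; the adversary commits to $s^{(i_0)}$ as that block's true weight-$(k+1)$ string, takes the resulting pair of actual states as the new anchor, and re-plants afresh in the next block. The hard instance is thus built adaptively against the algorithm, the per-block losses add numerically (not coherently) to $\frac{2m(m+1)T}{n-k}$, and keeping the final overlap $\ge\frac12$ gives $T=\Omega\bigl(\frac{n-k}{m^2}\bigr)$; the symmetric $k-1$ argument and taking the maximum yield $\Omega(\frac{n}{m})$ total queries. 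This construction also disposes of your concern about the equal-Hamming-weight promise, since every string in one sequence has weight $k$ and every string in the other has weight $k+1$ by design. As it stands, your proposal correctly predicts the quantitative targets but leaves the central estimate unproven, and proving it in your tuple-weighted formulation (where the decrement at step $t$ involves exponentially many distinct pair states) is substantially harder than the paper's block-local hybrid argument; without the auxiliary-state bound and the per-block restart, or a rigorous substitute, the outline does not yet constitute a proof.
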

\begin{proof}
    See \append{counting-non-identical}.
\end{proof}

Now we give a sample complexity lower bound of the quantum non-identical mean estimation problem with repetition parameter $m$.
\begin{theorem}
    \label{thm:mean-estimation-lower-1}
   Suppose all random variables in \tsk{mean-estimation} are Bernoulli random variables with mean $\mu \in(0,1)$ such that $\epsilon \le \mu(1-\mu)$ and $\epsilon = O(\frac{1}{m^2})$. It requires $T = \Omega(\frac{1}{\epsilon m^2})$ if there exists a quantum algorithm which queries each random variable at most $m$ times in turn solves this problem. Any such quantum query algorithm needs $mT = \Omega(\frac{1}{\epsilon m})$ quantum experiments in total.
\end{theorem}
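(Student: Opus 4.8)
The plan is to reduce the quantum non-identical mean estimation problem for Bernoulli random variables to the distinguishing problem analyzed in \lem{counting-non-identical}, and then invoke that lemma's query lower bound directly. The reduction runs in the same direction as the one already used to motivate \lem{counting-non-identical}: estimating the mean $\mu$ of a Bernoulli random variable to additive accuracy $\epsilon$ is at least as hard as distinguishing two nearby values of the mean that differ by roughly $2\epsilon$.

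First I would set up the hard instance. Fix $n$ to be a parameter I will choose, and identify the Bernoulli random variable of mean $w/n$ with a Hamming-weight-$w$ string $x\in\{0,1\}^n$ via the standard encoding in Eq.~\eqn{flip-to-state}, so that one query to the flip oracle $O_x$ implements one query to the quantum random variable $O_X$. I would then choose $w=k$ versus $w=k\pm 1$ so that the two corresponding means $\mu=k/n$ and $\mu'=(k\pm1)/n$ differ by exactly $1/n$. Choosing $n=\Theta(1/\epsilon)$ makes this gap $\Theta(\epsilon)$, and selecting $k$ with $k/n$ bounded away from $0$ and $1$ (i.e. $k=\Theta(n)$) ensures $\mu(1-\mu)=\Theta(1)$, which is consistent with the regime $\epsilon\le\mu(1-\mu)$ in the statement. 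An algorithm that estimates the mean to additive error $\epsilon$ can, by comparing its output to the midpoint $(\mu+\mu')/2$, distinguish $w=k$ from $w=k\pm1$ with the same success probability, since the two means are separated by $1/n=\Theta(\epsilon)\gg$ (a suitable constant times) $\epsilon$ once constants are fixed appropriately.

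Next I would transport the non-identity and repetition structure across the reduction. \tsk{mean-estimation} gives the algorithm $T$ quantum random variables, each queryable at most $m$ times, all with the same mean $\mu$; under the encoding these become $T$ oracles $O_{x_1},\dots,O_{x_T}$ for strings of the same common Hamming weight $w$, each queried at most $m$ times in turn. This is precisely the setting of \lem{counting-non-identical}. Applying that lemma with $n=\Theta(1/\epsilon)$ (and the hypothesis $m=O(\sqrt{n})=O(1/\sqrt{\epsilon})$, which is exactly the stated assumption $\epsilon=O(1/m^2)$) yields that distinguishing $w=k$ from $w=k\pm1$ requires $\Omega(n/m)$ total queries. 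Since each of the $T$ random variables is queried at most $m$ times, the total number of quantum experiments is at most $mT$, so $mT=\Omega(n/m)=\Omega(1/(\epsilon m))$, giving $T=\Omega(1/(\epsilon m^2))$ and the total sample complexity $mT=\Omega(1/(\epsilon m))$, as claimed.

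The main obstacle is bookkeeping the constants so that all the regime conditions in the statement are simultaneously satisfiable and consistent: I must choose $n$, $k$, and the accuracy-scaling so that (i) $\mu\in(0,1)$ with $\mu(1-\mu)=\Theta(1)$, (ii) the induced additive gap $1/n$ is a constant multiple of $\epsilon$ large enough that an $\epsilon$-accurate estimator genuinely separates the two hypotheses, (iii) the constraint $\epsilon\le\mu(1-\mu)$ holds, and (iv) the lemma's hypothesis $m=O(\sqrt n)$ matches the stated $\epsilon=O(1/m^2)$. A subtle point is that \lem{counting-non-identical} fixes a \emph{common} Hamming weight across all $T$ oracles and lets only the garbage/orthogonal behavior differ, which is exactly what the encoding produces when all $\mu_i$ are equal; I would note that the case of slightly differing $\mu_i$ (allowed by the $\delta$ in \tsk{mean-estimation}) only makes the problem harder, so the lower bound is unaffected. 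Everything else is a direct citation of \lem{counting-non-identical} and the reduction of Eq.~\eqn{flip-to-state}, so no genuinely new argument beyond careful parameter matching is needed.
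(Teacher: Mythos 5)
Your proposal follows essentially the same route as the paper's proof: encode the Bernoulli variable as a Hamming-weight-$w$ string via Eq.~\eqn{flip-to-state} with $n=\Theta(1/\epsilon)$, reduce $\epsilon$-accurate mean estimation to distinguishing $w=k$ from $w=k\pm1$, observe that the $T$ oracles queried $m$ times each are exactly the setting of \lem{counting-non-identical}, and conclude $mT=\Omega(n/m)=\Omega(\frac{1}{\epsilon m})$. Your handling of the reduction is in fact slightly more careful than the paper's (you note the separation between the two means must exceed a suitable constant times $\epsilon$, a constant-factor point the paper glosses over), and your remark that equal means $\mu_i$ lie within the $\delta$-promise is correct.

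The one genuine slip is your choice of $k$: you fix $k=\Theta(n)$ so that $\mu(1-\mu)=\Theta(1)$, but the theorem asserts the lower bound for \emph{every} $\mu\in(0,1)$ with $\epsilon\le\mu(1-\mu)$, including, say, $\mu=\sqrt{\epsilon}$; a hard instance with mean $\Theta(1)$ does not lower-bound the problem at such a $\mu$. The paper instead takes $k=\mu n$ and uses the hypothesis $\epsilon\le\mu(1-\mu)$ precisely to guarantee $1\le k\le n-1$ (since $k=\mu/\epsilon\ge \frac{1}{1-\mu}\ge 1$ and $n-k=(1-\mu)/\epsilon\ge\frac{1}{\mu}\ge 1$), after which \lem{counting-non-identical} gives $\Omega(n/m)$ for \emph{any} $1\le k<n$ via $\Omega(\max(\frac{k}{m},\frac{n-k}{m}))=\Omega(\frac{n}{m})$. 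So the fix is immediate — replace your $k=\Theta(n)$ by $k=\lfloor\mu n\rfloor$ and cite the lemma exactly as you already do — but as written your argument only establishes the theorem in the sub-regime $\mu=\Theta(1)$.
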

\begin{proof}
    Let $n=\frac{1}{\epsilon}$ and $k=\mu n$. Since $\epsilon\le \mu(1-\mu)$, we have $1 \le  k \le  n-1$. Given a boolean string $|x|\in \{0,1\}^n$, the task of distinguishing $|x|=k$ and $|x|=k+1$ or $|x|=k-1$ can be reduced to estimating $\frac{|x|}{n}$ to within $\frac{1}{n}$ additive error. The latter problem can be regarded as estimating the mean of a Bernoulli random variable $X$ to within additive error $\epsilon=\frac{1}{n}$. Since one query to $O_X$ can be implemented by one query to $O_x$, the query complexity lower bound for the first problem is also a lower bound for the second problem. From $\epsilon=O(\frac{1}{m^2})$, we have $m=O(\frac{1}{\sqrt{\epsilon}})=O(\sqrt{n})$. Therefore, by \lem{counting-non-identical}, any quantum algorithm solving the quantum non-identical mean estimation problem with repetition parameter $m$ needs $\Omega(\frac{1}{\epsilon m})$ quantum experiments in total.
\end{proof}

\appendix

\section{Proof of the Upper Bound}

\subsection{Proof supplement of \thm{mean-bounded}}\label{append:mean-bounded}
In this section, we prove that \alg{bounded} outputs a mean estimation $\tilde{\mu}$ with additive error $O(\epsilon)$ with probability at least $2/3$.

Let $\epsilon'' = \frac{\epsilon}{H-L}$. By \lem{amplitude_estimation}, there exists a quantum circuit $C$ consisting of $M=O(\frac{1}{\epsilon''}) = O(\frac{H-L}{\epsilon})$ calls to controlled $S$ and $S^{\dagger}$ such that the measurement outcome of $C\ket{\mathbf{0}}$, denoted by $\tilde{p}$, satisfies
\begin{align}
\label{eq:proper-p}
\left|\tilde{p}- \frac{q^2}{2q^2-2q+1}\right| & \le \frac{2\pi q(1-q)}{M(2q^2-2q+1)}+\frac{\pi^2}{M^2}\\ 
& \le \frac{4\pi q(1-q)}{M}+\frac{\pi^2}{M^2} \\
& = O(q(1-q)\epsilon'' + \epsilon''^2) \\
& = O(q(1-q)\epsilon'')
\end{align}
for sufficiently small $\epsilon''=O(q(1-q))$, and measuring $C\ket{\mathbf{0}}$ gives such $y$ with probability at least $\frac{8}{\pi^2}$.

Replacing all the controlled $S$ and $S^{\dagger}$ with controlled $S_i$ and $S_i^{\dagger}$ gives a quantum circuit $C'$. Note for any two unitary $U,V$, we have \begin{align}
  \|\ket{0}\bra{0}\otimes I+ \ket{1}\bra{1}\otimes U-(\ket{0}\bra{0}\otimes I+ \ket{1}\bra{1}\otimes V)\| \le \|U-V\|,
\end{align} 
and \begin{align}
  \|\ket{\phi}\bra{\phi}-\ket{\phi_i}\bra{\phi_i}\| & \le \|\ket{\phi}\bra{\phi}-\ket{\phi_i}\bra{\phi}\|+ \|\ket{\phi_i}\bra{\phi}- \ket{\phi}\bra{\phi}\|\\
  & \le 2\|\ket{\phi}-\ket{\phi_i}\|\\
  & \le 2\sqrt{\Bigl(2\sqrt{2}\frac{\delta}{H-L}\Bigr)^2+\Bigl(2\sqrt{2}\frac{\delta}{H-L}\Bigr)^2} = O\left(\frac{\delta}{H-L}\right), \label{eqn:delta_1}
\end{align}
where the last inequality holds since $q_i$ is $\frac{\delta}{H-L}$-close to $q$ and we have
\begin{align}
\Big|\dv{x}(\frac{x}{\sqrt{2x^2-2x+1}})\Big|\le 2\sqrt{2},\qquad \Big|\dv{x}(\frac{1-x}{\sqrt{2x^2-2x+1}})\Big|\le 2\sqrt{2}
\end{align}
for all $x\in [0,1]$.
Therefore, by \lem{approx}, it holds that
\begin{align}
 \|C-C'\|
& \leq M \max_{i \in [T]} \|I-2S\ket{\mathbf{0}}\bra{\mathbf{0}}S^{\dagger}-(I-2S_i\ket{\mathbf{0}}\bra{\mathbf{0}}S_i^{\dagger})\| \\
& = 2M\max_{i\in [T]} \|S\ket{\mathbf{0}}\bra{\mathbf{0}}S^{\dagger}-S_i\ket{\mathbf{0}}\bra{\mathbf{0}}S_i^{\dagger}\|\\
& \leq 2M\left(2\epsilon_i+2\sqrt{\epsilon_i(1-\epsilon_i)}+\max_{i\in[T]}\|\ket{\phi}\bra{\phi}-\ket{\phi_i}\bra{\phi_i}\|\right) \\
& = O\left(M(\sqrt{\epsilon'}+\delta)\right)=  O\left(\frac{H-L}{\epsilon}\frac{\epsilon}{H-L}\right)\label{eqn:delta_2}\\
& = O(1).
\end{align}
where the third line uses Eq.~\eq{Si} and Eq.~\eq{S}.
Hence, $||C\ket{\mathbf{0}} - C'\ket{\mathbf{0}}|| = O(1)$ and the measurement of $C'\ket{\mathbf{0}}$ gives $\tilde{p}$ satisfying Eq.~\eq{proper-p} with probability at least $\frac{8}{\pi^2} - O(1)$. By adjusting the constant in $\epsilon'=O(\frac{\epsilon^2}{(H-L)^2})$, we can make the success probability be at least $\frac{2}{3}$.

Let $\tilde{q} = \frac{\tilde{p}-\sqrt{\tilde{p}(1-\tilde{p})}}{2\tilde{p}-1}$ be the estimation of $q$ and $p=\frac{q^2}{2q^2-2q+1}$. By Taylor's theorem 
\begin{align}
  \tilde{q}&= q+\frac{(q^2+(q-1)^2)^2}{2q(1-q)}(\tilde{p}-p)+O((\tilde{p}-p)^2)\\
  &= q+O\Bigl(\frac{q(1-q)\epsilon''}{q(1-q)}\Bigr)+O((q(1-q)\epsilon'')^2) \\
  &= q+O(\epsilon''),
\end{align}
where the second equality is obtained by Eq.~\eq{proper-p}.
Let $\tilde{q}(H-L)+L$ be the final output of the algorithm, then we have \begin{align}
  |\tilde{\mu} - \mu| = |\tilde{q}(H-L)+L - \mu| = |(\tilde{q}-q)(H-L)| = O(\epsilon)
\end{align}
with probability at least $\frac{2}{3}$.

\subsection{Proof of \lem{mean-sub-gaussian}}\label{append:mean-sub-gaussian}
In this section, We give a detailed proof for \lem{mean-sub-gaussian}.
\begin{lemma}[\lem{mean-sub-gaussian}]
Suppose all random variables $X_1,\ldots, X_T$ in \tsk{mean-estimation} are sub-Gaussian with parameter $K$ and their mean satisfies $|\mu_i| \le R$, $R \leq K$. Let $m,R,K,\epsilon$, $\delta$ in \alg{bounded_Gaussian} satisfies that $m =\Omega\Bigl(\log\Big(\frac{K\sqrt{\log(\frac{K}{\epsilon})}}{\epsilon}\Big)\Bigr)$, $\epsilon=O(K)$, and $\delta < \epsilon/4$. \alg{bounded_Gaussian} solves  \tsk{mean-estimation} if $T = \Omega(\frac{K\sqrt{\log (\frac{K}{\epsilon})}}{\epsilon})$, using $O\Bigl(\frac{K\sqrt{\log(\frac{K}{\epsilon})}}{\epsilon}\log\Big(\frac{K\sqrt{\log(\frac{K}{\epsilon})}}{\epsilon}\Big)\Bigr)$ quantum experiments in total.
\end{lemma}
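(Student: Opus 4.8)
The plan is to reduce the mean-bounded sub-Gaussian case to the bounded case already settled in \thm{mean-bounded}, with the truncation window $[L,H]$ and the oracle $O_{\tilde X_i}$ of \alg{bounded_Gaussian} as the bridge. First I would check that $O_{\tilde X_i}$ is a legitimate quantum random variable in the sense of \de{query-access}: it is built from a single call to $O_{X_i}$ followed by a reversible conditional copy that writes $x$ into the output register precisely when $x\in[L,H]$ and leaves it at $\ket{\mathbf 0}$ otherwise, so reading the last register realizes
\[
\tilde X_i=\begin{cases} X_i & X_i\in[L,H],\\ 0 & \text{otherwise},\end{cases}
\]
with the states $\ket{\psi_x^{(i)}}\ket{x}$ folded into the garbage. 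Since $0\in[L,H]$, we have $\tilde X_i\in[L,H]$, so $\{O_{\tilde X_i}\}$ is a valid instance of \tsk{mean-estimation} with bounded support, and each use of $O_{\tilde X_i}$ costs exactly one use of $O_{X_i}$, preserving the per-oracle budget $m$.

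The quantitative heart is to bound the bias introduced by truncation, $\bigl|\E[\tilde X_i]-\mu_i\bigr|$, and show it is at most $\epsilon/4$. Because $|\mu_i|\le R$ while the boundaries sit at $\pm(R+\Delta)$, every $x\notin[L,H]$ obeys $|x-\mu_i|\ge\Delta$, so
\[
\bigl|\E[\tilde X_i]-\mu_i\bigr|=\bigl|\E\bigl[X_i\,\mathbbm{1}[X_i\notin[L,H]]\bigr]\bigr|\le \E\bigl[|X_i-\mu_i|\,\mathbbm{1}[|X_i-\mu_i|\ge\Delta]\bigr]+R\,\Pr\bigl[|X_i-\mu_i|\ge\Delta\bigr].
\]
I would bound the two terms separately through the sub-Gaussian tail \eq{def-sub-gaussian}. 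The branch $\Delta\ge K\sqrt{2\log(32R/\epsilon)}$ makes $R\Pr[\,\cdot\,]\le\epsilon/16$, while the branch $\Delta\ge K\sqrt{4\log(128K/\epsilon)}$, combined with the standard tail-integral estimate $\int_\Delta^\infty e^{-t^2/(2K^2)}\,\mathrm{d}t\le \tfrac{K^2}{\Delta}e^{-\Delta^2/(2K^2)}$, drives the truncated first-moment term below $\epsilon/16$ as well: the squared decay $e^{-\Delta^2/(2K^2)}\le(\epsilon/128K)^2$ dominates the $O\!\bigl(K\sqrt{\log(K/\epsilon)}\bigr)$ prefactor precisely because $\epsilon=O(K)$ keeps $(\epsilon/K)\sqrt{\log(K/\epsilon)}$ bounded. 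Hence the bias is at most $\epsilon/8$, and together with $|\mu_i-\mu|\le\delta<\epsilon/4$ every truncated mean lies within $\epsilon/2$ of the common target $\mu$, consistent with the mean-difference parameter $\delta=\epsilon/2$ passed into \alg{bounded}.

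It then remains to invoke \thm{mean-bounded} on $\{O_{\tilde X_i}\}$ and translate its guarantees under the substitution $H-L=2(R+\Delta)=\Theta\!\bigl(K\sqrt{\log(K/\epsilon)}\bigr)$, which uses $R\le K\le\Delta$. I would verify $m=\Omega(\log((H-L)/\epsilon))=\Omega\!\bigl(\log(K\sqrt{\log(K/\epsilon)}/\epsilon)\bigr)$ directly, and check the condition $\epsilon=O\bigl((\mu-L)(H-\mu)/(H-L)\bigr)$ by noting $\mu-L,\,H-\mu\ge\Delta-\delta=\Theta(\Delta)$, so the right-hand side is $\Theta(\Delta)=\Theta(K\sqrt{\log(K/\epsilon)})$, which dominates $\epsilon=O(K)$. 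With the hypotheses met, \thm{mean-bounded} returns an $O(\epsilon)$-additive estimate of $\mu$ whenever $T=\Omega((H-L)/\epsilon)=\Omega(K\sqrt{\log(K/\epsilon)}/\epsilon)$, using $O\bigl(\tfrac{H-L}{\epsilon}\log\tfrac{H-L}{\epsilon}\bigr)$ experiments, which is exactly the stated complexity after substituting $H-L$. The main obstacle I anticipate is the truncation-bias estimate: it is the only place where both constants inside the $\max$ defining $\Delta$ are used, and one must take care that the first-moment tail, not merely the tail probability, is suppressed below $\epsilon$—this is exactly why $\Delta$ must scale like $K\sqrt{\log(K/\epsilon)}$ rather than the naive $K\sqrt{\log(R/\epsilon)}$.
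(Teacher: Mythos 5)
Your proposal is correct and follows essentially the same route as the paper: reduce to the bounded case via the truncation oracle $O_{\tilde X_i}$, bound the truncation bias using the sub-Gaussian tail and the standard tail-integral estimate (invoking both branches of the $\max$ defining $\Delta$), combine with $\delta$ to stay within the mean-difference tolerance $\epsilon/2$ passed to \alg{bounded}, and then apply \thm{mean-bounded} with $H-L=\Theta\bigl(K\sqrt{\log(K/\epsilon)}\bigr)$. The only cosmetic difference is that you bound the bias through the centered decomposition $\E\bigl[|X_i-\mu_i|\,\mathbbm{1}[|X_i-\mu_i|\ge\Delta]\bigr]+R\,\Pr\bigl[|X_i-\mu_i|\ge\Delta\bigr]$, whereas the paper integrates the raw tails $\int_{-\infty}^{L}tp_i(t)\,\d t$ and $\int_{H}^{\infty}tp_i(t)\,\d t$ by parts --- an equivalent calculation.
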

\begin{proof}
For any $i \in [T]$, $O_{\tilde{X}_i}$ generates a quantum random variable truncated by $X_i$. Let $\tilde{X}_i$ be the truncated version of $X_i$ such that \begin{align}
    \tilde{X}_i = \begin{cases} X_i & X_i\in [L,H] \\ 0 & \text{otherwise}.\end{cases}
  \end{align}
$O_{\tilde{X}_i}$ can be seen as a quantum random variable generating $\tilde{X}_i$.

Now we give a bound on the difference between the mean of $X_i$ and $\tilde{X}_i$.  We first present a well-known tail bound of Gaussian random variables. Since $\epsilon = O(K)$, it holds that $\Delta \ge K\sqrt{4\log(\frac{128K}{\epsilon})}\ge K$. For any $x>0$, we have
  \begin{align}
    \label{eq:tail-bound}
    \int_{x}^{+\infty} \exp\Bigl(-\frac{t^2}{2K^2}\Bigr)\, \d t \le \int_{x}^{+\infty} \frac{t}{x}\exp\Bigl(-\frac{t^2}{2K^2}\Bigr)\, \d t 
    =  \frac{K^2}{x}\exp\Bigl(-\frac{x^2}{2K^2}\Bigr).
  \end{align}

  For all $i\in [T]$, we have
  \begin{align}
    &|\mathbb{E}[X_i]-\mathbb{E}[\tilde{X}_i]|\nonumber\\ 
    \le & \Big|\int_{-\infty}^{L}tp_i(t)\, \d t| + |\int_{H}^{\infty}tp_i(t)\, \d t\Big|\\
    = & \Bigl|L\mathbb{P}[X_i\le L] - \int_{-\infty}^{L}\mathbb{P}[X_i\le t]\, \d t\Bigr|+ \Bigl|H\mathbb{P}[X_i\ge H] + \int_{H}^{\infty}\mathbb{P}[X_i\ge t]\, \d t\Bigr| && \text{(by integration by parts)}\\
    \le & 2(L+\frac{K^2}{\mu_i-L})\exp\Bigl(-\frac{(L-\mu_i)^2}{2K^2}\Bigr) + 2(H+\frac{K^2}{H-\mu_i})\exp\Bigl(-\frac{(H-\mu_i)^2}{2K^2}\Bigr) && \text{(by Eq.~\eq{def-sub-gaussian} and Eq.~\eq{tail-bound})} \\
    \le & 4(\Delta + R + \frac{K^2}{\Delta})\exp\Bigl(-\frac{\Delta^2}{2K^2}\Bigr) && \text{(by }|\mu_i|\le R)\\
    \le & 4(2\Delta +R)\exp\Bigl(-\frac{\Delta^2}{2K^2}\Bigr) && \text{(by }\Delta \ge K)\\
    \le & 4\Bigl(2K\sqrt{4\log\Bigl(\frac{128K}{\epsilon}\Bigr)}\Bigl(\frac{\epsilon}{128K}\Bigr)^2+R\frac{\epsilon}{32R}\Bigr) \label{eq:xexp-decrease}
    =  \frac{\epsilon}{4}\sqrt{\log\Bigl(\frac{128K}{\epsilon}\Bigr)}\frac{\epsilon}{128K}+\frac{\epsilon}{8} \\
    \le & \frac{\epsilon}{4}, && \text{(by } \sqrt{\log(x)}\le x)
  \end{align}
  where Eq.~\eq{xexp-decrease} holds since $x\exp(-\frac{x^2}{2K^2})$ decreases for $x \ge K$. Then we have \begin{align}
    |\mathbb{E}[\tilde{X}_i]-\mu| \le |\mathbb{E}[\tilde{X}_i]-\mu_i|+|\mu_i-\mu| \le \delta + \frac{\epsilon}{4}\le \frac{\epsilon}{2}.
  \end{align}

  Since $\tilde{X}_i$ are all bounded random variable in $[L,H]$ with $|\mathbb{E}[\tilde{X}_i] - \mu|\leq \epsilon$ and $m=\Omega\Bigl(\log\Big(\frac{K\sqrt{\log(\frac{K}{\epsilon})}}{\epsilon}\Big)\Bigr)$, $ \epsilon=O(K)=O(\Delta)=O\big(\frac{(R+\Delta-\mu)(R+\Delta+\mu)}{R+\Delta}\big)$,  by \thm{mean-bounded} we can conclude that $\tilde{\mu}$ is an estimation to $\mu$ to within additive error $O(\epsilon)$ with probability at least $\frac{2}{3}$ using $O\Bigl(\frac{K\sqrt{\log(\frac{K}{\epsilon})}}{\epsilon}\log\Big(\frac{K\sqrt{\log(\frac{K}{\epsilon})}}{\epsilon}\Big)\Bigr)$ quantum experiments in total. 
\end{proof}

\section{Proof of the Lower Bound}
\subsection{Proof of \sta{tement}}\label{append:statement}
In this section, we give a detailed proof of \sta{tement}.
\begin{statement}[\sta{tement}]
        Let $O_X$ be any unitary satisfying Eq.~\eqn{def-query}, and $U_0^{\mathrm{low}},\ldots,U_T^{\mathrm{low}}$ be a sequence of quantum circuits satisfying $U_{0}^{\mathrm{low}}=I$ and \begin{align}\label{eqn:recursion-U-low-append}
                U_{t+1}^{\mathrm{low}} = \begin{cases}
                    ((U_t)_{Q_t,W} \otimes I) \cdot (U_{t}^{\mathrm{low}}\otimes (O_X)_{Q_{t+1}}) & \text{if } a_{t+1}=1, \\
                    ((U_t)_{Q_t,W} \otimes I) \cdot (U_{t}^{\mathrm{low}}\otimes I_{Q_{t+1}}) & \text{if } a_{t+1}=-1,
                \end{cases}
            \end{align}
            for all $0\le t < T$. The quantum circuit $U_{t}^{\mathrm{low}}$ can prepare $|\psi_{\mathrm{eff}}^{(t)}\rangle$ after post-selection, namely, \begin{align}
        \label{eqn:def-U-low-append}
    |psi_{\mathrm{eff}}^{(t)}\rangle = \Bigl(I_{W,Q_t}\bigotimes_{i=1}^{t} \langle\phi_{\mathrm{beg}}^{(i)}|_{Q_{i-1}}\Bigr)U_t^{\mathrm{low}}\ket{\mathbf{0}}_{W,Q_0,\ldots,Q_t},
    \end{align} 
    for any $0\le t\le T$.
    \end{statement}
\begin{proof}
We prove this statement by induction on $t$. For $t=0$, we have \begin{align}
        U_{0}^{\mathrm{low}}\ket{\mathbf{0}}_W\ket{\mathbf{0}}_{Q_0} = \ket{\mathbf{0}}_W\ket{\mathbf{0}}_{Q_0} = |\psi_{\mathrm{eff}}^{(0)}\rangle_{Q_0,W},
    \end{align}
    which satisfies Eq.~\eqn{def-U-low-append}. Assume the statement is true for some $t \ge 0$. If $a_{t+1}=1$, we have \begin{align}
         &\Bigl(I_{W,Q_{t+1}}\bigotimes_{i=1}^{t+1} \langle\phi_{\mathrm{beg}}^{(i)}|_{Q_{i-1}}\Bigr)U_{t+1}^{\mathrm{low}}\ket{\mathbf{0}}_W\ket{\mathbf{0}}_{Q_0}\cdots \ket{\mathbf{0}}_{Q_{t+1}} \\
         =& \Bigl(\langle \phi_{\mathrm{beg}}^{(t+1)}|_{Q_t}(U_t)_{Q_t,W}\bigl(I_{W,Q_t}\bigotimes_{i=1}^{t} \langle\phi_{\mathrm{beg}}^{(i)}|_{Q_{i-1}}\bigr)U_{t}^{\mathrm{low}}\ket{\mathbf{0}}_{Q_0}\ket{\mathbf{0}}_W\cdots\ket{\mathbf{0}}_{Q_{t}}\Bigr)|\psi_X\rangle_{Q_{t+1}} && \text{(by Eq.~\eqn{recursion-U-low})} \\
         =&\Bigl(\langle \phi_{\mathrm{beg}}^{(t+1)}|_{Q_t}U_t|\psi^{(t)}_{\mathrm{eff}}\rangle_{Q_t,W}\Bigr)|\phi_{\mathrm{end}}^{(t+1)}\rangle_{Q_{t+1}} && \hspace{-22.5mm} \text{(by Eq.~\eqn{def-U-low-append} and Eq.~\eqn{def-end})} \\
         =&\bigl(\langle \phi_{\mathrm{beg}}^{(t+1)}|_{Q_t}U_t|\phi_{\mathrm{end}}^{(t)}\rangle_{Q_t}|\phi_{\mathrm{W}}^{(t)}\rangle_{W}\bigr)|\phi_{\mathrm{end}}^{(t+1)}\rangle_{Q_{t+1}} && \text{(by Eq.~\eqn{eff-pro})}\\
         =& |\phi_{W}^{(t+1)}\rangle_{W}|\phi_{\mathrm{end}}^{(t+1)}\rangle_{Q_{t+1}} && \text{(by Eq.~\eqn{eff-recursion})}\\
         =&|\psi_{\mathrm{eff}}^{(t+1)}\rangle_{Q_{t+1},W}. && \text{(by Eq.~\eqn{eff-pro})}
    \end{align}
    The proof for $a_{t+1}=-1$ is basically the same except for the state in register $Q_{t+1}$. Therefore, $U_{t}^{\mathrm{low}}$ constructed in Eq.~\eqn{recursion-U-low} satisfies Eq.~\eqn{def-U-low-append} for all $0\le t\le T$.
    
    Expand Eq.~\eqn{recursion-U-low-append}, we have \begin{align}
        U_{t}^{\mathrm{low}} = \prod_{i=0}^{t-1} (U_i)_{Q_{i},W} \bigotimes_{1\le i\le t, a_i=1} (O_X)_{Q_i},
    \end{align}
    which has query depth 1. In conclusion, the statement is true for all $0\le t \le T$. 
\end{proof}

\subsection{Proof of \lem{can-control}}\label{append:can-control}
In this section, we give a detailed proof of \lem{can-control}.
\begin{lemma}[\lem{can-control}]
     For any $T$-query quantum algorithm acting on registers $Q$, $W$, and any finite random variable $X$ on $(\Omega,p)$, if $\dim{\mathcal{H}_Q}>2\dim{\mathcal{H}_W}$, there exists a sequence of quantum random variables  $(\mathcal{H}_Q,O_X^{(1)}), \ldots, (\mathcal{H}_Q,O_X^{(T-1)})$ such that for any $0 \le t < T$ \begin{align}
         |\psi^{(t)}\rangle = |\phi_{\mathrm{beg}}^{(t+1)}\rangle|\phi_W^{(t+1)}\rangle + |\psi^{(t)}_{\perp}\rangle,
     \end{align}
     for some unnormalized state $|\psi^{(t)}_{\perp}\rangle$ orthogonal to $|\phi_{\mathrm{beg}}^{(t+1)}\rangle\otimes\mathcal{H}_W$.
\end{lemma}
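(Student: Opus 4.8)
The plan is to prove the decomposition by induction on $t$, carrying along the stronger invariant that the component of $\ket{\psi^{(t)}}$ inside $\ket{\phi_{\mathrm{beg}}^{(t+1)}}\otimes\mathcal{H}_W$ is exactly the effective contribution $\ket{\phi_{\mathrm{beg}}^{(t+1)}}\ket{\phi_W^{(t+1)}}$, with $\ket{\phi_W^{(t+1)}}$ given by Eq.~\eqn{eff-recursion}. The base case $t=0$ is immediate: $\ket{\psi^{(0)}}=U_0\ket{\mathbf{0}}_Q\ket{\mathbf{0}}_W$, and projecting onto $\ket{\phi_{\mathrm{beg}}^{(1)}}$ reproduces $\ket{\phi_W^{(1)}}=(\langle\phi_{\mathrm{beg}}^{(1)}|\otimes I)U_0\ket{\mathbf{0}}_Q\ket{\mathbf{0}}_W$, so no oracle is needed yet. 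For the inductive step I assume the decomposition at $t$ and apply the next query and fixed unitary:
\begin{align}
\ket{\psi^{(t+1)}}=U_{t+1}(O_X^{(t+1)})^{a_{t+1}}\ket{\psi^{(t)}}=U_{t+1}\ket{\psi_{\mathrm{eff}}^{(t+1)}}+U_{t+1}(O_X^{(t+1)})^{a_{t+1}}\ket{\psi_\perp^{(t)}},
\end{align}
where I used $(O_X^{(t+1)})^{a_{t+1}}\ket{\phi_{\mathrm{beg}}^{(t+1)}}=\ket{\phi_{\mathrm{end}}^{(t+1)}}$ together with Eq.~\eqn{eff-pro}. By Eq.~\eqn{eff-recursion} the first term already supplies the correct clean component $\ket{\phi_{\mathrm{beg}}^{(t+2)}}\ket{\phi_W^{(t+2)}}$, so the whole step reduces to choosing $O_X^{(t+1)}$ so that the second term is orthogonal to $\ket{\phi_{\mathrm{beg}}^{(t+2)}}\otimes\mathcal{H}_W$, i.e.
\begin{align}
\label{eqn:success-flag}
(O_X^{(t+1)})^{a_{t+1}}\ket{\psi_\perp^{(t)}}\in\bigl(U_{t+1}^{\dagger}(\ket{\phi_{\mathrm{beg}}^{(t+2)}}\otimes\mathcal{H}_W)\bigr)^{\perp}.
\end{align}

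The crux is to realize Eq.~\eqn{success-flag} with an $O_X^{(t+1)}$ that is still a legal quantum random variable, i.e.\ whose action is prescribed only on the one-dimensional subspace $\mathrm{span}(\ket{\phi_{\mathrm{beg}}^{(t+1)}})$ (sending it to $\ket{\phi_{\mathrm{end}}^{(t+1)}}$) and is free as any isometry of $V:=\mathrm{span}(\ket{\phi_{\mathrm{beg}}^{(t+1)}})^{\perp}$ onto $V':=\mathrm{span}(\ket{\phi_{\mathrm{end}}^{(t+1)}})^{\perp}$. Since $\ket{\psi_\perp^{(t)}}$ is orthogonal to $\ket{\phi_{\mathrm{beg}}^{(t+1)}}\otimes\mathcal{H}_W$, its $Q$-support lies in $V$, so I take a Schmidt decomposition $\ket{\psi_\perp^{(t)}}=\sum_{j=1}^{r}\sigma_j\ket{e_j}_Q\ket{f_j}_W$ with $\ket{e_j}\in V$ orthonormal and $r\le\dim{\mathcal{H}_W}$. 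Writing $\mathcal{T}=U_{t+1}^{\dagger}(\ket{\phi_{\mathrm{beg}}^{(t+2)}}\otimes\mathcal{H}_W)$, a sufficient condition for Eq.~\eqn{success-flag} is that each image $\ket{e_j'}$ of $\ket{e_j}$ under $(O_X^{(t+1)})^{a_{t+1}}$ satisfies $\ket{e_j'}\perp B_j$, where $B_j:=(I_Q\otimes\langle f_j|)\mathcal{T}\subseteq\mathcal{H}_Q$ is the partial contraction of $\mathcal{T}$ with $\ket{f_j}$; this forces $\ket{e_j'}\ket{f_j}\perp\mathcal{T}$ termwise, hence the whole sum into $\mathcal{T}^{\perp}$. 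I then build the $\ket{e_j'}$ greedily: since $\dim V'=\dim{\mathcal{H}_Q}-1$, $\dim B_j\le\dim{\mathcal{H}_W}$, and at step $j$ only $j-1<\dim{\mathcal{H}_W}$ previously chosen images must be avoided, the subspace $V'\cap B_j^{\perp}\cap\mathrm{span}(\ket{e_1'},\ldots,\ket{e_{j-1}'})^{\perp}$ has dimension at least $\dim{\mathcal{H}_Q}-2\dim{\mathcal{H}_W}>0$ by hypothesis; picking any unit vector there and extending to an isometry $V\to V'$ determines $O_X^{(t+1)}$. The sign bookkeeping for $a_{t+1}=\pm1$ is routine, and in both cases the prescribed action $O_X^{(t+1)}\ket{\mathbf{0}}=\ket{\psi_X}$ is precisely the constraint we imposed, so $O_X^{(t+1)}$ is of the form in \define{query-access}.

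The main obstacle is exactly this controllability step, and it is where $\dim{\mathcal{H}_Q}>2\dim{\mathcal{H}_W}$ is unavoidable: the garbage part $\ket{\psi_\perp^{(t)}}$ can have Schmidt rank up to $\dim{\mathcal{H}_W}$, and each Schmidt vector must simultaneously dodge a $\dim{\mathcal{H}_W}$-dimensional forbidden subspace $B_j$ while staying orthogonal to the images already chosen, so the supply of free $Q$-directions must exceed $2\dim{\mathcal{H}_W}$. I expect the only delicate points to be verifying that termwise orthogonality is genuinely sufficient (so no cancellation across Schmidt terms needs to be engineered) and tracking the $a_{t+1}=\pm1$ cases so that the resulting $O_X^{(t+1)}$ is a bona fide unitary; both are bookkeeping rather than substance. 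This same argument, with $\mathcal{T}$ replaced by $U_T^{\dagger}\,\mathrm{Im}(\Pi_c)$, is what \lem{useful-component} reuses to fix the final oracle $O_X^{(T)}$.
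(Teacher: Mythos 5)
Your proposal is correct and takes essentially the same approach as the paper's proof: induction on $t$, reduction of the inductive step to choosing the oracle so that $(O_X^{(t+1)})^{a_{t+1}}\ket{\psi_\perp^{(t)}}$ lands in $\bigl(U_{t+1}^{\dagger}(\ket{\phi_{\mathrm{beg}}^{(t+2)}}\otimes\mathcal{H}_W)\bigr)^{\perp}$, a Schmidt decomposition of $\ket{\psi_\perp^{(t)}}$ whose $Q$-vectors avoid $\ket{\phi_{\mathrm{beg}}^{(t+1)}}$ by the induction hypothesis, and a greedy termwise choice of orthonormal images enabled by $\dim{\mathcal{H}_Q}>2\dim{\mathcal{H}_W}$. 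Your only (equivalent) variation is contracting the forbidden subspace with each Schmidt vector $\ket{f_j}$ to run the dimension count inside $\mathcal{H}_Q$, where the paper intersects the corresponding subspaces of $\mathcal{H}_Q\otimes\mathcal{H}_W$ directly.
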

\begin{proof}

We prove this lemma by induction on $t$.
    
    We first prove the case for $t=0$. From Eq.~\eqn{eff-recursion}, we have \begin{align}
        |\phi_{\mathrm{W}}^{(1)}\rangle = (\langle\phi_{\mathrm{beg}}^{(1)}|\otimes I)U_{0}\ket{\mathbf{0}}\ket{\mathbf{0}},
    \end{align}
    which means \begin{align}
        (\bra{\phi_{\mathrm{beg}}^{(1)}}\otimes I)|\psi^{(0)}\rangle = (\bra{\phi_{\mathrm{beg}}^{(1)}}\otimes I)U_0\ket{\mathbf{0}}\ket{\mathbf{0}}=|\phi_{\mathrm{W}}^{(1)}\rangle,
    \end{align}
    so Eq.~\eqn{recur-control} holds for $t=0$.
    
    Assuming Eq.~\eqn{recur-control} is true for some $t\ge0$, we then prove the case for $t+1$. Note that \begin{align}
        &(\langle \phi_{\mathrm{beg}}^{(t+2)}|\otimes I)|\psi^{(t+1)}\rangle \\
        &\qquad = (\langle \phi_{\mathrm{beg}}^{(t+2)}|\otimes I)U_{t+1}(O_X^{(t+1)})^{a_{t+1}}|\psi^{(t)}\rangle \\
        &\qquad =(\langle \phi_{\mathrm{beg}}^{(t+2)}|\otimes I)\bigl(U_{t+1}|\phi_{\mathrm{end}}^{(t+1)}\rangle|\phi_W^{(t+1)}\rangle + U_{t+1}(O_X^{(t+1)})^{a_{t+1}}|\psi^{(t)}_{\perp}\rangle\bigr) && \text{(by Eq.~\eqn{recur-control})}\\
        &\qquad =|\psi^{(t+1)}_W\rangle + (\langle \phi_{\mathrm{beg}}^{(t+2)}|\otimes I)U_{t+1}(O_X^{(t+1)})^{a_{t+1}}|\psi^{(t)}_{\perp}\rangle. && \text{(by Eq.~\eqn{eff-recursion})}
    \end{align}
    To make Eq.~\eqn{recur-control} hold for $t+1$, we need to find a unitary operator $O_X^{(t+1)}$ such that \begin{align}
    \label{eqn:success-flag}
    (O_X^{(t+1)})^{a_{t+1}}|\psi^{(t)}_{\perp}\rangle\in (U_{t+1}^{\dagger}(|\phi^{(t+2)}_{\mathrm{beg}}\rangle\otimes \mathcal{H}_W))^{\perp}.
    \end{align}
    Since $\dim{\mathcal{H}_Q}>2\dim{\mathcal{H}_W}>\dim{\mathcal{H}_W}$, the Schmidt decomposition of $|\psi_{\perp}^{(t)}\rangle$ is \begin{align}
|\psi_{\perp}^{(t)}\rangle=\sum_{i=1}^{\dim{\mathcal{H}_W}}\lambda_i |i_Q\rangle_Q|i_W\rangle_W,
    \end{align}
    where $\{|i_Q\rangle\}$ and $\{|i_W\rangle\}$ are two orthonormal set of states. By induction hypothesis, $|\psi^{(t)}_{\perp}\rangle$ is orthogonal to $|\phi_{\mathrm{beg}}^{(t+1)}\rangle\otimes\mathcal{H}_W$, so all $|i_Q\rangle$ are orthogonal to $|\phi_{\mathrm{beg}}^{(t+1)}\rangle$.

    Note that $(O_X^{(t+1)})^{a_{t+1}}|\phi_{\mathrm{beg}}^{(t+1)}\rangle=|\phi_{\mathrm{end}}^{(t+1)}\rangle$ and $(O_X^{(t+1)})^{a_{t+1}}$ is unitary, hence by controlling $O_X^{(t+1)}$, $(O_X^{(t+1)})^{a_{t+1}}|\psi_{\perp}^{(t)}\rangle$ can be \begin{align}
        \label{eqn:schmidt-decompose-after}(O_X^{(t+1)})^{a_{t+1}}|\psi_{\perp}^{(t)}\rangle=\sum_{i=1}^{\dim{\mathcal{H}_W}}\lambda_i |i_{Q}'\rangle_Q|i_W\rangle_W
    \end{align}
    for any orthonormal set of states $\{|i_Q'\rangle\}$ in $|\phi_{\mathrm{end}}^{(t+1)}\rangle^{\perp}$.

    To make Eq.~\eqn{success-flag} hold, we try to construct $|i_Q'\rangle$ successively so that they are in $(U_{t+1}^{\dagger}(|\phi^{(t+1)}_{\mathrm{beg}}\rangle\otimes \mathcal{H}_W))^{\perp}$ and form an orthonormal set of states. We give the construction by induction. Assume we have constructed the first $k-1$ states $(|i'\rangle_Q)_{i=1}^{k-1}$, the possible subspace of $|k'_Q\rangle_Q|k_W\rangle_W$ is \begin{align}
        (\mathrm{span}\{(|i'_Q\rangle)_{i=1}^{k-1},|\phi_{\mathrm{end}}^{(t+1)}\rangle\})^{\perp}\otimes |k_W\rangle_W,
    \end{align}
    which has dimension $\dim{\mathcal{H}_Q}-k$.  Since 
    \begin{align}
        &\dim\bigl(\mathrm{span}\{(|i'_Q\rangle)_{i=1}^{k-1},|\phi_{\mathrm{end}}^{(t+1)}\rangle\})^{\perp}\otimes |k_W\rangle_W\bigr) + \dim \bigl((U_{t+1}^{\dagger}(|\phi^{(t+1)}_{\mathrm{beg}}\rangle\otimes \mathcal{H}_W))^{\perp}\bigr) \\
        &\qquad= \dim{\mathcal{H}_Q}-k + \dim(\mathcal{H}_Q\otimes \mathcal{H}_W) - \dim{\mathcal{H}_W} \\ 
        &\qquad\ge\dim{\mathcal{H}_Q}-\dim{\mathcal{H}_W} + \dim(\mathcal{H}_Q\otimes \mathcal{H}_W) - \dim{\mathcal{H}_W}\\
        &\qquad\geq \dim(\mathcal{H}_Q\otimes \mathcal{H}_W),
    \end{align}
    where the last inequality comes from the assumption $\dim{\mathcal{H}_Q}>2\dim{\mathcal{H}_W}$, we can deduce that the intersection of these two subspaces is non-empty. Hence we can find an normalized state  $|k_Q'\rangle_Q|k_W\rangle_W$ in this intersection space. By induction, we can construct orthonormal states $(|i'_Q\rangle)_{i=1}^{\dim{\mathcal{H}_W}}$ in $|\phi_{\mathrm{end}}^{(t+1)}\rangle^{\perp}$ so that $|i_{Q}'\rangle_Q|i_W\rangle_W\in (U_{t+1}^{\dagger}(|\phi^{(t+1)}_{\mathrm{beg}}\rangle\otimes \mathcal{H}_W))^{\perp}$ for all $i\in[\dim{\mathcal{H}_W}]$ . From Eq.~\eqn{schmidt-decompose-after}, there exists a unitary operator $O_X^{(t+1)}$ such that Eq.~\eqn{success-flag} is true.   
\end{proof}

\subsection{Proof of \thm{mean-estimation-lower-0}}\label{append:lower-0}
In this section, we give a detailed proof of \thm{mean-estimation-lower-0}.
\begin{theorem}[\thm{mean-estimation-lower-0}]
    Suppose all random variables in \tsk{mean-estimation} have variance bounded by $\sigma^2$, and $|\mu|\le R$. Let $\mathcal{A}$ be a quantum query algorithm acting on registers $Q$, $W$ solving the quantum non-identical mean estimation problem defined in \tsk{mean-estimation} with repetition parameter $m=1$ and accuracy $\epsilon/2$. Suppose that $\frac{1}{2}n_Q>n_W+2\log(\frac{2R}{\epsilon})+1$, then it requires $T = \Omega(\frac{\sigma^2}{\epsilon^2})$ for the existence of such an algorithm $\mathcal{A}$, and $\mathcal{A}$ needs $T = \Omega\bigl(\frac{\sigma^2}{\epsilon^2}\bigr)$ quantum experiments.
\end{theorem}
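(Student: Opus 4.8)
The plan is to chain the reductions \thm{low-depth} and \thm{k-parallel-lower} through a scaled-Bernoulli hard instance, using the uncomputation trick to meet the dimension hypothesis of \thm{low-depth}. For the hard instance I take $Y\sim\mathrm{Ber}(1/2)$ and set $X=\sigma(2Y-1)\in\{-\sigma,\sigma\}$, so that $\mathbb{E}[X]=0$ (hence $|\mu|\le R$) and $\mathrm{Var}(X)=\sigma^2$. By \eqn{flip-to-state}, one query to the bit-flip oracle $O_x$ of a string $x$ with $|x|=n/2$, followed by the relabeling $0\mapsto-\sigma$, $1\mapsto\sigma$, realizes a quantum random variable encoding $X$; estimating $\mu$ to additive error $\epsilon/2$ is then exactly estimating $p=|x|/n$ to error $\epsilon/(4\sigma)$. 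Thus any algorithm solving the task with accuracy $\epsilon/2$ yields an approximate-counting algorithm of the same query cost and success probability.

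The obstruction to invoking \thm{low-depth} is that the projector $\Pi_c$ onto correct outputs of a raw estimator acts as the identity on $\mathcal{H}_Q$, so $\dim\mathrm{Im}(\Pi_c)$ is far too large. I remove it by replacing $\mathcal{A}$ with $\mathcal{A}'=$ compute--copy--uncompute: run $\mathcal{A}$, coherently copy the $\lceil\log(2R/\epsilon)\rceil$-qubit output into a fresh register $O$, then run $\mathcal{A}^{-1}$. This at most doubles the number of queries to $2T$ and leaves the output distribution unchanged, while returning the query register to $\ket{\mathbf{0}}$, so the accepting states lie in $\ket{\mathbf{0}}_Q\otimes V$ with $V\subseteq\mathcal{H}_W\otimes\mathcal{H}_O$ and $\dim\mathrm{Im}(\Pi_c)$ no longer carries the $\dim\mathcal{H}_Q$ factor. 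The hypothesis $\tfrac12 n_Q>n_W+2\log(2R/\epsilon)+1$ is exactly what the bookkeeping needs to guarantee both $\dim\mathcal{H}_Q>2\dim\mathcal{H}_{W'}$ and $\dim\mathcal{H}_Q\ge 2\dim\mathrm{Im}(\Pi_c)$ for the enlarged working register $W'=W\cup O$.

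Feeding $\mathcal{A}'$ into \thm{low-depth} produces a circuit $U^{\mathrm{low}}$ making only two $2T$-parallel queries, whose success $\|(\Pi_c\otimes\bra{\mathbf{0}})U^{\mathrm{low}}\ket{\mathbf{0}}\|^2$ equals the worst-case success of $\mathcal{A}'$. The key point is that the effective state of \thm{low-depth} is oracle-independent --- it records only the queries' behavior on $\mathrm{span}\{\ket{\mathbf{0}},\ket{\psi_X}\}$, on which the reverse pass exactly inverts the forward pass --- so $\ket{\psi_{\mathrm{eff}}^{(2T)}}$ has $Q=\ket{\mathbf{0}}$ and carries in $O$ precisely the correct-output weight of $\mathcal{A}$, which is $\ge 2/3$. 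Moreover $U^{\mathrm{low}}$ queries each oracle only on this defining subspace, so I may instantiate every oracle by the standard $O_x$, making $U^{\mathrm{low}}$ a legitimate two-round $2T$-parallel counting algorithm. Applying \thm{k-parallel-lower} with $k=2T$ then forces $1=\Omega\bigl(\tfrac{p(1-p)}{(\epsilon/4\sigma)^2\,k}\bigr)$, i.e.\ $k=\Omega(\sigma^2/\epsilon^2)$ for $p=1/2$, whence $T=\Omega(\sigma^2/\epsilon^2)$.

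The main obstacle is the interface between the uncomputation and \thm{low-depth}: $\mathcal{A}'$ re-uses each oracle for its inverse query, whereas \thm{low-depth} builds its adversarial oracles freely and independently per time step. The argument must therefore lean on the oracle-independence of $\ket{\psi_{\mathrm{eff}}^{(2T)}}$ to certify that the projected reverse pass collapses $Q$ back to $\ket{\mathbf{0}}$ and transports the correct-output amplitude into $O$; this needs the first query to be a forward query, so that the last of the $2T$ directions is an inverse. Completing the exact dimension count that yields the stated qubit inequality, and verifying that every oracle use in $U^{\mathrm{low}}$ stays within the defining subspace (so that the standard-oracle bound \thm{k-parallel-lower}, not one tailored to the adversarial oracles, applies), are the remaining technical steps.
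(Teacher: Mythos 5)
Your overall skeleton matches the paper's proof: reduce to Bernoulli counting via Eq.~\eqn{flip-to-state}, shrink $\mathrm{Im}(\Pi_c)$ by uncomputing the query register, chain \thm{low-depth} with \thm{k-parallel-lower}, and use the stated qubit inequality for the dimension bookkeeping. But the central step fails as you describe it. You claim that ``compute--copy--uncompute'' at most doubles the queries, \emph{leaves the output distribution unchanged, and returns the query register to} $\ket{\mathbf{0}}$. It does not. Writing $U\ket{\mathbf{0}}=\sum_i\sqrt{p(i)}\ket{\phi_i}_{Q,W}\ket{i}_{W_1}$ for the coherent run of $\mathcal{A}$, copying $W_1$ into $O$ entangles $O$ with the rest, and applying $U^{\dagger}$ afterwards yields a state whose component on $\ket{\mathbf{0}}_{Q,W,W_1}$ is $\ket{\mathbf{0}}\sum_i p(i)\ket{i}_O$, of norm $\sqrt{\sum_i p(i)^2}$ --- the collision probability of the output distribution, not $1$. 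A perfectly valid $(\epsilon/2)$-accurate estimator can spread its output nearly uniformly over $N$ distinct values inside the accuracy window, making $\sum_i p(i)^2=O(1/N)$ arbitrarily small; then the restored component is negligible, and boosting it by amplitude amplification would cost $\Theta(\sqrt{N})$ repetitions of $\mathcal{A}$, destroying the $T'=O(T)$ accounting your argument needs.

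The paper closes exactly this hole with two ingredients you are missing. First, it \emph{rounds} the output to the $\epsilon$-net $\mathbb{R}_{\epsilon}=\{i\epsilon: i\in\Z\}$ before copying: since $\tilde\mu$ is $(\epsilon/2)$-accurate with probability $2/3$, the rounded value lands on one of just two net points $i^*\epsilon,(i^*+1)\epsilon$ with total mass $\ge 2/3$, forcing $\sum_i p(i)^2\ge p(i^*)^2+p(i^*+1)^2\ge 2/9=\Omega(1)$. Second, it applies $\Fix$ (\lem{fixsearch}) to the uncomputation unitary $V=(U^{\dagger}\otimes I)(I\otimes\CNOT_{W_1,W_2})(U\otimes I)$, which now needs only $O(1)$ calls precisely because the good amplitude is $\Omega(1)$; the net also caps $n_{W_1}=n_{W_2}=\log(2R/\epsilon)$, which is where the hypothesis $\frac12 n_Q>n_W+2\log(\frac{2R}{\epsilon})+1$ actually comes from (your proposal gestures at this inequality but cannot derive it without fixing the size of the copied register via the net). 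Incidentally, the ``interface'' worry you flag at the end --- that the reverse pass re-uses oracles --- is not a real obstacle in the paper's model: each query slot of $\mathcal{A}'$ (including those inside $\Fix$ and inside $U^{\dagger}$) is instantiated by a fresh quantum random variable encoding the same $X$, and \thm{low-depth} applies verbatim since all instantiations agree on the defining subspace; the accounting of correct outputs is done through the rank-2 projector $\Pi_c=\ket{\mathbf{0}}\bra{\mathbf{0}}_{Q,W,W_1}\otimes(\ket{i^*\epsilon}\bra{i^*\epsilon}+\ket{(i^*+1)\epsilon}\bra{(i^*+1)\epsilon})_{W_2}$, not through your informal claim that the effective state ``carries in $O$ precisely the correct-output weight of $\mathcal{A}$.''
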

\begin{proof}
 Let $\mathbb{R}_{\epsilon}=\{i\epsilon\mid i\in \mathbb{Z}\}$ be an $\epsilon$-net of $\mathbb{R}$. Denote the output of $\mathcal{A}$ be $\tilde{\mu}$, and let $\mu_{\epsilon}$ be the closest number to $\tilde{\mu}$ in $\mathbb{R}_{\epsilon}$. We can delay the measurement of $\mathcal{A}$ and compute $\mu_{\epsilon}$ in an additional register $W_1$ with $n_{W_1}=\log(\frac{2R}{\epsilon})$ coherently which gives a unitary $U$ such that \begin{align}
        U\ket{\mathbf{0}}_{Q,W,W_1}=\sum_{i\in {\mathbb{Z}}}\sqrt{p(i)}\ket{\phi_i}_{Q,W}\ket{i\epsilon}_{W_1}
    \end{align}
    for some distribution $p$ and unit states $\ket{\phi_i}$. Let the two closest number in $\mathbb{R}_{\epsilon}$ to the true mean $\mu$ be $i^{*}\epsilon$ and $(i^{*}+1)\epsilon$. Since $\tilde{\mu}$ is an $\epsilon/2$-additive approximation of $\mu$ with probability $2/3$, $\mu_{\epsilon}$ equals $i^{*}\epsilon$ or $(i^{*}+1)\epsilon$ with probability at least $2/3$. Therefore, $p(i^*)+p(i^*+1) \ge 2/3$ and hence \begin{align}
        p(i^*)^2+p(i^*+1)^2\ge\frac{2}{9}.
    \end{align}
    Appending another register $W_2$ with $n_{W_2}=n_{W_1}$ and using the same technique in \thm{mean-bounded}, we can uncompute the state in $Q,W,W_1$ by the following unitary. Let $V=(U^{\dagger}\otimes I)(I_{Q,W}\otimes \mathrm{CNOT}_{W_1,W_2})(U\otimes I)$, then we have \begin{align}
        V\ket{\mathbf{0}}_{Q,W,W_1,W_2}=\ket{\mathbf{0}}_{Q,W,W_1}\sum_{i\in \mathbb{Z}}p(i)\ket{i\epsilon}_{W_2}+\ket{\mathrm{garbage}}
    \end{align}
    for some unknown garbage state $\ket{\mathrm{garbage}}$ orthogonal to $\ket{\mathbf{0}}_{Q,W,W_1}$. 
    
    By \lem{fixsearch}, we can prepare $\ket{\mathbf{0}}_{Q,W,W_1}\sum_{i\in \mathbb{Z}}p(i)\ket{i\epsilon}_{W_2}$ with high probability using 
    \begin{align}
    O\left(\frac{1}{\sqrt{\sum_{i\in \mathbb{Z}}p(i)^2}}\right)=O\left(\frac{1}{\sqrt{p(i^*)^2+p(i^*+1)^2}}\right)=O(1)
    \end{align}
    calls to $V$, and then measuring the state in register $W_2$ gives an $\epsilon$-additive approximation of $\mu$ with probability at least $2/9$. Denote this algorithm by $\mathcal{A}'$. $\mathcal{A}'$ has query register $Q'$ with $n_{Q'}=n_Q$ and working register $W'$ with $n_{W'}=n_W+n_{W_1}+n_{W_2}=n_W+2\log(\frac{2R}{\epsilon})$. Assume that $\mathcal{A}$ uses $T$ queries, and then $\mathcal{A}'$ uses $T'=O(T)$ queries since $\mathcal{A}'$ calls $\mathcal{A}$ $O(1)$ times. 

    Let $\Pi_c=\ket{\mathbf{0}}\bra{\mathbf{0}}_{Q,W,W_1}\otimes(\ket{i^*\epsilon}\bra{i^*\epsilon}_{W_2}+\ket{(i^*+1)\epsilon}\bra{(i^*+1)\epsilon}_{W_2})$, which is a projection onto a 2-dimensional space with correct output, and let $\ket{\psi^{(T')}}$ be the final state of $\mathcal{A}'$ before measurements. Since $\dim \mathrm{Im}(\Pi_c) < \dim \mathcal{H}_{W'} <\frac{1}{2} \dim \mathcal{H}_{Q'}$, by \thm{low-depth}, there exists another quantum circuit $U^{\mathrm{low}}$ using two $T'$-parallel queries and a sequence of quantum random variables $(\mathcal{H}_Q,O_X^{(1)}),$ $\ldots,$ $ (\mathcal{H}_Q,O_X^{(T')})$ satisfying
    \begin{align}
        \|\bigl(\Pi_c\otimes \langle\mathbf{0}|_{Q_0,\ldots,Q_{T'-1}}\bigr)U^{\mathrm{low}}\ket{\mathbf{0}}_{W,Q_0,\ldots,Q_{T'}}\|^2=\|\Pi_c\ket{\psi^{(T')}}\|^2\approx \sqrt{\frac{p(i^*)^2+p(i^*+1)^2}{\sum_{i\in \mathbb{Z}}p(i)^2}}=\Omega(1)
        \end{align}  
    where $Q_0,\ldots,Q_{T'}$ are $T'+1$ registers with $n_{Q'}$ qubits. Therefore, by applying $U^{\mathrm{low}}$ and measuring the final state, we can estimate the mean of $X$ using two $T'$-parallel queries to any $O_X$ encoding $X$ with constant success probability. The construction of $U^{\mathrm{low}}$ in \thm{low-depth} is independent of $X$, so it can be applied to estimate the mean of any $X$ with bounded variance. We consider the case that $X$ is a Bernoulli random variable. By Eq.~\eqn{flip-to-state}, $O_X$ can be simulated by one query to $O_x$. Therefore, by \thm{k-parallel-lower}, $T'$ needs to be $\Omega\bigl(\frac{\mu(1-\mu)}{\epsilon^2}\bigr)=\Omega\bigl(\frac{\sigma^2}{\epsilon^2}\bigr)$ so that $\mathcal{A}'$ can estimate $\mu$ to within $\epsilon$ additive error using two $T'$-parallel queries to $O_X$. Since $T'=O(T)$, we have $T=\Omega(T')=\Omega\bigl(\frac{\sigma^2}{\epsilon^2}\bigr)$.
    \end{proof}

\subsection{Proof of \lem{counting-non-identical}}\label{append:counting-non-identical}
In this section, we give a detailed proof of \lem{counting-non-identical}.
\begin{lemma}[\lem{counting-non-identical}]
    Given a sequence of oracles $O_{x_1},\ldots,O_{x_T}$ encoding boolean strings $x_1,\ldots,x_T$ in $\{0,1\}^n$, suppose all strings have the same Hamming weight $w$ and the algorithm can query each oracle at most $m$ times in turn. For any $1\le k<n$ and $m=O(\sqrt{n})$, any quantum algorithm needs $\Omega(\frac{n}{m})$ queries in total to distinguish between $w=k$ and $w=k-1$ or $k+1$ with high probability.
\end{lemma}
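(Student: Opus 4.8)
The plan is to prove the lemma by the adversary/progress-function method, but to replace the generic per-step bound by a within-block Boyer-type analysis \cite{boyer1998tight} that exploits the fact that each oracle is fresh and is touched at most $m$ times. Without loss of generality I would take $k\le n/2$ and prove hardness of the easier promise problem of distinguishing $w=k$ from $w=k+1$ (any algorithm for the stated problem must in particular succeed when $w=k+1$, so this restriction is legitimate; the case $k>n/2$ is handled symmetrically by exchanging $0$ and $1$ and comparing with $k-1$). I would fix the coupled input distributions obtained by drawing each $x_i$ uniformly among weight-$k$ strings, choosing $j_i$ uniformly among the $n-k$ zero-coordinates of $x_i$, and setting $y_i=x_i+e_{j_i}$; then $\vec x=(x_1,\dots,x_T)$ is uniform over weight-$k$ tuples, $\vec y=(y_1,\dots,y_T)$ is uniform over weight-$(k+1)$ tuples, and the two tuples differ in exactly one coordinate per block. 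Writing $\ell_i\le m$ for the number of queries spent on $O_{x_i}$ (resp.\ $O_{y_i}$) and $Q=\sum_i\ell_i$ for the total count, I would track the progress $S_t=\E[\mathrm{Re}\langle\psi^{(t)}_{\vec x}\,|\,\psi^{(t)}_{\vec y}\rangle]$ over the coupling, with $S_0=1$, and aim to show $S_0-S_{\mathrm{final}}=O(mQ/n)$.

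The heart of the argument is a single-step bound. Since the pre- and post-query unitaries $U_t$ are identical for both inputs and the flip oracle is self-inverse, the change at a step lying $t'$-th inside block $i$ is $S_{t'-1}-S_{t'}=\E[\mathrm{Re}\langle\psi^{(t'-1)}_{\vec x}|(I-O_{x_i}O_{y_i})|\psi^{(t'-1)}_{\vec y}\rangle]$. Because $x_i$ and $y_i$ differ only at $j_i$, one has $I-O_{x_i}O_{y_i}=2\,\Pi_{j_i}\otimes|{-}\rangle\langle{-}|$, where $\Pi_{j_i}=|j_i\rangle\langle j_i|$ acts on the index register, so the step is at most $2\,\E_{j_i}[\,\|\Pi_{j_i}|\psi^{(t'-1)}_{\vec x}\rangle\|\cdot\|\Pi_{j_i}|\psi^{(t'-1)}_{\vec y}\rangle\|\,]$. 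The state $\psi_{\vec x}$ queries only the $x$-oracles and is therefore independent of $j_i$, giving $\E_{j_i}\|\Pi_{j_i}|\psi_{\vec x}\rangle\|^2\le 1/(n-k)$. For the $\psi_{\vec y}$ factor I would introduce the hybrid reference that agrees with $\vec y$ on blocks $1,\dots,i-1$ but uses $x_i$ in block $i$; its state throughout block $i$ is also independent of $j_i$, and since $\psi_{\vec y}$ and this reference differ only through the block-$i$ oracle, the Boyer bound yields $\|\Pi_{j_i}|\psi^{(t'-1)}_{\vec y}\rangle\|\le\|\Pi_{j_i}|\psi^{(t'-1)}_{\mathrm{ref}}\rangle\|+2\sum_{s<t'}\|\Pi_{j_i}|\psi^{(s-1)}_{\mathrm{ref}}\rangle\|$. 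Substituting and applying Cauchy--Schwarz over $j_i$ to each of the $O(t')$ resulting cross terms, each a product of two $j_i$-independent amplitudes and hence at most $1/(n-k)$ in expectation, gives the key estimate $S_{t'-1}-S_{t'}=O(t'/n)$.

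Summing this per-step estimate over a block gives total progress loss $\sum_{t'=1}^{\ell_i}O(t'/n)=O(\ell_i^2/n)$ in block $i$, and the repetition constraint $\ell_i\le m$ turns $\sum_i\ell_i^2$ into at most $m\sum_i\ell_i=mQ$, so $S_0-S_{\mathrm{final}}=O(mQ/n)$. On the other hand, an algorithm that distinguishes the two weights with constant advantage forces the averaged states $\rho_0=\E_{\vec x}|\psi_{\vec x}\rangle\langle\psi_{\vec x}|$ and $\rho_1=\E_{\vec y}|\psi_{\vec y}\rangle\langle\psi_{\vec y}|$ to have constant trace distance; since $\mathrm{TD}(\rho_0,\rho_1)\le\E\||\psi_{\vec x}\rangle-|\psi_{\vec y}\rangle\|\le\sqrt{2(S_0-S_{\mathrm{final}})}$ by joint convexity and Jensen, this yields $mQ/n=\Omega(1)$, i.e.\ $Q=\Omega(n/m)$, the claimed total-query lower bound. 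The hypothesis $m=O(\sqrt n)$ is precisely what keeps the within-block amplitudes $O(t'/\sqrt n)$ below $1$, so that the linear Boyer estimate stays in its valid regime.

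The main obstacle is the single-step bound $S_{t'-1}-S_{t'}=O(t'/n)$ itself: the generic adversary step is only $O(1/\sqrt n)$, and improving it to $O(t'/n)$ requires rewriting the $\psi_{\vec y}$ amplitude at $j_i$ entirely in terms of a reference state that is oblivious to $j_i$, which is what forces every term to average to $1/(n-k)$ instead of being as large as a constant. Making this reduction exact, in particular choosing the block-$i$ hybrid so that its history before block $i$ coincides with that of $\vec y$ while remaining independent of the flip position, and checking that the later blocks (not yet queried during block $i$) do not interfere, is the delicate part; the remaining steps are routine summation and standard distinguishability bounds.
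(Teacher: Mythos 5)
Your proposal is correct and is essentially the paper's own argument: the crux in both is the same within-block hybrid bound --- comparing the weight-$(k+1)$ run against a reference state that shares its history up to the current block but is evolved with the weight-$k$ string inside it, so that Boyer-type accumulation plus Cauchy--Schwarz over the flip position gives per-step progress $O(t'/n)$ and hence $O(m^2/n)$ per block, summing to the $\Omega(n/m)$ bound. The only difference is packaging: you average over a random coupling of inputs and conclude via the trace distance of the averaged states, whereas the paper sums the progress over all $n-k$ candidate flip positions and pigeonholes after each block to construct an explicit hard pair of oracle sequences --- an equivalent averaging step.
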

\begin{proof}
We construct two string sequences iteratively, which are hard to be distinguished by the algorithm. The Hamming weights of the strings are $k$ in one sequence and $k+1$ in the other sequence. 

Let $t$ be any multiple of $m$ so that the algorithm will query a new string at $t+1$. Let $|\psi_k^{(t)}\rangle$ be the state of the algorithm after querying the oracle $t$ times with all query strings have Hamming weight $k$. Similarly, let $|\psi_{k+1}^{(t)}\rangle$ be the state with all query strings having Hamming weight $k+1$.  These states are dependent on the strings that the algorithm queries prior to time $t$. However, since subsequent construction does not depend on the previous queries, we omit the subscripts indicating prior query strings for convenience.

Let $s\in\{0,1\}^n$ be any string with $|s|=k$ and $F_{s} = \{i\in [n]\mid s_i=0\}$. For any $i\in F_s$, let $s^{(i)}\in \{0,1\}^n$ be the same string as $s$ except for $s^{(i)}_i=1$ so $|s^{(i)}|=k+1$.

For any $l=0,\ldots,m$, let \begin{align}
    |\psi_k^{(t+l)}\rangle&=U_{t+l}O_s\cdots U_{t+1}O_s|\psi_k^{(t)}\rangle=\sum_{j\in[n]}\alpha_{j}^{(t+l)}|j\rangle|\phi_k^{(t+l)}\rangle,\\
    |\overline{\psi}_k^{(t+l)}\rangle&=U_{t+l}O_s\cdots U_{t+1}O_s|\psi_{k+1}^{(t)}\rangle=\sum_{j\in[n]}\overline{\alpha}_{j}^{(t+l)}|j\rangle|\overline{\phi}_k^{(t+l)}\rangle,
\end{align}
where the first register on the right side of the equation contains the first $\lceil \log_2 n\rceil$ qubits of the query register.

For any $i\in F_s$, let \begin{align}
|\psi_{k+1,i}^{(t+l)}\rangle & = U_{t+l}O_{s^{(i)}}\cdots U_{t+1}O_{s^{(i)}}|\psi_{k+1}^{(t)}\rangle=\sum_{j\in[n]}\beta_{i,j}^{(t+l)}|j\rangle|\phi_{k+1,i}^{(t+l)}\rangle.
\end{align}
We first prove that $|\overline{\psi}_k^{(t+l)}\rangle$ is an approximation of $|\psi_{k+1,i}^{(t+l)}\rangle$. Note that \begin{align}
    |\overline{\psi}_k^{(t)}\rangle = |\psi_{k+1}^{(t)}\rangle=|\psi_{k+1,i}^{(t)}\rangle
\end{align} and 
\begin{align}
    \||\overline{\psi}_k^{(t+l+1)}\rangle-|\psi_{k+1,i}^{(t+l+1)}\rangle\|_2 &=\|U_{t+l+1}O_s|\overline{\psi}_k^{(t+l)}\rangle-U_{t+l+1}O_{s^{(i)}}|\psi_{k+1,i}^{(t+l)}\rangle\|_2\\
    &=\|O_s|\overline{\psi}_k^{(t+l)}\rangle-O_{s^{(i)}}|\psi_{k+1,i}^{(t+l)}\rangle\|_2\\
    &=\|O_{s^{(i)}}O_s|\overline{\psi}_k^{(t+l)}\rangle-|\psi_{k+1,i}^{(t+l)}\rangle\|_2\\
    &\le \|O_{s^{(i)}}O_s|\overline{\psi}_k^{(t+l)}\rangle-|\overline{\psi}_k^{(t+l)}\rangle\|_2+\||\overline{\psi}_k^{(t+l)}\rangle-|\psi_{k+1,i}^{(t+l)}\rangle\|_2\\
    &=2|\overline{\alpha}_{i}^{(t+l)}|+\||\overline{\psi}_k^{(t+l)}\rangle-|\psi_{k+1,i}^{(t+l)}\rangle\|_2.
\end{align}
Therefore, by induction, we have \begin{align}
    \label{eqn:bar-approx}
    \||\overline{\psi}_k^{(t+l+1)}\rangle-|\psi_{k+1,i}^{(t+l+1)}\rangle\|_2\le 2\sum_{j=0}^{l}|\overline{\alpha}_{i}^{(t+j)}|.
\end{align}
Let \begin{align}
    S^{(l)} &= \sum_{i\in F_s}\langle \psi_{k}^{(t+l)}|\psi_{k+1,i}^{(t+l)}\rangle,
\end{align}
then the progress at $t+l+1$ satisfies \begin{align}
    S^{(l)}-S^{(l+1)}&=\sum_{i\in F_s}\langle \psi_{k}^{(t+l)}|\psi_{k+1,i}^{(t+l)}\rangle-\langle \psi_{k}^{(t+l+1)}|\psi_{k+1,i}^{(t+l+1)}\rangle\\
    &=\sum_{i\in F_s}\langle \psi_{k}^{(t+l)}|(I-O_sO_{s^{(i)}})|\psi_{k+1,i}^{(t+l)}\rangle\\
    &=2\sum_{i\in F_s}\langle\psi_{k}^{(t+l)}|\beta_{i,i}^{(t+l)}|i\rangle|\phi_{k+1,i}^{(t+l)}\rangle\\
    &=2\sum_{i\in F_s}(\alpha_{i}^{(t+l)})^{*}\beta_{i,i}^{(t+l)}\langle \phi_k^{(t+l)}|\phi_{k+1,i}^{(t+l)}\rangle,
\end{align}
which implies \begin{align}
    | S^{(l)}-S^{(l+1)}| &\le 2\sum_{i\in F_s}|\alpha_{i}^{(t+l)}||\beta_{i,i}^{(t+l)}|\\
    &\le  2 \sum_{i\in F_s}(|\alpha_{i}^{(t+l)}||\overline{\alpha}_{i}^{(t+l)}-\beta_{i,i}^{(t+l)}|+ |\alpha_{i}^{(t+l)}||\overline{\alpha}_{i}^{(t+l)}|)\\
    &\le  2 \sum_{i\in F_s}(|\alpha_{i}^{(t+l)}|\||\overline{\psi}_k^{(t+l)}\rangle-|\psi_{k+1,i}^{(t+l)}\rangle\|_2+ |\alpha_{i}^{(t+l)}||\overline{\alpha}_{i}^{(t+l)}|)\\
    &\le  4\sum_{i\in F_s}\sum_{j=0}^{l}|\alpha_{i}^{(t+l)}||\overline{\alpha}_{i}^{(t+j)}| && \text{(by Eq.~\eqn{bar-approx})}\\
    &\le 4\sum_{j=0}^l \sqrt{(\sum_{i\in F_s}|\alpha_{i}^{(t+l)}|^2 )(\sum_{i\in F_s}|\overline{\alpha}_{i}^{(t+j)}|^2)}\\
   & \le4(l+1).
\end{align}
Hence \begin{align}
    |S^{(0)}-S^{(m)}|\le 4\sum_{l=0}^{m-1}(l+1)\le 2m(m+1).
\end{align}
Since \begin{align}
    \Bigl|\sum_{i\in F_s}(\langle \psi_{k}^{(t)}|\psi_{k+1,i}^{(t)}\rangle-\langle \psi_{k}^{(t+m)}|\psi_{k+1,i}^{(t+m)}\rangle)\Bigr|=|S^{(0)}-S^{(m)}|\le 2m(m+1)
\end{align}
and $|F_s|=n-k$, there exists $i_0\in F_s$ such that \begin{align}
    \bigl|\langle \psi_{k}^{(t)}|\psi_{k+1,i_0}^{(t)}\rangle-\langle \psi_{k}^{(t+m)}|\psi_{k+1,i_0}^{(t+m)}\rangle\bigr|\le \frac{2m(m+1)}{n-k}.
\end{align}

Now we can construct the string sequences which are hard to distinguish. For $t=0$, previous arguments guarantee that there exists $i_0$ such that \begin{align}
    \bigl|\langle \psi_{k}^{(0)}|\psi_{k+1,i_0}^{(0)}\rangle-\langle \psi_{k}^{(m)}|\psi_{k+1,i_0}^{(m)}\rangle\bigr|\le \frac{2m(m+1)}{n-k}.
\end{align} Since $\langle \psi_{k}^{(0)}|\psi_{k+1,i_0}^{(0)}\rangle=1$, we have \begin{align}
    \bigl|\langle \psi_{k}^{(m)}|\psi_{k+1,i_0^{(1)}}^{(m)}\rangle\bigr|\ge 1- \frac{2m(m+1)}{n-k}.
\end{align}
Then let $|\psi_{k+1,i_0}^{(m)}\rangle$ be $|\psi_{k+1}^{(m)}\rangle$, we can find $i_0^{(2)}$ such that \begin{align}
    \bigl|\langle \psi_{k}^{(2m)}|\psi_{k+1,i_0^{(2)}}^{(2m)}\rangle\bigr|\ge 1-\frac{4m(m+1)}{n-k}.
\end{align}
Let $T=\frac{n-k}{4(m+1)^2}$. Repeat this process, we can find $i_0^{(1)},\ldots,i_0^{(T)}$ such that \begin{align}
    \bigl|\langle \psi_{k}^{(mT)}|\psi_{k+1,i_0^{(T)}}^{(mT)}\rangle\bigr|\ge \frac{1}{2}.
\end{align}
    
Therefore, the algorithm needs $\Omega(mT)=\Omega(\frac{n-k}{m})$ queries to distinguish between $|x|=k$ and $|x|=k+1$ with constant success probability.

Similarly, we can prove that any algorithm needs $\Omega(\frac{k}{m})$ queries to distinguish between $|x|=k$ and $|x|=k-1$ with constant success probability. Hence the algorithm needs $\Omega(\max(\frac{k}{m},\frac{n-k}{m}))=\Omega(\frac{n}{m})$ queries to distinguish between $|x|=k$ and $|x|=k+1$ or $|x|=k-1$ with constant success probability. Note that the above analysis holds when $T\ge1$, so $m$ needs to be $O(\sqrt{n})$.
\end{proof}

\end{document}